\documentclass{article}

\usepackage[main, preprint]{neurips_2026}
\usepackage{wrapfig,lipsum,booktabs}
\usepackage[utf8]{inputenc}
\usepackage{natbib}
\usepackage[T1]{fontenc}
\usepackage{hyperref}
\usepackage{url}
\usepackage{booktabs}
\usepackage{amsfonts}
\usepackage{nicefrac}
\usepackage{microtype}
\usepackage{xcolor}
\usepackage{graphicx}
\usepackage{subcaption}

\usepackage{amsmath}
\usepackage{amssymb}
\usepackage{mathtools}
\usepackage{amsthm}
\usepackage{enumitem}
\usepackage{multirow}

\usepackage{algorithm}
\usepackage{algorithmic}
\usepackage[capitalize,noabbrev]{cleveref}

\theoremstyle{plain}
\newtheorem{theorem}{Theorem}[section]
\newtheorem{proposition}[theorem]{Proposition}
\newtheorem{lemma}[theorem]{Lemma}
\newtheorem{corollary}[theorem]{Corollary}
\theoremstyle{definition}
\newtheorem{definition}[theorem]{Definition}

\theoremstyle{remark}
\newtheorem{remark}[theorem]{Remark}

\usepackage[disable,textsize=tiny]{todonotes}

\usepackage{xspace}
\newcommand{\cX}{\mathcal{X}}           % Token space
\newcommand{\cV}{\mathcal{V}}           % Vocabulary
\newcommand{\cA}{\mathcal{A}}           % Attack set
\newcommand{\cC}{\mathcal{C}}           % Computational cost
\newcommand{\cL}{\mathcal{L}}           % Loss function
\renewcommand{\cC}{\mathcal{C}}           % Computational cost
\renewcommand{\cL}{\mathcal{L}}           % Loss function
\newcommand{\cS}{\mathcal{S}}           % Sensitive token set

\newcommand{\Ent}[1]{H\left(#1\right)}                    % Entropy
\newcommand{\MI}[2]{I\left(#1; #2\right)}                 % Mutual information
\newcommand{\KL}[2]{\text{KL}\left(#1 \| #2\right)}       % KL divergence
\newcommand{\Hcond}[2]{H\left(#1 | #2\right)}             % Conditional entropy

\newcommand{\E}{\mathbb{E}}              % Expectation
\newcommand{\Prob}{\mathbb{P}}           % Probability
\newcommand{\R}{\mathbb{R}}              % Real numbers
\newcommand{\Emb}{E}                     % Embedding layer
\newcommand{\LN}{\operatorname{LN}} 

\newcommand{\PIA}{PIA\xspace}             % Prompt Inversion Attack
\crefname{theorem}{Theorem}{Theorems}
\crefname{lemma}{Lemma}{Lemmas}
\crefname{corollary}{Corollary}{Corollaries}
\crefname{definition}{Definition}{Definitions}
\crefname{algorithm}{Algorithm}{Algorithms}
\crefname{figure}{Figure}{Figures}
\crefname{table}{Table}{Tables}
\crefname{equation}{Eq.}{Eqs.}
\crefname{section}{Section}{Sections}

\newcommand{\norm}[1]{\left\| #1 \right\|}
\newcommand{\abs}[1]{\left| #1 \right|}

\newcommand{\Acc}{\text{Acc}}
\newcommand{\RETURN}{\STATE \textbf{return}~}

\renewcommand{\Acc}{\text{Acc}}

\usepackage{xcolor}
\title{Defense Against Prompt Inversion Attacks: An Information-Theoretic
       Approach for LLM Collaborative Inference}

\author{
  Sayedeh Leila Noorbakhsh \quad Hossein Khalili \quad Nader Sehatbakhsh \\
  University of California, Los Angeles \\
  \texttt{\{lnoorbakhsh, hkhalili, nsehat\}@ucla.edu}
}

\begin{document}

\maketitle

%------------------------------------------------------------------
% ABSTRACT
%------------------------------------------------------------------
% sections/abstract.tex
% Abstract of the paper

\begin{abstract}
% We propose an information-theoretic defense framework against prompt inversion attacks in Large Language Model (LLM) collaborative inference systems. In such systems, resource-constrained IoT devices offload computation to cloud servers by transmitting intermediate activations, which exposes user prompts to reconstruction attacks. Our defense learns privacy-preserving representations that minimize information leakage about the input prompt while preserving utility and satisfying computational constraints. We derive theoretical guarantees on reconstruction error, establish inherent privacy-utility tradeoffs, and provide token-level accuracy bounds. Experimental results demonstrate our approach achieves superior privacy-utility-latency tradeoffs compared to existing defenses.

Collaborative edge-cloud inference enables resource-constrained devices to leverage large language models (LLMs) by offloading partial computation to cloud servers. However, transmitting intermediate activations exposes sensitive user prompts to \emph{prompt inversion attacks}, where an adversary reconstructs the original input from shared representations. Existing defenses rely largely on heuristic perturbations or empirical tuning, offering limited theoretical understanding of privacy leakage and its interaction with utility and latency constraints.
We propose an information-theoretic defense framework for prompt inversion in collaborative LLM inference. Our approach learns privacy-preserving representations by explicitly minimizing the mutual information between intermediate activations and the input prompt while maintaining task utility under computational constraints. We derive theoretical guarantees on prompt reconstruction error, characterize fundamental privacy–utility tradeoffs, and establish token-level accuracy bounds for downstream inference. We then propose a novel defense based on privacy adapters implemented via 
low-dimensional information bottlenecks. Extensive experiments across multiple settings demonstrate that our method achieves superior privacy–utility–latency tradeoffs compared to existing defenses (up to 35\% reduction in attack success), providing a principled foundation for private and efficient collaborative LLM inference.
\end{abstract}

%------------------------------------------------------------------
% INTRODUCTION
%------------------------------------------------------------------
% sections/introduction.tex
% Introduction (compressed for 9-page limit)

\section{Introduction}
\label{sec:introduction}

Large Language Models (LLMs) have demonstrated remarkable capabilities across natural language processing tasks, but their substantial computational and memory requirements pose significant challenges for deployment on resource-constrained platforms such as mobile devices, IoT, and embedded edge systems~\citep{sheng2023flexgen,qu2025mobile,liu2024mobilellm}. Collaborative inference~\citep{borzunov2023distributed,wang2024cloud,kang2017neurosurgeon,zhang2024edgeshard,miao2025towards} addresses this by executing initial layers on-device and offloading intermediate activations to a cloud server for completion---but this exposes substantial information about the input prompt. Adversaries can mount \emph{prompt inversion attacks} using learned classifiers~\citep{luo2025prompt} or constrained optimization~\citep{qu2025prompt} to reconstruct sensitive user inputs. ~\citet{dong2025depth} show that token identity persists through \emph{all} transformer layers, making the defense problem fundamentally challenging~\citep{lu2024position,chien2023enc2}.

Existing defenses against prompt inversion rely largely on heuristics: noise injection~\citep{mai2023split}, dimensionality reduction~\citep{dong2025depth}, ad hoc architectural modifications~\citep{chen2024unveiling}, which can empirically reduce reconstruction accuracy but lack principled guarantees and offer little insight into the inherent privacy-utility tradeoff. We instead take an information-theoretic approach: we model privacy leakage as the mutual information between the prompt and transmitted activations, and formulate the defense as a minimax optimization in which the defender minimizes leakage while a learned adversary maximizes reconstruction, subject to a utility constraint on next-token prediction. We achieve this objective via \emph{privacy adapters}: lightweight information-bottleneck modules inserted between frozen transformer layers on the device side that compress activations through a low-dimensional subspace. Crucially, existing residual perturbation defenses provably cannot reduce mutual information; instead, our bottleneck design forces lossy compression and enables genuine privacy-utility tradeoffs.

A second insight motivates our evaluation: privacy protection in collaborative LLM inference need not be uniform. We introduce a \emph{sensitive token analysis} that separately evaluates the attacker's recovery of domain-specific tokens (medical terms, legal entities, identifiers) versus common structural tokens (stop words, punctuation). Privacy adapters disproportionately protect sensitive tokens, reducing their recovery by up to 35 percentage points relative to common tokens, while preserving structural information for downstream processing. The practical implication is significant: an attacker who recovers ``What information do we have for [PROTECTED] with [PROTECTED]?'' learns far less than one who recovers the full query.

While prior information-theoretic privacy work has targeted \emph{classification} settings~\citep{jaiswal2020invariant, osia2020hybrid, singh2021disco, singh2023posthoc, noorbakhsh2024inf2guard, zhang2025learning}, to the best of our knowledge, this is the first information-theoretic privacy defense for collaborative LLM inference under prompt inversion attacks. A key challenge in our setting is \textit{token identity} — the very information we aim to protect — is also essential to the generative task itself. Our framework navigates this tension through our novel privacy adapters controlled by two complementary factors: the tradeoff parameter $\lambda$ and the bottleneck dimension $r$. 

We validate the approach on LLaMA-2-7B/13B and Mistral-7B across three domains (medical, legal, airline reviews), against both classification-based~\citep{luo2025prompt} and optimization-based~\citep{qu2025prompt} attacks. Our method reduces the attack success rate by more than 30\% while incurring only $<$9\% latency overhead. Figure~\ref{fig:overview} shows the system.

\paragraph{Contributions.}
\begin{itemize}[itemsep=0em,topsep=2pt]
    \item An information-theoretic minimax framework for prompt inversion defense, with a learned adversary that directly estimates mutual information via cross-entropy, closing the theory-implementation gap left by prior surrogate-loss approaches.
    \item We present \emph{Privacy adapters}: information-bottleneck modules that compress activations through a low-dimensional subspace, supported by theoretical bounds on reconstruction error as a function of bottleneck dimension and a formal proof that injective residual-style defenses (covering LoRA-style adapters and learned perturbations) cannot reduce MI.
    \item A \emph{sensitive token analysis} methodology that distinguishes domain-specific from structural token recovery, providing a more meaningful privacy measure than uniform token accuracy.
    \item Comprehensive experiments across three LLMs, three domains, and three attack strategies, showing 15--35\% reductions in sensitive token recovery with $<$9\% latency overhead.
\end{itemize}

\begin{figure}
    \centering
    \includegraphics[width=.9\linewidth]{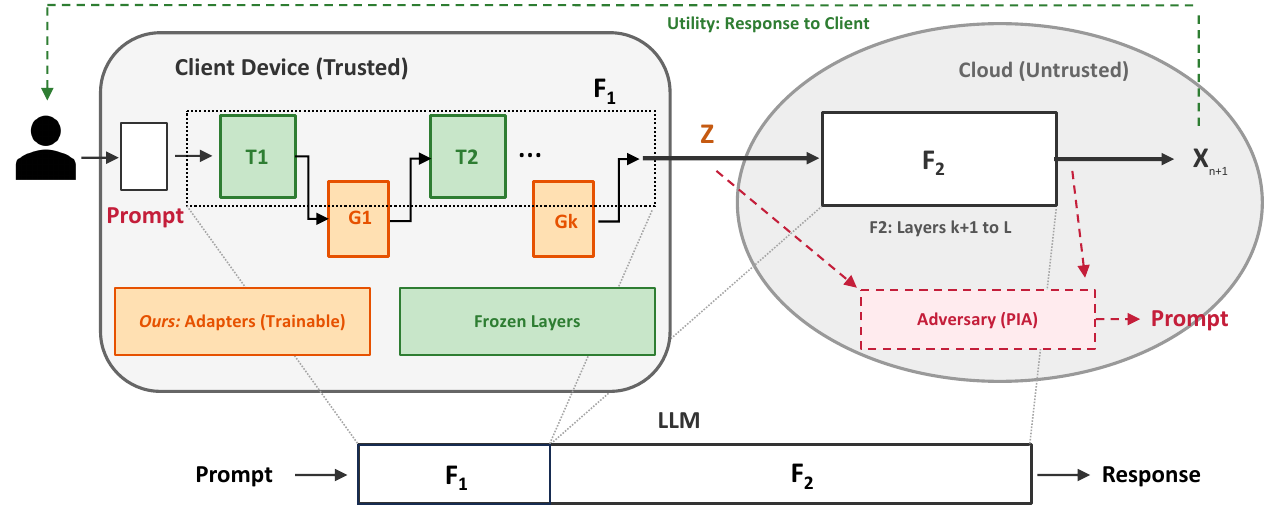}
    \caption{Privacy-adapter--based collaborative inference. Lightweight trainable adapters are inserted between frozen transformer layers on the client to suppress prompt leakage before offloading intermediate activations ($Z$) to the cloud. The adapters are optimized via a minimax objective balancing privacy protection and downstream utility under resource constraints.}
    \label{fig:overview}
\end{figure}

%------------------------------------------------------------------
% PROBLEM SETUP
%------------------------------------------------------------------
% sections/problem_setup.tex
% Problem Setup (compressed for 9-page limit)

\section{Problem Setup}
\label{sec:problem_setup}

\subsection{System Model}
\label{subsec:system_model}

We consider a two-party collaborative inference framework~\citep{qu2025mobile} with a trusted resource-constrained \emph{client} and an untrusted \emph{cloud server}. The client executes the embedding layer $\Emb$ and the first $k$ transformer layers $\{T_1, \ldots, T_k\}$, augmented with privacy adapters $\{G_1, \ldots, G_k\}$ that transform activations before transmission. The cloud executes the remaining layers $\{T_{k+1}, \ldots, T_L\}$ followed by the language modeling head $W_{\text{LM}} \in \R^{|\cV| \times d}$ (the overview of this system is shown in Figure~\ref{fig:overview}). Notation is summarized in Appendix~\ref{app:notation} (Table~\ref{tab:notation}).

\paragraph{Inference workflow.}
Given an input prompt $x = [x_1, \ldots, x_n] \in \cX^n$, the client interleaves transformer layers with privacy adapters:
\begin{equation}
\label{eq:interleaved}
    \mathbf{h}^{(0)} = \Emb(x), \qquad \mathbf{h}^{(i)} = G_i(T_i(\mathbf{h}^{(i-1)})) \quad \text{for } i = 1, \ldots, k,
\end{equation}
where $G_i: \R^{n \times d} \to \R^{n \times d}$ is an information-bottleneck module compressing activations through a low-dimensional subspace of dimension $r \ll d$ (full architecture in Section~\ref{sec:method}). Crucially, $G_i$ applies \emph{no residual connection}---the output replaces, rather than perturbs, the layer activation.This is information-theoretically necessary: for deterministic additive perturbations $\boldsymbol{\delta}$ in the typical regime of LoRA-style adapters and learned residuals (i.e., satisfying $L<1$ Lipschitz), $I(\mathbf{h} + \boldsymbol{\delta}(\mathbf{h}); x) = I(\mathbf{h}; x)$ since the mapping $\mathbf{h} \mapsto \mathbf{h}+\boldsymbol{\delta}(\mathbf{h})$ is injective (formalized in Theorem~\ref{thm:residual}). The bottleneck forces lossy compression, enabling genuine MI reduction. The protected activation $A = \mathbf{h}^{(k)} \in \R^{n \times d}$ is transmitted to the cloud, which produces $P(y \mid A) = \text{softmax}(W_{\text{LM}} \cdot T_L(\cdots T_{k+1}(A)))$.

\subsection{Threat Model}
\label{subsec:threat_model}

\paragraph{Adversary.}
We assume an honest-but-curious cloud server attempting to reconstruct the user's prompt from the protected activation $A$ (using privacy adapters discussed above). Such prompts may contain sensitive content---medical queries, legal proceedings, personal communications, proprietary documents. The adversary operates under the strongest white-box setting: full access to $A$, all cloud parameters $\{T_{k+1}, \ldots, T_L, W_{\text{LM}}\}$, and complete knowledge of the privacy adapter weights $\{G_1, \ldots, G_k\}$. The adversary may train auxiliary reconstruction models, including state-of-the-art classification-based~\citep{luo2025prompt} and optimization-based~\citep{qu2025prompt} prompt inversion attacks, to minimize token-level reconstruction error.

\paragraph{Defense goal.}
We seek privacy adapters $\{G_1, \ldots, G_k\}$ that simultaneously \emph{(i)} minimize information leakage $I(A; x)$, formalized as a minimax game with a learned adversary $q_\Psi$ (Section~\ref{sec:method}); \emph{(ii)} preserve next-token prediction utility, measured by perplexity; and \emph{(iii)} respect client-side latency and parameter budgets ($\ll 1\%$ of model parameters).

%------------------------------------------------------------------
% METHOD: DEFENSE FORMULATION
%------------------------------------------------------------------
\section{Method: Privacy-Preserving Collaborative Inference}
\label{sec:method}

%We formulate the defense as a minimax optimization that jointly addresses three goals: (i) minimizing privacy leakage from transmitted activations, (ii) preserving downstream language modeling utility, and (iii) respecting on-device computational constraints. We then realize this objective through lightweight \emph{privacy adapters} based on an information bottleneck.

\subsection{Formalizing Defense Objectives}
\label{subsec:objective}

\paragraph{Privacy via mutual information minimization.} 
Let $A = (G_k \circ T_k \circ \cdots \circ G_1 \circ T_1)(\Emb(x))$ denote the protected activation transmitted to the cloud. We seek to minimize the mutual information between $A$ and the input prompt $x$:
\begin{equation}
\label{eq:privacy_objective}
\min_{\{G_i\}} \MI{A}{x}.
\end{equation}

Small MI implies that even an optimal adversary cannot reliably reconstruct $x$ from $A$~\citep{noorbakhsh2024inf2guard, arevalo2024task}. Since $\MI{A}{x}$ is intractable, we use the variational upper bound vCLUB~\citep{cheng2020club} 
(Lemma~\ref{lem:vclub}; statement and proof in Appendix~\ref{app:vb}) and parameterize the variational distribution as a per-position token classifier $q_\Psi(x|A) = \prod_{j=1}^n q_\Psi(x_j|A_j)$, consistent with the threat model of~\cite{luo2025prompt}. Tightening this bound is equivalent to training $\Psi$ to minimize a per-position cross-entropy loss (Appendix~\ref{app:tightening}), yielding the minimax formulation:
\begin{equation}
\label{eq:privacy_minimax}
\min_{\{G_i\}} \max_\Psi \; \E_{p(x)}\!\left[\sum_{j=1}^{n} \log q_\Psi(x_j \mid A_j)\right].
\end{equation}

% Unlike cosine-similarity surrogates, this learned-adversary formulation requires no distributional assumptions on $A$, has unambiguous gradient direction, and adapts to the current adapter state during training. The per-position factorization is conservative: it yields a looser MI bound than a sequence-level adversary, so any reduction in our bound also reduces the true MI.

Unlike cosine-similarity surrogates, this learned-adversary formulation requires no distributional assumptions on $A$, has unambiguous gradient direction, and adapts to the current adapter state during training. By the chain rule of mutual information (Lemma~\ref{lem:per_position_bound}, Appendix~\ref{app:per_position}), $\MI{A}{x} \le \sum_{j=1}^n \MI{A}{x_j}$, so the per-position objective targets a valid upper bound on the joint MI; reducing $\sum_j \MI{A_j}{x_j}$ in turn directly tightens the success bound for per-position attackers~\citep{luo2025prompt, qu2025prompt} via Theorem~\ref{thm:token_acc} (see Remark~\ref{rem:surrogate}).

\paragraph{Utility via lower bound on $\MI{A}{y}$.}
Let $y = x_{n+1}$. Maximizing $\MI{A}{y}$ retains predictive information for next-token prediction. Using the variational lower bound (Lemma~\ref{lem:utility_lb}, Appendix~\ref{app:lem}) and noting that $p_{\text{LM}}$ is the fixed output distribution of the frozen cloud, the utility objective reduces to:
\begin{equation}
\label{eq:utility_objective}
\min_{\{G_i\}} \mathcal{L}_{\text{NTP}}(A) := \E_{p(x)}\!\left[-\log p_{\text{LM}}\!\left(x_{n+1} \mid T_L(\cdots T_{k+1}(A))\right)\right].
\end{equation}
Crucially, $\mathcal{L}_{\text{NTP}}$ is computed by forwarding $A$ through \emph{all} frozen cloud layers and the LM head, ensuring that adapters learn representations compatible with the downstream pipeline rather than merely mimicking original activations.

\paragraph{Latency constraint.}
For real-time deployment on resource-constrained devices, we impose $\cC(\{G_i\}_{i=1}^k) \leq C_{\max}$, where $\cC$ measures adapter-induced overhead. Table~\ref{tab:constraints} (Appendix~\ref{app:constraints}) summarizes the system constraints.

\paragraph{Combined objective.}
Combining privacy, utility, and latency yields:
\begin{equation}
\label{eq:combined_objective}
\boxed{
\min_{\{G_i\}} \max_\Psi \; \lambda \cdot \!\left(-\E_{p(x)}\!\left[\sum_{j=1}^{n}\log q_\Psi(x_j \mid A_j)\right]\right) + (1-\lambda) \cdot \mathcal{L}_{\text{NTP}}(A),
}
\end{equation}
where $\lambda \in [0,1]$ controls the privacy-utility tradeoff, subject to the latency constraint above.

\subsection{Privacy Adapter Architecture}
\label{subsec:adapter}

We realize $\{G_i\}$ as lightweight modules inserted after each device-side transformer layer. All pre-trained transformer layers remain frozen, avoiding expensive fine-tuning while preserving pre-trained knowledge.

\paragraph{Information bottleneck, no residual.}
Each adapter $G_i$ implements a compression-expansion through a low-dimensional subspace, \emph{without} a residual connection:
\begin{equation}
\label{eq:adapter_structure}
G_i(\mathbf{h}) = \LN\!\left(W_{\text{up}}^{(i)} \cdot \sigma\!\left(\LN\!\left(W_{\text{down}}^{(i)} \cdot \mathbf{h}\right)\right)\right),
\end{equation}
where $W_{\text{down}}^{(i)} \in \mathbb{R}^{r \times d}$, $W_{\text{up}}^{(i)} \in \mathbb{R}^{d \times r}$, $\sigma$ is GELU, and $r \ll d$ is the bottleneck dimension. Unlike LoRA-style adapters~\citep{hu2022lora} that compute $\mathbf{h} + G_i(\mathbf{h})$, our design \emph{replaces} the activation: $\mathbf{h}^{(i)} = G_i(T_i(\mathbf{h}^{(i-1)}))$.

The absence of a residual is essential. For deterministic perturbations $\boldsymbol{\delta}(\mathbf{h})$ in the operating regime of standard residual defenses (LoRA-style adapters with small initialization, learned additive perturbations with bounded weights), the mapping $\mathbf{h} \mapsto \mathbf{h} + \boldsymbol{\delta}(\mathbf{h})$ is injective, so $I(\mathbf{h} + \boldsymbol{\delta}(\mathbf{h}); x) = I(\mathbf{h}; x)$ by the data processing inequality (Theorem~\ref{thm:residual}). Thus, such residual-based defenses cannot reduce MI---a sufficiently powerful attacker recovers the original $\mathbf{h}$ and extracts token identity. We verify this empirically in Section~\ref{sec:experiments}: residual adapters yield $>$85\% token recovery under a fresh attacker. In contrast, the bottleneck $W_{\text{down}}^{(i)}: \mathbb{R}^d \to \mathbb{R}^r$ with $r \ll d$ enforces lossy compression, with $r$ as an explicit knob on information capacity.

\paragraph{Parameter efficiency.}
The adapters introduce only $2 \cdot k \cdot r \cdot d$ trainable parameters, typically $<0.3\%$ of the full model. For LLaMA-2-7B with $k=4, r=512$, this is $\sim$17M parameters out of 6.7B. Interleaving adapters across $k$ device layers enables layer-wise selective suppression of token identity.

\subsection{Optimization}
\label{subsec:optimization}

We optimize Eq.~\eqref{eq:combined_objective} via alternating updates (Algorithm~\ref{alg:training}, Appendix~\ref{alg:training}):
(i) \textbf{Adversary step} ($S$ steps): update $\Psi$ to minimize per-position cross-entropy on defended activations, tightening the MI upper bound;
(ii) \textbf{Adapter step} (1 step): with $\Psi$ frozen, update $\{G_i\}$ to minimize the combined privacy-utility loss.
All transformer layers $T_i$ remain frozen throughout.

%------------------------------------------------------------------
% THEORETICAL ANALYSIS
%------------------------------------------------------------------
% sections/theory.tex
% Theoretical analysis of the defense (compressed for 9-page limit)

\section{Theoretical Analysis}
\label{sec:theory}

We provide theoretical guarantees on three key questions: (i) does minimizing $I(A;x)$ provably limit reconstruction? (ii) why must the architecture avoid residual connections? and (iii) how does the bottleneck dimension control privacy at the token level? Full statements, additional results (bottleneck MI bound, utility–privacy tradeoff, token-level accuracy), and all proofs are deferred to Appendix~\ref{app:proofs}.

\paragraph{Privacy guarantee via Fano's inequality.}
Let $\cA_{\PIA} = \{\cA: \R^{n \times d} \to \cX^n\}$ denote the set of all prompt inversion attacks. Applying Fano's inequality~\citep{cover2006elements} to the Markov chain $x \to A \to \hat{x}$ yields a universal upper bound on adversary success.

\begin{theorem}[Privacy Leakage Bound]
\label{thm:privacy_bound}
For a random prompt $x$ with protected activation $A = (G_k \circ T_k \circ \cdots \circ G_1 \circ T_1)(\Emb(x))$ and any attack $\cA \in \cA_{\PIA}$,
\begin{equation}
\Prob\left[\cA(A) = x\right] \leq \frac{\MI{x}{A} + 1}{\log |\cX^n|}.
\end{equation}
\end{theorem}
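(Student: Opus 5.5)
The plan is to apply Fano's inequality to the Markov chain $x \to A \to \hat{x}$, where $\hat{x} = \cA(A)$, and then invert it to bound the success probability. First we make the setup precise. The prompt $x$ is a random variable on the finite set $\cX^n$, with $\cX$ the finite token vocabulary. The map $x \mapsto A$ is deterministic: it is the composition of $\Emb$, the frozen layers $T_i$, and the adapters $G_i$. The attack $\cA$ is an arbitrary (measurable) function of $A$. If $\cA$ is randomized, we condition on its internal randomness, which is independent of $x$ given $A$, so the Markov structure still holds. Hence $x \to A \to \hat{x}$ is a Markov chain, and the data processing inequality gives $\MI{x}{\hat{x}} \le \MI{x}{A}$.

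Next, let $P_e = \Prob[\hat{x} \neq x]$. Fano's inequality in its standard form from \citet{cover2006elements} reads
\begin{equation*}
\Hcond{x}{\hat{x}} \le H_b(P_e) + P_e \log(|\cX^n| - 1) \le 1 + P_e \log|\cX^n|,
\end{equation*}
where $H_b$ is the binary entropy. The second inequality uses $H_b \le 1$, with logarithms taken in base 2 so that the constant is exactly $1$. Rearranging and using $\Hcond{x}{\hat{x}} = \Ent{x} - \MI{x}{\hat{x}} \ge \Ent{x} - \MI{x}{A}$ gives
\begin{equation*}
P_e \ge \frac{\Ent{x} - \MI{x}{A} - 1}{\log|\cX^n|}.
\end{equation*}

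To obtain the stated form, we substitute $\Ent{x} = \log|\cX^n|$. This is exact when the prompt is uniform on $\cX^n$, which is the maximal-uncertainty prior. Then
\begin{equation*}
\Prob[\cA(A) = x] = 1 - P_e \le \frac{\MI{x}{A} + 1}{\log|\cX^n|}.
\end{equation*}

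The main obstacle is this last step. For a general prior the correct bound is $1 - P_e \le 1 - \frac{\Ent{x} - \MI{x}{A} - 1}{\log|\cX^n|}$, and this matches the theorem only when $\Ent{x} = \log|\cX^n|$. For non-uniform $x$ the stated inequality can fail: a deterministic prompt has $\MI{x}{A} = 0$ yet success probability $1$. I would therefore state the uniform-prior assumption explicitly, or equivalently present the general bound with $\Ent{x}$ in place of $\log|\cX^n|$ in the numerator and note that the two coincide under the uniform prior. I would also fix the base-2 logarithm convention so that the additive constant is $1$.
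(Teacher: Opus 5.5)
Your proposal is correct and matches the paper's proof step for step. Both use Fano's inequality with $h_b \le 1$ and $\log(|\cX^n|-1)\le\log|\cX^n|$, the identity $\Hcond{x}{A} = \Ent{x} - \MI{x}{A}$, and the substitution $\Ent{x}=\log|\cX^n|$; you pass through $\hat{x}$ plus data processing, which is equivalent. Your flag on the uniform prior is right: the paper's proof invokes it in its last step, but the theorem statement omits it, and your deterministic-prompt counterexample shows the statement fails without it.
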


\textit{Proof.} See Appendix~\ref{app:proof-tradeoff}. \hfill $\square$

This bound is universal: it holds for \emph{any} attack, including both classification-based~\citep{luo2025prompt} and optimization-based~\citep{qu2025prompt} approaches. Reducing the per-position MI $\sum_j I(A_j; x_j)$ via our minimax objective tightens the bound for per-position attackers (Theorem~\ref{thm:token_acc}) and, via Lemma~\ref{lem:per_position_bound}, also constrains the joint $I(A; x)$ through the chain-rule inequality.. A token-level analog (Theorem~\ref{thm:token_acc} in Appendix~\ref{app:proof-token-bound}) gives $\E[\Acc_{\text{token}}] \leq \frac{1}{n}\sum_i \frac{I(A;x_i)+1}{\log|\cV|}$, which provides the basis for our sensitive vs.\ common token analysis below.

\paragraph{Why residual connections cannot provide privacy.}
A central architectural choice in our design is the absence of a residual connection in $G_i$. The following result shows that this is not an empirical preference but a theoretical necessity: residual-style defenses in the regime of standard adapter constructions fail to reduce mutual information.

% \begin{theorem}[Residual MI Invariance]
% \label{thm:residual}
% Let $\boldsymbol{\delta}: \R^d \to \R^d$ be any deterministic function. For the residual defense $A = \mathbf{h} + \boldsymbol{\delta}(\mathbf{h})$ where $\mathbf{h} = (T_k \circ \cdots \circ T_1)(\Emb(x))$,
% \begin{equation}
% I(\mathbf{h} + \boldsymbol{\delta}(\mathbf{h}); x) = I(\mathbf{h}; x).
% \end{equation}
% \end{theorem}

% \textit{Proof sketch.} The map $\phi: \mathbf{h} \mapsto \mathbf{h} + \boldsymbol{\delta}(\mathbf{h})$ is deterministic and injective (under bounded gradient), so the data processing inequality holds with equality. Since $\phi(\mathbf{h})$ contains $\mathbf{h}$ as a recoverable component, an adversary loses no information. Full proof in Appendix~\ref{app:proofs}. \hfill $\square$

\begin{theorem}[Residual MI Invariance under Injectivity]
\label{thm:residual}
Let $\boldsymbol{\delta}: \R^d \to \R^d$ be deterministic and let $\phi(\mathbf{h}) = \mathbf{h} + \boldsymbol{\delta}(\mathbf{h})$. If $\phi$ is injective, then
\begin{equation}
I(\mathbf{h} + \boldsymbol{\delta}(\mathbf{h}); x) = I(\mathbf{h}; x).
\end{equation}
A sufficient condition is that $\boldsymbol{\delta}$ is $L$-Lipschitz with $L<1$, in which case $\phi$ is a bi-Lipschitz homeomorphism on $\R^d$. Proof and counterexamples in Appendix~\ref{app:proof-residual}.
\end{theorem}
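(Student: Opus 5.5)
The proof has two parts: the MI identity under injectivity, and the sufficient condition via the Lipschitz bound.

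\emph{MI identity.} Set $A' = \phi(\mathbf{h})$. Since $A'$ is a deterministic function of $\mathbf{h}$, we have the Markov chain $x \to \mathbf{h} \to A'$. The data processing inequality then gives $I(A';x) \le I(\mathbf{h};x)$.

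For the reverse inequality, we use injectivity. Because $\phi$ is injective, it has a left inverse $\phi^{-1}$ defined on $\phi(\R^d)$, and $\mathbf{h} = \phi^{-1}(A')$ almost surely. So $x \to A' \to \mathbf{h}$ is also a Markov chain, and the data processing inequality gives $I(\mathbf{h};x) \le I(A';x)$.

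One measure-theoretic point needs care: $\phi^{-1}$ must be measurable so that $\mathbf{h}$ is a bona fide function of $A'$. For a general injective Borel map $\phi$ on $\R^d$ this follows from the Lusin–Souslin theorem: the image $\phi(\R^d)$ is Borel and $\phi^{-1}$ is Borel measurable. In the Lipschitz case below it is immediate, since the inverse is continuous.

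Alternatively, one can bypass measurability entirely. Both $\mathbf{h}$ and $A'$ generate the same $\sigma$-algebra, so the conditional laws of $x$ given each coincide. Writing $I(\cdot\,;x)$ as an expected KL divergence between the conditional and marginal laws of $x$ then gives equality directly. The statement applies per position or to the whole $n\times d$ tensor identically, because $\phi$ acting row-wise is injective on $\R^{n\times d}$ whenever it is injective on $\R^d$.

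\emph{Sufficient condition.} Assume $\boldsymbol{\delta}$ is $L$-Lipschitz with $L<1$.

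Injectivity with a quantitative lower bound follows from the reverse triangle inequality:
\[
\norm{\phi(\mathbf{u})-\phi(\mathbf{v})} \ge \norm{\mathbf{u}-\mathbf{v}} - \norm{\boldsymbol{\delta}(\mathbf{u})-\boldsymbol{\delta}(\mathbf{v})} \ge (1-L)\norm{\mathbf{u}-\mathbf{v}}.
\]
The upper bound $\norm{\phi(\mathbf{u})-\phi(\mathbf{v})} \le (1+L)\norm{\mathbf{u}-\mathbf{v}}$ holds similarly, so $\phi$ is bi-Lipschitz onto its image.

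Surjectivity is the step that needs an actual argument. I would use the Banach fixed-point theorem. Fix $\mathbf{y}\in\R^d$ and define $\Phi_{\mathbf{y}}(\mathbf{h}) = \mathbf{y} - \boldsymbol{\delta}(\mathbf{h})$. This map is an $L$-contraction on the complete space $\R^d$, so it has a unique fixed point $\mathbf{h}^\ast$. The fixed-point equation says exactly that $\phi(\mathbf{h}^\ast) = \mathbf{y}$, so $\phi$ is onto. Combining this with the bi-Lipschitz estimates, $\phi$ is a homeomorphism of $\R^d$ whose inverse is $(1-L)^{-1}$-Lipschitz. In particular $\phi^{-1}$ is continuous, hence Borel, and the first part applies.

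\emph{Main obstacle.} The main difficulty is not mathematical depth but rigor in the first step. Mutual information between continuous $\mathbf{h}$ and discrete $x$ may be infinite or defined through densities, and one must make sure that equality holds in the general (KL/sup-over-partitions) definition. I would handle this with the $\sigma$-algebra argument above, which avoids densities altogether.

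For the appendix's counterexamples, I would simply exhibit $\boldsymbol{\delta}(\mathbf{h}) = -\mathbf{h}$, which is $1$-Lipschitz and gives $\phi \equiv 0$, so $I=0$. This shows the condition $L<1$ cannot be relaxed to $L\le 1$.
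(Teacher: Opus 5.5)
Your proposal is correct and follows the paper's proof essentially step for step: the data processing inequality in both directions via a left inverse of $\phi$, then the reverse triangle inequality for the $(1-L)$ lower Lipschitz bound, with the same $\boldsymbol{\delta}(\mathbf{h})=-\mathbf{h}$ counterexample. Beyond that, you add rigor the paper skips: Lusin--Souslin justifies the measurable left inverse that the paper only asserts, and your Banach fixed-point step proves $\phi$ is onto $\R^d$, which the ``homeomorphism on $\R^d$'' claim needs (the MI identity does not), whereas the paper only shows bi-Lipschitz onto the image; one caveat is that your $\sigma$-algebra alternative does not really bypass measurability, since $\sigma(\mathbf{h})\subseteq\sigma(\phi(\mathbf{h}))$ rests on the same Lusin--Souslin fact.
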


% This explains a key empirical finding (Section~\ref{sec:experiments}): residual adapters drop the \emph{training adversary} to $\sim$50\% accuracy via co-adaptation, but a \emph{fresh attacker} trained from scratch on the same protected activations recovers $>$85\% token accuracy. The information was never removed---only hidden from one specific adversary. Stochastic perturbations $\boldsymbol{\delta} \perp \mathbf{h}$ \emph{can} reduce MI but introduce non-determinism at inference time (unreproducible outputs); our bottleneck approach achieves MI reduction deterministically.

The injectivity hypothesis covers residual defenses deployed in practice---LoRA-style adapters, learned additive perturbations with bounded weights, and similar constructions satisfy $L<1$ Lipschitz and fall under Theorem~\ref{thm:residual}. This explains a key empirical finding (Section~\ref{sec:experiments}): residual adapters drop the \emph{training adversary} to $\sim$50\% accuracy via co-adaptation, but a \emph{fresh attacker} recovers $>$85\% token accuracy on the same protected activations---the information was never removed, only hidden from one adversary. Stochastic perturbations $\boldsymbol{\delta} \perp \mathbf{h}$ can reduce MI (Remark~\ref{rem:stochastic}) but introduce inference-time non-determinism. Our bottleneck $G_i$, non-injective by construction ($\R^d \to \R^r$, $r \ll d$), achieves MI reduction deterministically.

\paragraph{Bottleneck dimension controls leakage.}
The bottleneck dimension $r$ provides an explicit, differentiable knob on information capacity. By the data processing inequality applied to the chain $x_j \to \mathbf{h}^{(k)}_j \to \mathbf{z}^{(k)}_j \to A_j$, where $\mathbf{z}^{(k)}_j \in \R^r$ is the bottleneck representation,
\begin{equation}
I(A_j; x_j) \leq H(\mathbf{z}^{(k)}_j) \leq r \cdot \log(2B/\epsilon),
\label{eq:bottleneck_bound}
\end{equation}
for activations bounded in $[-B,B]^r$ at precision $\epsilon$. Combined with Theorem~\ref{thm:privacy_bound}, this gives $\Prob[\cA(A){=}x] \leq (n r \log(2B/\epsilon){+}1)/\log|\cX^n|$, formalizing that smaller $r$ yields stronger privacy. The full statement (Theorem~\ref{thm:bottleneck_mi}) and corollary appear in Appendix~\ref{app:proofs}.

\paragraph{Utility--privacy tradeoff and selective protection.}
Two further consequences follow from the same information-theoretic framework and are formalized in the appendix. First, the chain $x \to A \to y$ with $y = x_{n+1}$ implies $I(A; y) \leq I(A; x)$ (Theorem~\ref{thm:tradeoff}, Appendix~\ref{app:proof-tradeoff-thm}): utility cannot exceed privacy leakage, so $\lambda$ and $r$ together navigate a hard frontier rather than an arbitrary tuning surface. Second, applying the token-level bound separately to sensitive positions $\cS$ and common positions $\cC$ yields:

\begin{corollary}[Selective Privacy Protection]
\label{cor:sensitive}
$\E[\Acc_{\text{sensitive}}] \leq \frac{1}{|\cS|}\sum_{i \in \cS}\frac{I(A; x_i)+1}{\log|\cV|}$. If the bottleneck disproportionately reduces $I(A; x_i)$ for high-entropy (rare, sensitive) tokens relative to low-entropy (common) tokens, then $\E[\Acc_{\text{sensitive}}] \ll \E[\Acc_{\text{common}}]$.
\end{corollary}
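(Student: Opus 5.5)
The plan is to derive the corollary from the per-position Fano bound that underlies the token-level analog of Theorem~\ref{thm:privacy_bound} (Theorem~\ref{thm:token_acc}), restricted to the sensitive positions. First, define $\Acc_{\text{sensitive}} = \frac{1}{|\cS|}\sum_{i\in\cS}\mathbf{1}[\hat{x}_i = x_i]$, where $\hat{x} = \cA(A)$ is any attack output. By linearity of expectation,
\begin{equation*}
\E[\Acc_{\text{sensitive}}] = \frac{1}{|\cS|}\sum_{i\in\cS}\Prob[\hat{x}_i = x_i].
\end{equation*}
Next, for each fixed position $i$, consider the Markov chain $x_i \to A \to \hat{x}_i$. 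Here $\hat{x}_i$ is the $i$-th coordinate of a deterministic (or independently randomized) function of $A$, so the chain is valid. Fano's inequality on the alphabet $\cV$ gives $H(x_i\mid A) \le 1 + P_e^{(i)}\log|\cV|$, and hence
\begin{equation*}
\Prob[\hat{x}_i = x_i] \le \frac{I(A;x_i) + 1}{\log|\cV|}.
\end{equation*}
This step uses the additional assumption $H(x_i) = \log|\cV|$, i.e., a uniform marginal, exactly as in Theorem~\ref{thm:privacy_bound} and Theorem~\ref{thm:token_acc}. Alternatively, we can simply invoke the per-position statement inside the proof of Theorem~\ref{thm:token_acc}. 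Averaging over $i\in\cS$ then yields the first claim verbatim.

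For the second (qualitative) claim, we apply the identical argument to the common positions $\cC$, giving an upper bound $U_{\cC}$ on $\E[\Acc_{\text{common}}]$. The first claim gives the upper bound $U_{\cS}$ on $\E[\Acc_{\text{sensitive}}]$. The hypothesis that the bottleneck drives $I(A;x_i)$ down on $\cS$ relative to $\cC$ then makes $U_{\cS} \ll U_{\cC}$.

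The main obstacle is that upper bounds alone do not order the actual accuracies: $U_{\cS}\ll U_{\cC}$ says nothing about $\E[\Acc_{\text{common}}]$ being large. To close the gap, I would add a matching lower bound on common-token accuracy. The natural tool is the reverse direction: the MAP decoder satisfies $\Prob[\hat{x}_i=x_i] \ge 2^{-H(x_i\mid A)}$. This holds because $\E[\max_v p(v\mid A)] \ge \E[2^{-H(x_i\mid A=a)}] \ge 2^{-H(x_i\mid A)}$ by Jensen's inequality. So if common tokens retain high MI, or equivalently have low conditional entropy, the attacker recovers them well. With this lower bound, $\E[\Acc_{\text{common}}] \ge 2^{-H(x_i\mid A)}$ on $\cC$, while $\E[\Acc_{\text{sensitive}}]\le U_{\cS}$, and the separation holds under the stated hypothesis. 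I would state this second claim as an interpretive consequence conditional on that hypothesis, in line with how the corollary is phrased, rather than as an unconditional inequality.
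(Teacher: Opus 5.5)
Your argument for the displayed inequality is the paper's argument. The paper's whole proof is to restrict the per-position Fano bound from the proof of Theorem~\ref{thm:token_acc} to $i\in\cS$ and average, and it relies, as you correctly flag, on $H(x_i)=\log|\cV|$.

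You go further than the paper on the second sentence. The paper gives no argument there beyond restating the hypothesis. You are right that two upper bounds cannot order the accuracies. Your MAP converse $\Prob[\hat x_i=x_i]\ge 2^{-H(x_i\mid A)}$ is correct: pointwise $\max_v p(v\mid a)\ge 2^{-H(x_i\mid A=a)}$, then Jensen for the convex map $t\mapsto 2^{-t}$. It supplies the direction the paper lacks.

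If you write it up, be precise about three things.

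\emph{The attacker.} The converse holds only for the Bayes-optimal per-position decoder. So the separation is established for that attacker, not for an arbitrary $\cA$. The sensitive-token bound holds for every attack, so it applies to this one too.

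\emph{The hypothesis.} A qualitative \emph{disproportionate} MI reduction does not yield $\ll$. You need a quantitative condition such as $\frac{1}{|\cS|}\sum_{i\in\cS}\frac{I(A;x_i)+1}{\log|\cV|}\ll\frac{1}{|\cC|}\sum_{i\in\cC}2^{-H(x_i\mid A)}$.

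\emph{The bound $U_{\cC}$.} Drop it. It needs uniform marginals on $\cC$, which contradicts the corollary's description of common tokens as low-entropy. The same issue means your \emph{high MI, or equivalently low conditional entropy} holds only when the marginals are fixed. Your converse needs no such assumption: since $H(x_i\mid A)\le H(x_i)$, it gives MAP accuracy of at least $2^{-H(x_i)}$ on common tokens whatever the bottleneck does. That prior predictability is arguably the more honest explanation of why common tokens survive.
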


The intuition: low-entropy tokens (``the,'' ``is'') occupy a small region of activation space and are cheaply representable through the bottleneck; high-entropy tokens (unique words such as ``Gangrene,'' ``SSRI'', etc.) require many more bits to distinguish. When $r$ is limited, the adapter preferentially preserves common tokens (which also support syntactic coherence) and discards rare sensitive ones---the empirical pattern we report in Section~\ref{sec:experiments}.

%------------------------------------------------------------------
% IMPLEMENTATION
%------------------------------------------------------------------
% sections/implementation.tex
% Implementation details (compressed for 9-page limit)

\section{Implementation}
\label{sec:implementation}

We implement the framework in PyTorch~2.0 with HuggingFace Transformers (v4.35), using FP16 for the frozen backbone and FP32 for adapters and the adversary. We evaluate three LLMs: Mistral-7B-v0.1~\citep{jiang2023mistral}, LLaMA-2-7B~\citep{touvron2023llama}, and LLaMA-2-13B~\citep{touvron2023llama}. Code will be released upon publication. Full hyperparameters, the training algorithm, and complexity analysis appear in Appendix~\ref{app:impl}.

\paragraph{Adapter and adversary.}
Each adapter $G_i$ follows the bottleneck design of Eq.~\eqref{eq:adapter_structure}, with an output LayerNorm to stabilize activation magnitudes for downstream frozen layers. We fix $k{=}4$ adapter layers and sweep $r \in \{256, 512, 1024\}$ and $\lambda \in \{0.1, 0.5, 0.9\}$ as the two privacy-utility knobs. The adversary $q_\Psi$ is a 4-layer MLP with hidden dimensions $[d, 2d, 2d, |\cV|]$ and dropout $0.1$, performing per-position token classification. For LLaMA-2-7B, $q_\Psi$ has $\sim$181M parameters---an order of magnitude larger than the adapters ($\sim$17M)---so the adversary is never the bottleneck and any residual token information would be detected.

\paragraph{Training.}
Each step alternates (i) $S{=}3$ adversary updates on stop-gradient activations to tighten the MI upper bound, and (ii) one adapter update on the combined privacy-utility loss
\begin{equation}
\cL_{\text{total}} = \lambda \cdot \big(-\text{CE}(q_\Psi(A), x)\big) + (1-\lambda)\cdot \cL_{\text{NTP}}(A),
\label{eq:total_loss}
\end{equation}
where $\cL_{\text{NTP}}(A) = \text{CE}(T_L(\cdots T_{k+1}(A))_{<n}, x_{2:n})$ is computed by forwarding $A$ through the \emph{entire} frozen cloud pipeline. This ensures adapters learn representations compatible with downstream processing rather than merely matching original activations in some metric space. Multiple adversary steps per adapter step ($S{=}3$) follow standard adversarial training practice~\citep{goodfellow2014generative,arjovsky2017wasserstein}. Algorithm~\ref{alg:training} (Appendix~\ref{app:algorithm}) details the procedure.

\paragraph{Direct MI estimation.}
% The privacy loss directly estimates the variational MI upper bound (Lemma~\ref{lem:vclub}): the optimal adversary satisfies $q_{\Psi^*}(x|A) = p(x|A)$, at which point the bound is tight, and the adapter then maximizes the cross-entropy to push the bound down. Symmetrically, $\cL_{\text{NTP}}$ instantiates the utility lower bound from Lemma~\ref{lem:utility_lb}. Compared to cosine-similarity surrogates~\citep{qu2025prompt}, this avoids the Gaussian-activation assumption ($I = -\tfrac{1}{2}\log(1-\rho^2)$), eliminates sign ambiguity in the privacy gradient, and adapts continuously to the current adapter state.

The privacy loss tightens a variational bound on the per-position MI $\sum_j \MI{A_j}{x_j}$, which by Theorem~\ref{thm:token_acc} controls the success of position-wise attackers~\citep{luo2025prompt,qu2025prompt}; the optimal per-position adversary satisfies $q_{\Psi^*}(x_j|A_j) = p(x_j|A_j)$, at which the bound is tight, and the adapter then maximizes cross-entropy to drive it down (see Remark~\ref{rem:surrogate}).

%------------------------------------------------------------------
% EXPERIMENTS
%------------------------------------------------------------------
% sections/experiments.tex
% Experimental Evaluation (compressed for 9-page limit)

\section{Experimental Evaluation}
\label{sec:experiments}

\subsection{Experimental Setup}
\label{subsec:setup}

\paragraph{Datasets, models, and hardware.}
We evaluate on three datasets containing sensitive textual content: \textbf{Medical} (clinical notes from MedAlpaca with conditions, drugs, and symptoms), \textbf{Skytrax} (airline reviews with travel details and flight numbers), and \textbf{Legal} (ECHR case documents with names and personal circumstances). We use three pre-trained LLMs of increasing scale, as described above, all split at $k{=}4$ with frozen cloud layers. Training uses two RTX~4090 GPUs; latency is measured on a Jetson Orin Nano (8~GB) representing a realistic edge client.

\paragraph{Attacks and metrics.}
We evaluate against three prompt inversion attacks.
\textbf{\cite{luo2025prompt}} train a 4-layer MLP token classifier ($\sim$181M parameters) on intermediate activations under a black-box threat model where the adversary has no knowledge of the model components or any deployed defense.
\textbf{\cite{qu2025prompt}} use a constrained-optimization approach that searches for continuous embeddings matching the target activation, then discretizes to tokens.
\textbf{Ours} adopts the same 4-layer MLP architecture and cross-entropy training objective as Luo et al., but operates under the white-box threat model defined in Section~\ref{subsec:threat_model}: the adversary has full knowledge of the privacy adapter weights $\{G_1, \ldots, G_k\}$ and trains directly on activations sampled from the deployed defended model, following Kerckhoffs's principle that defense evaluation should not assume secrecy of its parameters. This represents the strongest classification-based adversary realizable in our setting.
Crucially, all three attacks are trained \emph{from scratch} on protected activations (\emph{fresh attacker}), not the co-adapted training adversary, so we evaluate against the strongest possible adversary for each defense configuration. We report (i) overall token accuracy ($\downarrow$), (ii) sensitive token accuracy ($\downarrow$), (iii) common token accuracy, and (iv) perplexity ($\downarrow$ for utility). Common tokens are the 118 most frequent stop words, punctuation, and structural words; sensitive tokens are the remainder. The \emph{privacy gap} = common $-$ sensitive accuracy quantifies selective protection.
\subsection{Main Results: Privacy-Utility Tradeoff and Bottleneck Ablation}
\label{subsec:main_results}

Table~\ref{tab:dim_comparison_all_attacks} reports attack accuracy and perplexity across models, datasets, bottleneck dimensions $r \in \{256, 512, 1024\}$, and tradeoff weights $\lambda \in \{0.1, 0.5, 0.9\}$, evaluated against three attacks.

Three patterns emerge. \emph{(i)} Increasing $\lambda$ consistently reduces attack accuracy at the cost of higher perplexity, instantiating the tradeoff predicted by Theorem~\ref{thm:tradeoff}; intermediate values ($\lambda{=}0.5$) achieve the best balance. \emph{(ii)} Larger $r$ increases information capacity, yielding both higher attack accuracy and lower perplexity, validating Theorem~\ref{thm:bottleneck_mi}'s prediction that bottleneck dimension directly bounds leakage. \emph{(iii)} The defense generalizes across attack types: fresh classification and optimization attackers track our training adversary closely, with no attack able to exploit a configuration where the others fail. This rules out the failure mode of Theorem~\ref{thm:residual} (a fresh attacker recovering information that the co-adapted adversary missed).

\begin{table*}[t]
\centering
\caption{Attack accuracy (\%) / Perplexity across attacks, models, datasets, and bottleneck dimensions ($r$, with $k{=}4$). Three attacks evaluated against the \emph{same} protected activations: (a) Ours (training adversary, co-adapted via minimax); (b)~\cite{luo2025prompt} (classification-based, fresh attacker); (c)~\cite{qu2025prompt} (optimization-based). Larger $r$ increases information capacity (Theorem~\ref{thm:bottleneck_mi}). Perplexity depends only on the defense (not the attack) and is therefore identical across (a), (b), (c) for each $(r, \lambda)$ configuration. \textbf{Bold} columns ($\lambda{=}0.5$) indicate the best privacy-utility operating point.}
\label{tab:dim_comparison_all_attacks}
\resizebox{\textwidth}{!}{%
\begin{tabular}{ll|c|ccc|ccc|ccc}
\toprule
& & & \multicolumn{3}{c|}{$r = 256$} & \multicolumn{3}{c|}{$r = 512$} & \multicolumn{3}{c}{$r = 1024$} \\
\cmidrule(lr){4-6} \cmidrule(lr){7-9} \cmidrule(lr){10-12}
\textbf{Model} & \textbf{Dataset} & \textbf{No Def.}
& $\lambda{=}0.1$ & $\boldsymbol{\lambda{=}0.5}$ & $\lambda{=}0.9$
& $\lambda{=}0.1$ & $\boldsymbol{\lambda{=}0.5}$ & $\lambda{=}0.9$
& $\lambda{=}0.1$ & $\boldsymbol{\lambda{=}0.5}$ & $\lambda{=}0.9$ \\
\midrule
\multicolumn{12}{c}{\textit{(a) Ours (training adversary, co-adapted via minimax)}} \\
\midrule
\multirow{3}{*}{Mistral-7B}
 & Skytrax & 89.0/44.7 & 86.4/45.6 & \textbf{53.3/52.1} & 42.1/83.0 & 88.1/44.1 & \textbf{58.1/52.5} & 50.0/79.8 & 88.8/44.0 & \textbf{60.0/52.0} & 56.0/70.1 \\
 & Medical & 85.8/29.4 & 84.6/29.4 & \textbf{60.5/30.1} & 36.7/45.2 & 85.1/29.4 & \textbf{65.5/29.9} & 36.7/45.0 & 85.8/29.0 & \textbf{50.1/29.0} & 50.0/44.4 \\
 & Legal   & 95.4/11.4 & 95.0/12.2 & \textbf{48.6/24.3} & 21.7/24.6 & 95.0/12.0 & \textbf{48.6/24.1} & 22.7/23.4 & 95.0/11.8 & \textbf{48.6/23.6} & 23.9/23.0 \\
\midrule
\multirow{3}{*}{LLaMA-2-7B}
 & Skytrax & 91.1/33.7 & 90.0/38.0 & \textbf{50.8/90.1} & 36.1/95.9 & 90.1/36.8 & \textbf{55.1/86.7} & 43.3/93.2 & 91.9/33.7 & \textbf{56.4/84.3} & 49.7/90.6 \\
 & Medical & 87.3/25.9 & 81.2/34.1 & \textbf{51.0/60.1} & 12.5/82.6 & 81.3/29.4 & \textbf{53.8/54.7} & 26.5/80.6 & 83.2/28.1 & \textbf{57.9/53.3} & 29.5/78.9 \\
 & Legal   & 95.5/7.8  & 88.4/9.4  & \textbf{51.2/11.9} & 32.0/30.9 & 91.0/9.4  & \textbf{54.7/10.7} & 33.1/26.1 & 95.5/9.4  & \textbf{58.4/10.0} & 33.7/25.9 \\
\midrule
\multirow{3}{*}{LLaMA-2-13B}
 & Skytrax & 95.4/16.9 & 89.0/20.1 & \textbf{39.1/34.0} & 24.1/140.0 & 92.1/17.7 & \textbf{49.3/32.1} & 31.8/132.0 & 92.3/17.9 & \textbf{50.0/30.9} & 32.0/130.0 \\
 & Medical & 96.2/25.1 & 96.0/30.0 & \textbf{38.9/40.0} & 16.1/60.0  & 95.9/25.9 & \textbf{42.5/39.9} & 22.8/60.7  & 96.1/25.2 & \textbf{42.0/39.1} & 22.0/59.9 \\
 & Legal   & 95.5/9.4  & 94.9/19.0 & \textbf{39.5/24.0} & 15.6/30.0  & 95.5/17.9 & \textbf{44.0/22.9} & 19.2/30.0  & 96.0/18.1 & \textbf{45.7/22.0} & 24.1/29.5 \\
\midrule
\multicolumn{12}{c}{\textit{(b)~\cite{luo2025prompt} (classification-based, fresh attacker)}} \\
\midrule
\multirow{3}{*}{Mistral-7B}
 & Skytrax & 90.2/44.7 & 88.1/45.6 & \textbf{56.5/52.1} & 35.4/83.0 & 89.6/44.1 & \textbf{60.0/52.5} & 42.8/79.8 & 90.0/44.0 & \textbf{60.2/52.0} & 48.6/70.1 \\
 & Medical & 87.1/29.4 & 86.0/29.4 & \textbf{60.8/30.1} & 39.5/45.2 & 86.7/29.4 & \textbf{67.8/29.9} & 39.5/45.0 & 87.1/29.0 & \textbf{52.6/29.0} & 52.4/44.4 \\
 & Legal   & 96.5/11.4 & 89.1/12.2 & \textbf{51.4/24.3} & 24.0/24.6 & 90.2/12.0 & \textbf{51.4/24.1} & 24.0/23.4 & 89.1/11.8 & \textbf{51.4/23.6} & 24.0/23.0 \\
\midrule
\multirow{3}{*}{LLaMA-2-7B}
 & Skytrax & 92.4/33.7 & 91.5/38.0 & \textbf{63.4/90.1} & 38.9/95.9 & 90.0/36.8 & \textbf{62.5/86.7} & 35.9/93.2 & 90.0/33.7 & \textbf{58.7/84.3} & 32.4/90.6 \\
 & Medical & 88.7/25.9 & 83.5/34.1 & \textbf{53.7/60.1} & 34.8/82.6 & 83.6/29.4 & \textbf{54.9/54.7} & 44.8/80.6 & 84.4/28.1 & \textbf{55.6/53.3} & 46.9/78.9 \\
 & Legal   & 96.7/7.8  & 92.9/9.4  & \textbf{53.8/11.9} & 34.5/30.9 & 94.7/9.4  & \textbf{54.0/10.7} & 42.6/26.1 & 96.0/9.4  & \textbf{55.7/10.0} & 43.0/25.9 \\
\midrule
\multirow{3}{*}{LLaMA-2-13B}
 & Skytrax & 96.5/16.9 & 90.4/20.1 & \textbf{41.6/34.0} & 26.5/140.0 & 93.5/17.7 & \textbf{51.7/32.1} & 34.1/132.0 & 93.7/17.9 & \textbf{52.5/30.9} & 34.4/130.0 \\
 & Medical & 97.3/25.1 & 93.0/30.0 & \textbf{41.4/40.0} & 18.4/60.0  & 95.2/25.9 & \textbf{45.0/39.9} & 25.2/60.7  & 97.1/25.2 & \textbf{45.4/39.1} & 24.3/59.9 \\
 & Legal   & 96.7/9.4  & 92.9/19.0 & \textbf{42.3/24.0} & 17.9/30.0  & 93.7/17.9 & \textbf{46.5/22.9} & 21.7/30.0  & 94.4/18.1 & \textbf{48.2/22.0} & 26.6/29.5 \\
\midrule
\multicolumn{12}{c}{\textit{(c)~\cite{qu2025prompt} (optimization-based)}} \\
\midrule
\multirow{3}{*}{Mistral-7B}
 & Skytrax & 88.4/44.7 & 84.7/45.6 & \textbf{49.8/52.1} & 33.2/83.0 & 85.3/44.1 & \textbf{54.5/52.5} & 40.6/79.8 & 85.9/44.0 & \textbf{56.4/52.0} & 45.0/70.1 \\
 & Medical & 84.6/29.4 & 82.1/29.4 & \textbf{56.4/30.1} & 32.5/45.2 & 82.7/29.4 & \textbf{61.0/29.9} & 33.8/45.0 & 84.0/29.0 & \textbf{47.6/29.0} & 47.4/44.4 \\
 & Legal   & 94.1/11.4 & 90.5/12.2 & \textbf{43.5/24.3} & 17.8/24.6 & 91.6/12.0 & \textbf{45.0/24.1} & 20.4/23.4 & 92.4/11.8 & \textbf{45.8/23.6} & 19.5/23.0 \\
\midrule
\multirow{3}{*}{LLaMA-2-7B}
 & Skytrax & 89.5/33.7 & 86.6/38.0 & \textbf{47.4/90.1} & 33.0/95.9 & 87.5/36.8 & \textbf{51.6/86.7} & 39.8/93.2 & 88.0/33.7 & \textbf{53.2/84.3} & 41.0/90.6 \\
 & Medical & 85.8/25.9 & 77.4/34.1 & \textbf{47.5/60.1} & 11.5/82.6 & 77.9/29.4 & \textbf{49.1/54.7} & 23.4/80.6 & 78.5/28.1 & \textbf{53.0/53.3} & 26.4/78.9 \\
 & Legal   & 94.0/7.8  & 84.5/9.4  & \textbf{46.0/11.9} & 26.6/30.9 & 87.1/9.4  & \textbf{49.4/10.7} & 27.2/26.1 & 94.0/9.4  & \textbf{53.0/10.0} & 27.5/25.9 \\
\midrule
\multirow{3}{*}{LLaMA-2-13B}
 & Skytrax & 94.2/16.9 & 80.8/20.1 & \textbf{34.7/34.0} & 19.6/140.0 & 86.0/17.7 & \textbf{44.5/32.1} & 27.4/132.0 & 88.2/17.9 & \textbf{45.4/30.9} & 27.6/130.0 \\
 & Medical & 95.1/25.1 & 83.6/30.0 & \textbf{36.2/40.0} & 12.0/60.0  & 85.9/25.9 & \textbf{39.6/39.9} & 16.0/60.7  & 88.7/25.2 & \textbf{40.3/39.1} & 17.9/59.9 \\
 & Legal   & 94.4/9.4  & 90.8/19.0 & \textbf{34.9/24.0} & 11.2/30.0  & 91.2/17.9 & \textbf{39.0/22.9} & 14.7/30.0  & 94.7/18.1 & \textbf{40.5/22.0} & 19.0/29.5 \\
\bottomrule
\end{tabular}
}
\end{table*}

\subsection{Sensitive Token Analysis}
\label{subsec:sensitive}

A central observation is that privacy protection is \emph{selective}: the bottleneck disproportionately suppresses sensitive tokens while preserving common structural ones, exactly as Corollary~\ref{cor:sensitive} predicts. Figure~\ref{fig:tradeoff_main} shows how sensitive token accuracy, NERR, and BLEU evolve with $\lambda$ across all three datasets; per-dataset breakdowns with the privacy-gap visualization appear in Appendix~\ref{app:per_dataset_ablation}.

\begin{figure*}[!htbp]
    \centering
    \begin{subfigure}[b]{0.32\textwidth}
        \centering
        \includegraphics[width=1\textwidth]{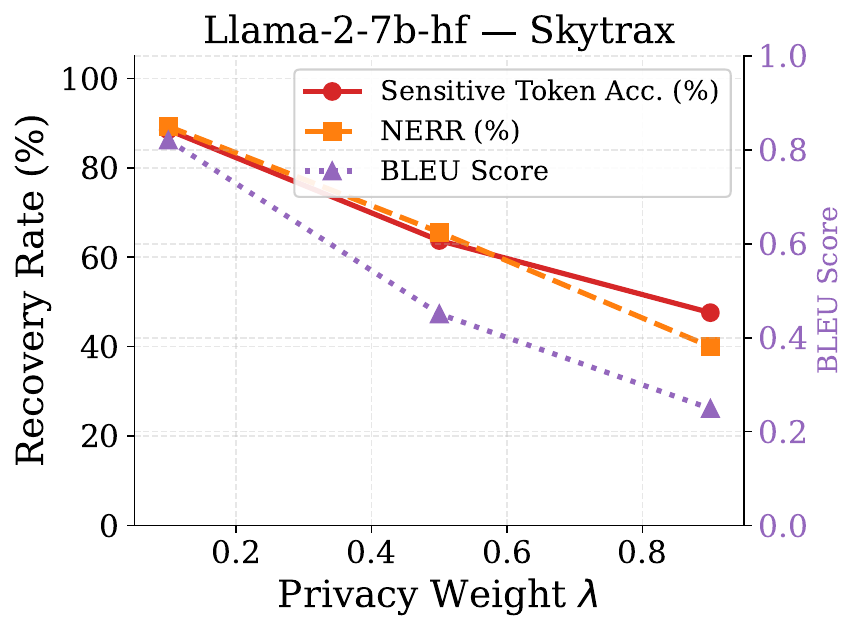}
        \caption{Skytrax}
        \label{fig:tradeoff_skytrax}
    \end{subfigure}
    \hfill
    \begin{subfigure}[b]{0.32\textwidth}
        \centering
        \includegraphics[width=1\textwidth]{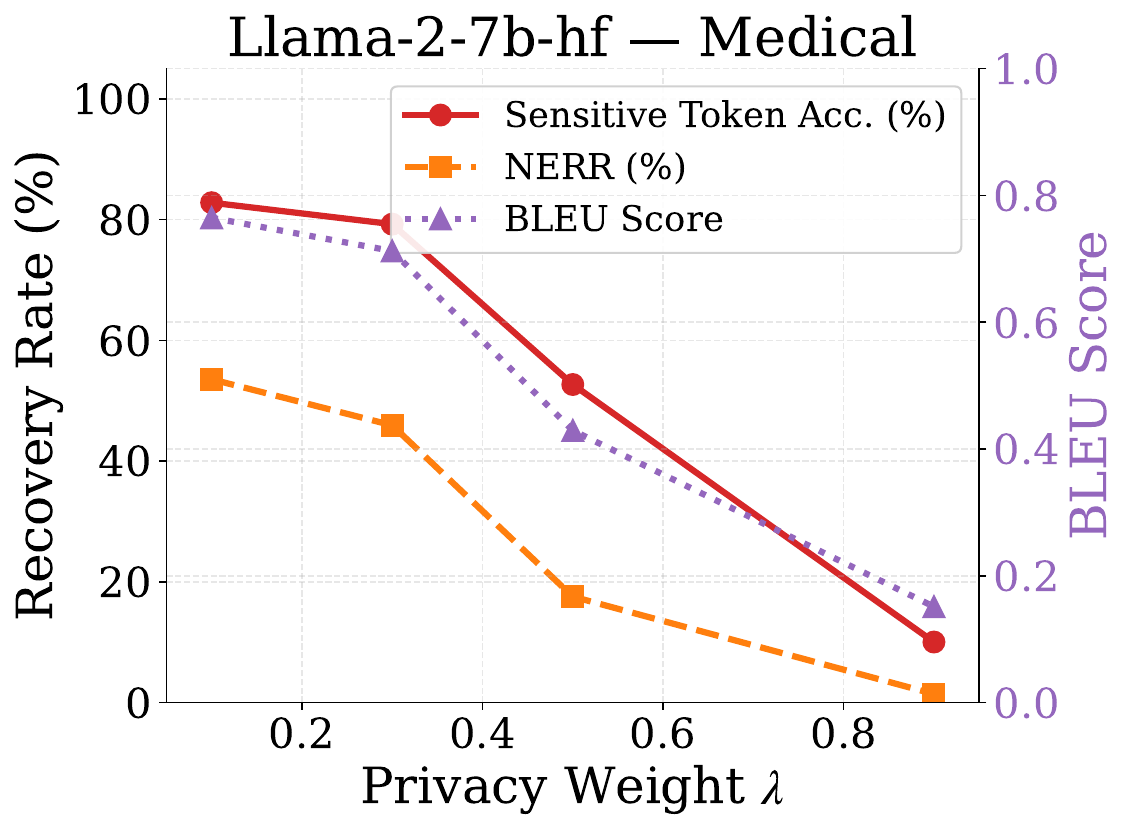}
        \caption{Medical}
        \label{fig:tradeoff_medical}
    \end{subfigure}
    \hfill
    \begin{subfigure}[b]{0.32\textwidth}
        \centering
        \includegraphics[width=1\textwidth]{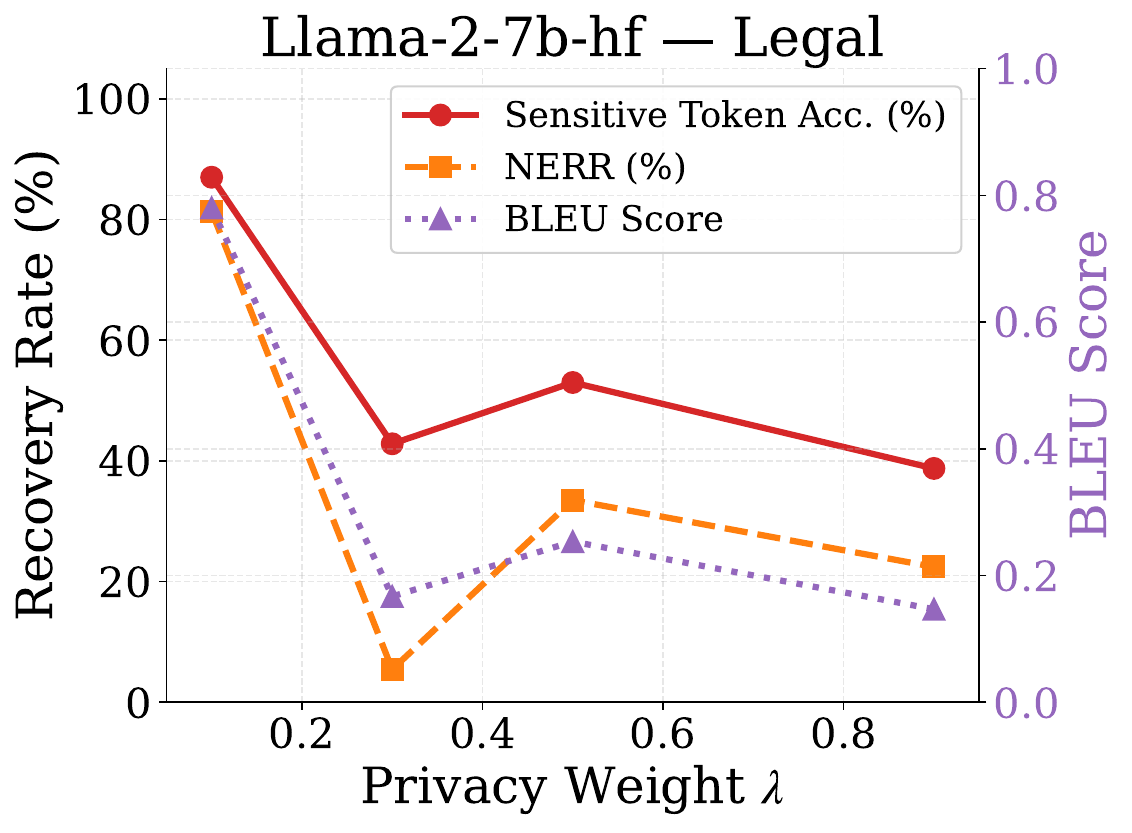}
        \caption{Legal}
        \label{fig:tradeoff_legal}
    \end{subfigure}
    \caption{Cross-dataset privacy-utility tradeoff over $\lambda$ (LLaMA-2-7B, $k{=}4$, $r{=}512$). As $\lambda$ increases, sensitive token accuracy and NERR decrease (stronger privacy) while BLEU also decreases (utility cost). Medical, with the most distinctive domain terminology, exhibits the strongest privacy gains.}
    \label{fig:tradeoff_main}
\end{figure*}

The privacy gap widens with $\lambda$: from 11--15\% at $\lambda{=}0.1$ to 30--35\% at $\lambda{=}0.9$. Common token accuracy stays high (88--99\%) even at aggressive settings, preserving syntactic coherence, while sensitive token accuracy drops from 85--89\% to 54--58\%---meaningful protection of domain-specific content. Notably, the defense provides this without explicit knowledge of which tokens are sensitive: the bottleneck discovers privacy-relevant compression purely from the minimax objective.

\subsection{Qualitative Analysis}
\label{subsec:qualitative}

Table~\ref{tab:qualitative_skytrax_moderate} shows a representative reconstruction on Skytrax at $\lambda{=}0.5$. The defense preserves syntactic structure and evaluative tokens (``good,'' ``friendly,'' ``service,'' ``staff'') while scrambling specific identifiers: city names (``Noumea'' $\to$ ``Lfrivebound''), route codes (``RUN-SEY'' $\to$ ``LAX-G and''), flight numbers (``621'' $\to$ ``58''), and dates (``4/2014'' $\to$ ``K/101ment''). The attacker can infer that an airline review was written but cannot recover \emph{which flight}, \emph{which route}, or \emph{when}---exactly the selective protection Corollary~\ref{cor:sensitive} predicts from bottleneck capacity arguments. Additional examples on Medical at $\lambda{=}0.3$ (where conditions and drug names are protected) and Skytrax at $\lambda{=}0.9$ (near-complete obfuscation) appear in Appendix~\ref{app:qualitative_examples}.

\begin{table}[t]
\centering
\caption{Reconstruction example (LLaMA-2-7B, Skytrax, $\lambda{=}0.5$, $r{=}256$). Sensitive tokens annotated as \textcolor{teal}{[PROT.]} or \textcolor{red}{[REC.]}. More examples in Appendix~\ref{app:qualitative_examples}.}
\label{tab:qualitative_skytrax_moderate}
\footnotesize
\setlength{\tabcolsep}{3pt}
\begin{tabular}{p{0.95\columnwidth}}
\toprule
\textbf{Example} (overall: 59.4\%) \\
\midrule
\textbf{Original:} Only flew Air Austral for medium haul (RUN-SEY) but was excellent! Good service much space good food and very friendly staff! \\
\textbf{Recovered:} only Flew Air Austral with1 holul (LAX-G and) but was full. nice service much space good food and very friendly staff. \\
\textbf{Sensitive:} ``medium'' $\to$ ``1'' \textcolor{teal}{[PROT.]}, ``RUN-SEY'' $\to$ ``LAX-G and'' \textcolor{teal}{[PROT.]}, ``excellent'' $\to$ ``full'' \textcolor{teal}{[PROT.]}, ``Good'' $\to$ ``nice'' \textcolor{teal}{[PROT.]}, ``Austral'' $\to$ ``Austral'' \textcolor{red}{[REC.]} \\
\bottomrule
\end{tabular}
\end{table}

\subsection{Comparison with Existing Defenses}
\label{sec:defense_comparison}

We compare our privacy adapters against four representative defense families operating under our constraint of frozen pre-trained layers and latency: \emph{noise perturbation} (Gaussian and Laplacian) injected at every device-side layer; \emph{LDP-style training} via NoPeek~\citep{vepakomma2020nopeek}, which adds a distance-correlation penalty between activations and inputs; \emph{information suppression} via Fisher-based dimension masking~\citep{guo2022bounding} and ShredMI random projection; and \emph{linear projection} via PCA at $k \in \{256, 512\}$. All baselines use the same fresh classification attacker on LLaMA-2-7B at split layer $k{=}4$. Full results are in Table~\ref{tab:baseline_comparison}.  We summarize the key findings here.

\begin{wraptable}{r}{.55\columnwidth}
\centering
\footnotesize
\caption{Defense comparison against the fresh classification attacker~\cite{luo2025prompt} (LLaMA-2-7B, $k{=}4$). Attack accuracy (\%, $\downarrow$) and perplexity ($\downarrow$) across three datasets. \textbf{Ours} achieves the lowest attack accuracy among all methods with usable perplexity (PPL $<$1000) on all three domains.}
\label{tab:baseline_comparison}
\setlength{\tabcolsep}{5pt}
\resizebox{.55\textwidth}{!}{
\begin{tabular}{l|cc|cc|cc}
\toprule
& \multicolumn{2}{c|}{\textbf{Skytrax}} & \multicolumn{2}{c|}{\textbf{Medical}} & \multicolumn{2}{c}{\textbf{Legal}} \\
\cmidrule(lr){2-3} \cmidrule(lr){4-5} \cmidrule(lr){6-7}
\textbf{Defense} & Atk & PPL & Atk & PPL & Atk & PPL \\
\midrule
No Defense                                  & 88.4 & 32.9    & 85.1 & 17.3    & 94.7 & 10.0 \\
\midrule
Gaussian $\sigma{=}0.1$                     & 76.4 & 46.0    & 68.7 & 21.4    & 89.4 & 12.3 \\
Gaussian $\sigma{=}0.5$                     & 17.2 & 2537    & 26.7 & 4222    & 56.2 & 903  \\
Laplace $b{=}0.1$                           & 64.9 & 75.2    & 61.1 & 31.5    & 85.8 & 17.4 \\
Laplace $b{=}0.5$                           & 10.9 & 7580    & 19.9 & 16316   & 43.9 & 2636 \\
\midrule
NoPeek                                      & 81.7 & 258.6   & 75.5 & 35.5    & 92.5 & 67.4 \\
Fisher                                      & 87.9 & 3766    & 84.4 & 10134   & 94.7 & 290.6 \\
ShredMI ($r{=}256$)                         & \phantom{0}8.2 & 4091 & 12.6 & 5213 & 17.9 & 1452 \\
PCA ($k{=}256$)                             & 87.2 & 46.5    & 82.6 & 26.5    & 94.6 & 12.1 \\
PCA ($k{=}512$)                             & 88.0 & 35.7    & 84.2 & 20.5    & 94.6 & 10.7 \\
\midrule
\textbf{Ours} ($r{=}256$, $\lambda{=}0.5$)  & \textbf{50.8} & \textbf{90.1} & \textbf{51.0} & \textbf{60.1} & \textbf{51.2} & \textbf{11.9} \\
\bottomrule
\end{tabular}
}
\end{wraptable}

All baselines exhibit failure modes our adapters avoid: noise faces a privacy-utility cliff (small budgets leave attack accuracy at 64.9--89.4\%, while $\sigma{=}0.5$ explodes perplexity to 903--4222); PCA preserves too much structure (within 1--2 points of no defense, confirming Theorem~\ref{thm:residual}); and MI-targeted methods (NoPeek~\citep{vepakomma2020nopeek}, Fisher~\citep{guo2022bounding}, ShredMI) trade utility for privacy at perplexities of 290--10{,}134. In contrast, our adapters at $r{=}256$, $\lambda{=}0.5$ achieve the lowest attack accuracy among defenses with usable perplexity on every dataset---50.8\% / 51.0\% / 51.2\% on Skytrax/Medical/Legal at PPL 90.1 / 60.1 / 11.9, a 37--43 point reduction versus no defense within an order of magnitude of baseline perplexity. Related families share the same limitations: LDP frameworks like SnD~\citep{mai2023split} exhibit the same cliff; cryptographic frameworks~\citep{juvekar2018gazelle,pang2024bolt,xu2025breaking} introduce orders-of-magnitude latency overhead (14 minutes per token on Llama-7B), incompatible with edge deployment.
% \subsection{Inference Overhead and Layer Persistence}
% \label{sec:overhead_layers}

% \paragraph{Inference latency.}
% Table~\ref{tab:latency} reports end-to-end latency on a Jetson Orin Nano client with an RTX~4090 server (64-token input, $k{=}4$). Privacy adapters add 5.6--8.4\% overhead across models---practical for edge deployment. The overhead does not grow with model size since it is dominated by the lightweight client-side adapter computation. Both adapter parameters and runtime scale linearly with $k$; we fix $k{=}4$ as a balance between privacy and client cost.

% \begin{table}[t]
% \centering
% \caption{Inference latency on Jetson Orin Nano (client) + RTX 4090 (cloud), $k{=}4$, 64-token input.}
% \label{tab:latency}
% \footnotesize
% \setlength{\tabcolsep}{4pt}
% \resizebox{.45\columnwidth}{!}{
% \begin{tabular}{lccc}
% \toprule
% \textbf{Model} & \textbf{Defense} & \textbf{Latency (ms)} & \textbf{Overhead} \\
% \midrule
% \multirow{2}{*}{Mistral-7B}
%  & None      & 54.3 & -- \\
%  & Adapters  & 58.8 & +8.25\% \\
% \midrule
% \multirow{2}{*}{LLaMA-2-7B}
%  & None      & 49.9 & -- \\
%  & Adapters  & 54.1 & +8.42\% \\
% \midrule
% \multirow{2}{*}{LLaMA-2-13B}
%  & None      & 77.0 & -- \\
%  & Adapters  & 81.4 & +5.65\% \\
% \bottomrule
% \end{tabular}
% }
% \end{table}

\paragraph{Inference latency.}
Privacy adapters add only 5.6--8.4\% end-to-end latency overhead across all three models on a Jetson Orin Nano client with RTX~4090 server (64-token input, $k{=}4$); full per-model breakdown in Appendix~\ref{app:latency} (Table~\ref{tab:latency}). The overhead does not grow with model size since it is dominated by the lightweight client-side adapter computation.%, and both adapter parameters and runtime scale linearly with $k$.

\paragraph{Token identity persists across all layers.}
We confirm the finding of~\cite{dong2025depth} that splitting at deeper layers does not by itself provide privacy: on LLaMA-2-7B / Medical without defense, fresh-attacker accuracy stays at 84.7\% / 80.7\% / 80.2\% / 81.1\% for split layers $k=4, 8, 12, 16$ respectively. Token identity persists through all 32 layers, motivating the need for an active defense rather than depth-based obfuscation.

%------------------------------------------------------------------
% CONCLUSIONS
%------------------------------------------------------------------
\section{Conclusions}
We introduced a principled, information-theoretic defense against prompt
inversion attacks in collaborative LLM inference. By deploying lightweight
privacy adapters that selectively suppress privacy-sensitive information,
our approach achieves strong privacy protection with minimal utility and
latency overhead. These results demonstrate that learned, task-aware
transformations are essential for practical and effective prompt privacy
in edge--cloud LLM systems.

%------------------------------------------------------------------
% BROADER IMPACTS (kept inline; NeurIPS does not require a fixed heading)
%------------------------------------------------------------------

%------------------------------------------------------------------
% ACKNOWLEDGMENTS (camera-ready only -- the `ack` env auto-hides at submission)
%------------------------------------------------------------------
% Uncomment for camera-ready. The `ack` environment is provided by the
% NeurIPS style file and is automatically hidden in the anonymized version.
%
% \begin{ack}
% We thank ... . This work was supported by ... .
% \end{ack}

%------------------------------------------------------------------
% REFERENCES
%------------------------------------------------------------------
% NeurIPS uses natbib (loaded by default by neurips_2026.sty).
% Any consistent style is acceptable; plainnat is the safe default.
\bibliographystyle{plainnat}
\bibliography{references,mobicom-ref}

%%%%%%%%%%%%%%%%%%%%%%%%%%%%%%%%%%%%%%%%%%%%%%%%%%%%%%%%%%%%%%%%%%%%%%%%%%%%
% NeurIPS PAPER CHECKLIST (REQUIRED)
% Lives in checklist.tex per the official NeurIPS 2026 template convention.
% You MUST fill this in -- skipping or removing it can be grounds for desk reject.
%%%%%%%%%%%%%%%%%%%%%%%%%%%%%%%%%%%%%%%%%%%%%%%%%%%%%%%%%%%%%%%%%%%%%%%%%%%%
\newpage

\section*{NeurIPS Paper Checklist}

\begin{enumerate}

\item {\bf Claims}
    \item[] Question: Do the main claims made in the abstract and introduction accurately reflect the paper's contributions and scope?
    \item[] Answer: \answerYes{}
    \item[] Justification: The abstract and introduction (Section~\ref{sec:introduction}) state our four main contributions: (i) an information-theoretic minimax framework for prompt inversion defense, (ii) privacy adapters as information-bottleneck modules with theoretical guarantees, (iii) a sensitive token analysis methodology, and (iv) experiments showing 15--35\% reduction in sensitive token recovery with $<$9\% latency overhead. These claims are supported by the formal results in Section~\ref{sec:theory} and the empirical evaluation in Section~\ref{sec:experiments}.
    \item[] Guidelines:
    \begin{itemize}
        \item The answer \answerNA{} means that the abstract and introduction do not include the claims made in the paper.
        \item The abstract and/or introduction should clearly state the claims made, including the contributions made in the paper and important assumptions and limitations. A \answerNo{} or \answerNA{} answer to this question will not be perceived well by the reviewers. 
        \item The claims made should match theoretical and experimental results, and reflect how much the results can be expected to generalize to other settings. 
        \item It is fine to include aspirational goals as motivation as long as it is clear that these goals are not attained by the paper. 
    \end{itemize}

\item {\bf Limitations}
    \item[] Question: Does the paper discuss the limitations of the work performed by the authors?
    \item[] Answer: \answerYes{}
    \item[] Justification: We discuss several limitations throughout the paper. Theorem~\ref{thm:tradeoff} formalizes the inherent utility-privacy tradeoff: stronger privacy necessarily reduces achievable utility. Section~\ref{sec:experiments} reports the perplexity cost of stronger privacy settings ($\lambda{=}0.9$ degrades perplexity substantially). Our threat model assumes an honest-but-curious cloud server with white-box access; we do not address malicious servers that tamper with computation. Experiments are limited to three LLM families (LLaMA-2-7B/13B, Mistral-7B) at sequence length 32 with $k{=}4$ adapter layers; generalization to longer sequences, larger models, and other architectures requires further study. Appendix~\ref{app:noise_baselines} discusses why we do not compare against retrained representation-learning baselines, acknowledging this restricts our comparison set.
    \item[] Guidelines:
    \begin{itemize}
        \item The answer \answerNA{} means that the paper has no limitation while the answer \answerNo{} means that the paper has limitations, but those are not discussed in the paper. 
        \item The authors are encouraged to create a separate ``Limitations'' section in their paper.
        \item The paper should point out any strong assumptions and how robust the results are to violations of these assumptions (e.g., independence assumptions, noiseless settings, model well-specification, asymptotic approximations only holding locally). The authors should reflect on how these assumptions might be violated in practice and what the implications would be.
        \item The authors should reflect on the scope of the claims made, e.g., if the approach was only tested on a few datasets or with a few runs. In general, empirical results often depend on implicit assumptions, which should be articulated.
        \item The authors should reflect on the factors that influence the performance of the approach. For example, a facial recognition algorithm may perform poorly when image resolution is low or images are taken in low lighting. Or a speech-to-text system might not be used reliably to provide closed captions for online lectures because it fails to handle technical jargon.
        \item The authors should discuss the computational efficiency of the proposed algorithms and how they scale with dataset size.
        \item If applicable, the authors should discuss possible limitations of their approach to address problems of privacy and fairness.
        \item While the authors might fear that complete honesty about limitations might be used by reviewers as grounds for rejection, a worse outcome might be that reviewers discover limitations that aren't acknowledged in the paper. The authors should use their best judgment and recognize that individual actions in favor of transparency play an important role in developing norms that preserve the integrity of the community. Reviewers will be specifically instructed to not penalize honesty concerning limitations.
    \end{itemize}

\item {\bf Theory assumptions and proofs}
    \item[] Question: For each theoretical result, does the paper provide the full set of assumptions and a complete (and correct) proof?
    \item[] Answer: \answerYes{}
    \item[] Justification: All theoretical results are numbered and cross-referenced. Theorem~\ref{thm:privacy_bound} (Privacy Leakage Bound), Theorem~\ref{thm:residual} (Residual MI Invariance), Theorem~\ref{thm:bottleneck_mi} (Bottleneck MI Bound), Theorem~\ref{thm:tradeoff} (Utility-Privacy Tradeoff), and Theorem~\ref{thm:token_acc} (Token Reconstruction Accuracy) are stated in Section~\ref{sec:theory} with informal intuition, with complete proofs and assumptions in Appendix~\ref{app:proofs}. Variational bounds (Lemmas~\ref{lem:vclub} and~\ref{lem:utility_lb}) used by the framework are proven in Appendices~\ref{app:vb} and~\ref{app:lem}. Assumptions (uniform prior on prompt space, bounded activations with precision $\epsilon$, deterministic invertible perturbations for the residual case) are stated explicitly in each theorem statement.
    \item[] Guidelines:
    \begin{itemize}
        \item The answer \answerNA{} means that the paper does not include theoretical results. 
        \item All the theorems, formulas, and proofs in the paper should be numbered and cross-referenced.
        \item All assumptions should be clearly stated or referenced in the statement of any theorems.
        \item The proofs can either appear in the main paper or the supplemental material, but if they appear in the supplemental material, the authors are encouraged to provide a short proof sketch to provide intuition. 
        \item Inversely, any informal proof provided in the core of the paper should be complemented by formal proofs provided in appendix or supplemental material.
        \item Theorems and Lemmas that the proof relies upon should be properly referenced. 
    \end{itemize}

    \item {\bf Experimental result reproducibility}
    \item[] Question: Does the paper fully disclose all the information needed to reproduce the main experimental results of the paper to the extent that it affects the main claims and/or conclusions of the paper (regardless of whether the code and data are provided or not)?
    \item[] Answer: \answerYes{}
    \item[] Justification: Section~\ref{sec:implementation} and Appendix~\ref{app:impl} provide the full experimental setup. The privacy adapter architecture is given by Eq.~\eqref{eq:adapter_structure}; training procedure by Algorithm~\ref{alg:training} (Appendix~\ref{app:algorithm}); all hyperparameters in Table~\ref{tab:hyperparams} (Appendix~\ref{app:hyperparams}). Datasets (MedAlpaca, Skytrax airline reviews, ECHR Legal), models (LLaMA-2-7B/13B, Mistral-7B from HuggingFace), attack baselines (\cite{luo2025prompt} and~\cite{qu2025prompt}), and the optimizer/learning rate/training epochs are documented. Code will be released publicly upon publication.
    \item[] Guidelines:
    \begin{itemize}
        \item The answer \answerNA{} means that the paper does not include experiments.
        \item If the paper includes experiments, a \answerNo{} answer to this question will not be perceived well by the reviewers: Making the paper reproducible is important, regardless of whether the code and data are provided or not.
        \item If the contribution is a dataset and\slash or model, the authors should describe the steps taken to make their results reproducible or verifiable. 
        \item Depending on the contribution, reproducibility can be accomplished in various ways. For example, if the contribution is a novel architecture, describing the architecture fully might suffice, or if the contribution is a specific model and empirical evaluation, it may be necessary to either make it possible for others to replicate the model with the same dataset, or provide access to the model. In general. releasing code and data is often one good way to accomplish this, but reproducibility can also be provided via detailed instructions for how to replicate the results, access to a hosted model (e.g., in the case of a large language model), releasing of a model checkpoint, or other means that are appropriate to the research performed.
        \item While NeurIPS does not require releasing code, the conference does require all submissions to provide some reasonable avenue for reproducibility, which may depend on the nature of the contribution. For example
        \begin{enumerate}
            \item If the contribution is primarily a new algorithm, the paper should make it clear how to reproduce that algorithm.
            \item If the contribution is primarily a new model architecture, the paper should describe the architecture clearly and fully.
            \item If the contribution is a new model (e.g., a large language model), then there should either be a way to access this model for reproducing the results or a way to reproduce the model (e.g., with an open-source dataset or instructions for how to construct the dataset).
            \item We recognize that reproducibility may be tricky in some cases, in which case authors are welcome to describe the particular way they provide for reproducibility. In the case of closed-source models, it may be that access to the model is limited in some way (e.g., to registered users), but it should be possible for other researchers to have some path to reproducing or verifying the results.
        \end{enumerate}
    \end{itemize}

\item {\bf Open access to data and code}
    \item[] Question: Does the paper provide open access to the data and code, with sufficient instructions to faithfully reproduce the main experimental results, as described in supplemental material?
    \item[] Answer: \answerNo{}
    \item[] Justification: Code is not released at submission time to preserve anonymity. The implementation will be released publicly upon publication, including training scripts, attack baselines, dataset preprocessing pipelines, and trained adapter checkpoints. All datasets used (MedAlpaca, Skytrax airline reviews, ECHR Legal) are publicly available; pre-trained models (LLaMA-2-7B/13B, Mistral-7B) are accessible via HuggingFace. The architecture (Eq.~\eqref{eq:adapter_structure}), training procedure (Algorithm~\ref{alg:training}), and hyperparameters (Table~\ref{tab:hyperparams}) are sufficient to reproduce our results from scratch.
    \item[] Guidelines:
    \begin{itemize}
        \item The answer \answerNA{} means that paper does not include experiments requiring code.
        \item Please see the NeurIPS code and data submission guidelines (\url{https://neurips.cc/public/guides/CodeSubmissionPolicy}) for more details.
        \item While we encourage the release of code and data, we understand that this might not be possible, so \answerNo{} is an acceptable answer. Papers cannot be rejected simply for not including code, unless this is central to the contribution (e.g., for a new open-source benchmark).
        \item The instructions should contain the exact command and environment needed to run to reproduce the results. See the NeurIPS code and data submission guidelines (\url{https://neurips.cc/public/guides/CodeSubmissionPolicy}) for more details.
        \item The authors should provide instructions on data access and preparation, including how to access the raw data, preprocessed data, intermediate data, and generated data, etc.
        \item The authors should provide scripts to reproduce all experimental results for the new proposed method and baselines. If only a subset of experiments are reproducible, they should state which ones are omitted from the script and why.
        \item At submission time, to preserve anonymity, the authors should release anonymized versions (if applicable).
        \item Providing as much information as possible in supplemental material (appended to the paper) is recommended, but including URLs to data and code is permitted.
    \end{itemize}

\item {\bf Experimental setting/details}
    \item[] Question: Does the paper specify all the training and test details (e.g., data splits, hyperparameters, how they were chosen, type of optimizer) necessary to understand the results?
    \item[] Answer: \answerYes{}
    \item[] Justification: Section~\ref{sec:implementation} specifies optimizer (AdamW for adapters, Adam for adversary), learning rates ($10^{-4}$ adapter, $10^{-3}$ adversary), training duration (10 epochs), and the alternating minimax schedule ($S{=}3$ adversary steps per adapter step). Hyperparameter sweeps over $r \in \{256, 512, 1024\}$ and $\lambda \in \{0.1, 0.3, 0.5, 0.9\}$ are explicitly varied as primary controls. Section~\ref{subsec:setup} details datasets, train/test splits, and metrics. Full hyperparameter table appears in Appendix~\ref{app:hyperparams} (Table~\ref{tab:hyperparams}).
    \item[] Guidelines:
    \begin{itemize}
        \item The answer \answerNA{} means that the paper does not include experiments.
        \item The experimental setting should be presented in the core of the paper to a level of detail that is necessary to appreciate the results and make sense of them.
        \item The full details can be provided either with the code, in appendix, or as supplemental material.
    \end{itemize}

\item {\bf Experiment statistical significance}
    \item[] Question: Does the paper report error bars suitably and correctly defined or other appropriate information about the statistical significance of the experiments?
    \item[] Answer: \answerNo{}
    \item[] Justification: Main results in Table~\ref{tab:dim_comparison_all_attacks} and Figure~\ref{fig:tradeoff_main} report point estimates without error bars due to the computational cost of running multiple full training runs across all combinations of three models, three datasets, three bottleneck dimensions, and four privacy weights. The noise baseline tables in Appendix~\ref{app:noise_baselines} (Tables~\ref{tab:dp_skytrax}--\ref{tab:dp_legal}) do report mean $\pm$ standard deviation across test prompts, providing one indicator of variability. The trends across the multi-axis sweep ($\lambda$, $r$, model, dataset, attack type) are highly consistent, providing confidence in the reported observations beyond what single-cell error bars would convey.
    \item[] Guidelines:
    \begin{itemize}
        \item The answer \answerNA{} means that the paper does not include experiments.
        \item The authors should answer \answerYes{} if the results are accompanied by error bars, confidence intervals, or statistical significance tests, at least for the experiments that support the main claims of the paper.
        \item The factors of variability that the error bars are capturing should be clearly stated (for example, train/test split, initialization, random drawing of some parameter, or overall run with given experimental conditions).
        \item The method for calculating the error bars should be explained (closed form formula, call to a library function, bootstrap, etc.)
        \item The assumptions made should be given (e.g., Normally distributed errors).
        \item It should be clear whether the error bar is the standard deviation or the standard error of the mean.
        \item It is OK to report 1-sigma error bars, but one should state it. The authors should preferably report a 2-sigma error bar than state that they have a 96\% CI, if the hypothesis of Normality of errors is not verified.
        \item For asymmetric distributions, the authors should be careful not to show in tables or figures symmetric error bars that would yield results that are out of range (e.g., negative error rates).
        \item If error bars are reported in tables or plots, the authors should explain in the text how they were calculated and reference the corresponding figures or tables in the text.
    \end{itemize}

\item {\bf Experiments compute resources}
    \item[] Question: For each experiment, does the paper provide sufficient information on the computer resources (type of compute workers, memory, time of execution) needed to reproduce the experiments?
    \item[] Answer: \answerYes{}
    \item[] Justification: Section~\ref{subsec:setup} states that training is performed on a server with two NVIDIA RTX 4090 GPUs (24~GB each), and latency benchmarking uses a Jetson Orin Nano (8~GB Ampere-class GPU) representing an edge client. Appendix~\ref{app:train} provides per-iteration computational complexity (Table~\ref{tab:complexity}) showing adapter overhead is approximately $r/d \approx 0.8\%$ of transformer cost. Adapter parameter counts per model are reported in Table~\ref{tab:param_count}. Inference latency on the edge client is in Table~\ref{tab:latency} (5.6--8.4\% overhead). The full hyperparameter sweep across 3 models $\times$ 3 datasets $\times$ 3 bottleneck dimensions $\times$ 4 privacy weights required additional preliminary runs not reported.
    \item[] Guidelines:
    \begin{itemize}
        \item The answer \answerNA{} means that the paper does not include experiments.
        \item The paper should indicate the type of compute workers CPU or GPU, internal cluster, or cloud provider, including relevant memory and storage.
        \item The paper should provide the amount of compute required for each of the individual experimental runs as well as estimate the total compute. 
        \item The paper should disclose whether the full research project required more compute than the experiments reported in the paper (e.g., preliminary or failed experiments that didn't make it into the paper). 
    \end{itemize}
    
\item {\bf Code of ethics}
    \item[] Question: Does the research conducted in the paper conform, in every respect, with the NeurIPS Code of Ethics \url{https://neurips.cc/public/EthicsGuidelines}?
    \item[] Answer: \answerYes{}
    \item[] Justification: This research conforms to the NeurIPS Code of Ethics. The work is defense-oriented, aimed at protecting user prompt privacy in collaborative LLM inference. We use only publicly available datasets and pre-trained models with appropriate citations. No human subjects research, crowdsourcing, or sensitive personal data collection is involved; the datasets used (MedAlpaca, Skytrax, ECHR) consist of pre-existing public corpora. Anonymity is preserved in the submission.
    \item[] Guidelines:
    \begin{itemize}
        \item The answer \answerNA{} means that the authors have not reviewed the NeurIPS Code of Ethics.
        \item If the authors answer \answerNo, they should explain the special circumstances that require a deviation from the Code of Ethics.
        \item The authors should make sure to preserve anonymity (e.g., if there is a special consideration due to laws or regulations in their jurisdiction).
    \end{itemize}

\item {\bf Broader impacts}
    \item[] Question: Does the paper discuss both potential positive societal impacts and negative societal impacts of the work performed?
    \item[] Answer: \answerYes{}
    \item[] Justification: We include a Broader Impacts section discussing both positive and negative societal impacts. Positive: enabling privacy-preserving LLM inference for sensitive applications such as healthcare, legal services, and enterprise retrieval-augmented generation, lowering the barrier to deploying privacy-aware AI on resource-constrained devices. Negative: as with any privacy-enhancing technology, improved confidentiality could be misused to obscure malicious activities; we note our defense is designed for benign user data under defined threat models and does not prevent lawful access at higher system layers, and we encourage complementary mechanisms for transparency and accountability.
    \item[] Guidelines:
    \begin{itemize}
        \item The answer \answerNA{} means that there is no societal impact of the work performed.
        \item If the authors answer \answerNA{} or \answerNo, they should explain why their work has no societal impact or why the paper does not address societal impact.
        \item Examples of negative societal impacts include potential malicious or unintended uses (e.g., disinformation, generating fake profiles, surveillance), fairness considerations (e.g., deployment of technologies that could make decisions that unfairly impact specific groups), privacy considerations, and security considerations.
        \item The conference expects that many papers will be foundational research and not tied to particular applications, let alone deployments. However, if there is a direct path to any negative applications, the authors should point it out. For example, it is legitimate to point out that an improvement in the quality of generative models could be used to generate Deepfakes for disinformation. On the other hand, it is not needed to point out that a generic algorithm for optimizing neural networks could enable people to train models that generate Deepfakes faster.
        \item The authors should consider possible harms that could arise when the technology is being used as intended and functioning correctly, harms that could arise when the technology is being used as intended but gives incorrect results, and harms following from (intentional or unintentional) misuse of the technology.
        \item If there are negative societal impacts, the authors could also discuss possible mitigation strategies (e.g., gated release of models, providing defenses in addition to attacks, mechanisms for monitoring misuse, mechanisms to monitor how a system learns from feedback over time, improving the efficiency and accessibility of ML).
    \end{itemize}
    
\item {\bf Safeguards}
    \item[] Answer: \answerNA{}
    \item[] Justification: The paper proposes a defense mechanism (privacy adapters) rather than releasing pre-trained generative models, attack tools, or scraped datasets that pose high misuse risk. The trained adapter modules are tied to specific frozen pre-trained LLMs and are designed to protect privacy rather than generate or extract content. Released code will consist of training scripts and lightweight adapter checkpoints; standard responsible-release practices apply.
    \item[] Guidelines:
    \begin{itemize}
        \item The answer \answerNA{} means that the paper poses no such risks.
        \item Released models that have a high risk for misuse or dual-use should be released with necessary safeguards to allow for controlled use of the model, for example by requiring that users adhere to usage guidelines or restrictions to access the model or implementing safety filters. 
        \item Datasets that have been scraped from the Internet could pose safety risks. The authors should describe how they avoided releasing unsafe images.
        \item We recognize that providing effective safeguards is challenging, and many papers do not require this, but we encourage authors to take this into account and make a best faith effort.
    \end{itemize}

\item {\bf Licenses for existing assets}
    \item[] Question: Are the creators or original owners of assets (e.g., code, data, models), used in the paper, properly credited and are the license and terms of use explicitly mentioned and properly respected?
    \item[] Answer: \answerYes{}
    \item[] Justification: All existing assets are properly credited. Pre-trained models: LLaMA-2-7B/13B~\cite{touvron2023llama} (Meta LLaMA-2 Community License), Mistral-7B-v0.1~\cite{jiang2023mistral} (Apache 2.0). Datasets: MedAlpaca (clinical query corpus, CC BY 4.0), Skytrax airline reviews (publicly available), and ECHR case law (public domain). Software: PyTorch 2.0 (BSD-style), HuggingFace Transformers v4.35 (Apache 2.0). Baseline attack implementations follow~\cite{luo2025prompt} and~\cite{qu2025prompt}, both cited. All licenses are respected in our experimental usage.
    \item[] Guidelines:
    \begin{itemize}
        \item The answer \answerNA{} means that the paper does not use existing assets.
        \item The authors should cite the original paper that produced the code package or dataset.
        \item The authors should state which version of the asset is used and, if possible, include a URL.
        \item The name of the license (e.g., CC-BY 4.0) should be included for each asset.
        \item For scraped data from a particular source (e.g., website), the copyright and terms of service of that source should be provided.
        \item If assets are released, the license, copyright information, and terms of use in the package should be provided. For popular datasets, \url{paperswithcode.com/datasets} has curated licenses for some datasets. Their licensing guide can help determine the license of a dataset.
        \item For existing datasets that are re-packaged, both the original license and the license of the derived asset (if it has changed) should be provided.
        \item If this information is not available online, the authors are encouraged to reach out to the asset's creators.
    \end{itemize}

\item {\bf New assets}
    \item[] Question: Are new assets introduced in the paper well documented and is the documentation provided alongside the assets?
    \item[] Answer: \answerYes{}
    \item[] Justification: The new assets introduced are the privacy adapter modules and the training framework. Documentation is provided in Section~\ref{sec:method} (architecture), Section~\ref{sec:implementation} (training), and Appendix~\ref{app:impl} (hyperparameters, complexity analysis, full algorithm). At submission time, code and adapter checkpoints are not released to preserve anonymity; upon publication, all assets will be released with documentation, license, and usage instructions.
    \item[] Guidelines:
    \begin{itemize}
        \item The answer \answerNA{} means that the paper does not release new assets.
        \item Researchers should communicate the details of the dataset\slash code\slash model as part of their submissions via structured templates. This includes details about training, license, limitations, etc. 
        \item The paper should discuss whether and how consent was obtained from people whose asset is used.
        \item At submission time, remember to anonymize your assets (if applicable). You can either create an anonymized URL or include an anonymized zip file.
    \end{itemize}

\item {\bf Crowdsourcing and research with human subjects}
    \item[] Question: For crowdsourcing experiments and research with human subjects, does the paper include the full text of instructions given to participants and screenshots, if applicable, as well as details about compensation (if any)? 
    \item[] Answer: \answerNA{}
    \item[] Justification: This work does not involve crowdsourcing or research with human subjects. All datasets used are pre-existing public corpora.
    \item[] Guidelines:
    \begin{itemize}
        \item The answer \answerNA{} means that the paper does not involve crowdsourcing nor research with human subjects.
        \item Including this information in the supplemental material is fine, but if the main contribution of the paper involves human subjects, then as much detail as possible should be included in the main paper. 
        \item According to the NeurIPS Code of Ethics, workers involved in data collection, curation, or other labor should be paid at least the minimum wage in the country of the data collector. 
    \end{itemize}

\item {\bf Institutional review board (IRB) approvals or equivalent for research with human subjects}
    \item[] Question: Does the paper describe potential risks incurred by study participants, whether such risks were disclosed to the subjects, and whether Institutional Review Board (IRB) approvals (or an equivalent approval/review based on the requirements of your country or institution) were obtained?
    \item[] Answer: \answerNA{}
    \item[] Justification: This work does not involve human subjects research and therefore does not require IRB approval. All experimental data comes from pre-existing public corpora.
    \item[] Guidelines:
    \begin{itemize}
        \item The answer \answerNA{} means that the paper does not involve crowdsourcing nor research with human subjects.
        \item Depending on the country in which research is conducted, IRB approval (or equivalent) may be required for any human subjects research. If you obtained IRB approval, you should clearly state this in the paper. 
        \item We recognize that the procedures for this may vary significantly between institutions and locations, and we expect authors to adhere to the NeurIPS Code of Ethics and the guidelines for their institution. 
        \item For initial submissions, do not include any information that would break anonymity (if applicable), such as the institution conducting the review.
    \end{itemize}

\item {\bf Declaration of LLM usage}
    \item[] Question: Does the paper describe the usage of LLMs if it is an important, original, or non-standard component of the core methods in this research? Note that if the LLM is used only for writing, editing, or formatting purposes and does \emph{not} impact the core methodology, scientific rigor, or originality of the research, declaration is not required.
    \item[] Answer: \answerNA{}
    \item[] Justification: LLMs (LLaMA-2-7B/13B, Mistral-7B) are used as the experimental targets being defended, not as a methodological component. They are pre-trained, frozen, and serve as the underlying model whose intermediate activations our privacy adapters protect. No LLMs were used as core, original, or non-standard components in the development of our defense methodology, theoretical analysis, or evaluation pipeline.
    \item[] Guidelines:
    \begin{itemize}
        \item The answer \answerNA{} means that the core method development in this research does not involve LLMs as any important, original, or non-standard components.
        \item Please refer to our LLM policy in the NeurIPS handbook for what should or should not be described.
    \end{itemize}

\end{enumerate}

%%%%%%%%%%%%%%%%%%%%%%%%%%%%%%%%%%%%%%%%%%%%%%%%%%%%%%%%%%%%%%%%%%%%%%%%%%%%
% APPENDIX
% NeurIPS is single-column throughout, so no \onecolumn needed.
%%%%%%%%%%%%%%%%%%%%%%%%%%%%%%%%%%%%%%%%%%%%%%%%%%%%%%%%%%%%%%%%%%%%%%%%%%%%
\appendix

\section{Technical Appendices and Supplementary Material}
% Note from the official template: think of the appendix as ``optional reading''
% for reviewers. The paper must stand alone without it; do not put critical
% experiments here.

\appendix
\section{Notation}
\label{app:notation}

Table~\ref{tab:notation} summarizes the notation used throughout the paper.

\begin{table}[h]
\centering
\footnotesize
\caption{Notation summary.}
\label{tab:notation}
\setlength{\tabcolsep}{3pt}
\resizebox{.85\columnwidth}{!}{
\begin{tabular}{@{}ll ll@{}}
\toprule
\textbf{Symbol} & \textbf{Description}
& \textbf{Symbol} & \textbf{Description} \\
\midrule
$\cX$ & Token space
& $\text{LM\_head}$ & Language modeling head \\

$x \in \cX^n$ & Input prompt (length $n$)
& $\mathcal{V}$ & Vocabulary ($|\mathcal{V}|$ tokens) \\

$\Emb$ & Embedding function
& $y = x_{n+1}$ & Next-token target \\

$T_i$ & $i$-th transformer layer
& $\MI{X}{Y}$ & Mutual information \\

$G_i$ & $i$-th privacy adapter
& $\Ent{X}$ & Entropy \\

$A$ & Protected intermediate activation
& $r$ & Bottleneck dimension \\

$d$ & Hidden dimension
& $\lambda$ & Privacy-utility tradeoff weight \\

$q_\Psi$ & Adversary (token classifier)
& $k$ & Split layer index \\
\bottomrule
\end{tabular}
}
\end{table}

\section{Proofs}
\label{app:proofs}

This appendix contains full statements and proofs of all theoretical results referenced in Section~\ref{sec:theory}, together with auxiliary results (variational bounds, definitions, and surrogate-loss propositions).

\subsection{Definitions}
\label{app:definitions}

\begin{definition}[Reconstruction Error]
\label{def:reconstruction_error}
For a prompt $x$ and its reconstruction $\hat{x}$, we measure error as:
\begin{equation}
\text{Error}(x, \hat{x}) = \norm{x - \hat{x}}_p
\end{equation}
for some $p$-norm (e.g., $p=2$ for Euclidean distance in embedding space).
\end{definition}

\begin{definition}[Reconstruction Attack Set]
\label{def:attack_set}
Let $\cA_{\PIA}$ denote the set of all prompt inversion attacks:
\begin{equation}
\cA_{\PIA} = \left\{ \cA: \R^{n \times d} \to \cX^n \right\},
\end{equation}
where each attack $\cA$ maps activation $A$ to reconstructed prompt $\hat{x} = \cA(A)$.
\end{definition}

\begin{definition}[Token Accuracy]
\label{def:token_accuracy}
For prompts of length $n$, token accuracy is:
\begin{equation}
\Acc_{\text{token}} = \frac{1}{n} \sum_{i=1}^n \mathbb{1}[\hat{x}_i = x_i]
\end{equation}
\end{definition}

\subsection{Variational Upper Bound (vCLUB)}
\label{app:vb}

\begin{lemma}[Variational Upper Bound, vCLUB]
\label{lem:vclub}
For any distribution $q_\Psi(x|A)$,
\begin{equation}
\MI{A}{x} \leq \E_{p(A,x)}[\log q_\Psi(x|A)] - \E_{p(A)p(x)}[\log q_\Psi(x|A)].
\end{equation}
\end{lemma}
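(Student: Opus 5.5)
The plan is to prove an exact identity for the gap between the right-hand side and $\MI{A}{x}$. The lemma then follows in the case where the gap is provably nonnegative, and otherwise under an explicit condition that I expect the argument will need to add. Write
\[
\MI{A}{x} = \E_{p(A,x)}[\log p(x\mid A)] - \E_{p(x)}[\log p(x)],
\]
and define the joint surrogate $\tilde q(x,A) := q_\Psi(x\mid A)\,p(A)$. Expanding the two KL divergences against $\tilde q$ gives
\[
\KL{p(A)p(x)}{\tilde q} = \E_{p(A)p(x)}[\log p(x) - \log q_\Psi(x\mid A)]
\]
and
\[
\KL{p(A,x)}{\tilde q} = \E_{p(A,x)}[\log p(x\mid A) - \log q_\Psi(x\mid A)].
\]
Subtracting the second from the first, the $\log p(x)$ and $\log p(x\mid A)$ terms recombine into $-\MI{A}{x}$. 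This yields the identity
\[
\E_{p(A,x)}[\log q_\Psi(x|A)] - \E_{p(A)p(x)}[\log q_\Psi(x|A)] - \MI{A}{x} = \KL{p(A)p(x)}{\tilde q} - \KL{p(A,x)}{\tilde q}.
\]
This is a routine bookkeeping step.

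Second, I handle the case $q_\Psi = p(x\mid A)$, which is the CLUB bound proper. Here the gap simplifies to $\E_{p(x)}\big[\log p(x) - \E_{p(A)}\log p(x\mid A)\big]$. Since $p(x) = \E_{p(A)}[p(x\mid A)]$ and $\log$ is concave, Jensen gives $\log \E_{p(A)} p(x\mid A) \ge \E_{p(A)}\log p(x\mid A)$ pointwise in $x$. So the gap is nonnegative and the bound holds.

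The main obstacle is the literal phrase ``for any distribution $q_\Psi$''. Taken at face value it is false: a $q_\Psi$ that ignores $A$ makes the right-hand side $0$ while $\MI{A}{x}$ may be positive. I would therefore state the lemma, as in \citet{cheng2020club}, under the hypothesis $\KL{p(A,x)}{\tilde q} \le \KL{p(A)p(x)}{\tilde q}$. Under that hypothesis the identity above immediately gives the bound. I would then remark that the hypothesis holds with equality-level tightness at $q_\Psi = p(x\mid A)$ by the Jensen argument, and approximately whenever $q_\Psi$ is trained to minimize $\KL{p(A,x)}{\tilde q}$. That minimization is exactly the cross-entropy training of the adversary in Appendix~\ref{app:tightening}, which is what justifies its use in Eq.~\eqref{eq:privacy_minimax}.
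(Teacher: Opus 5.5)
Your argument is correct and takes a different route from the paper's. Yours is the one that works: it proves a true, conditional version of the lemma. Your identity
\[
\E_{p(A,x)}[\log q_\Psi] - \E_{p(A)p(x)}[\log q_\Psi] - \MI{A}{x} = \KL{p(A)p(x)}{\tilde q} - \KL{p(A,x)}{\tilde q}
\]
is correct. So is your counterexample: if $q_\Psi(x|A)=q(x)$ ignores $A$, both expectations equal $\E_{p(x)}[\log q(x)]$, so the right-hand side is $0$ while $\MI{A}{x}>0$ whenever $A$ and $x$ are dependent. The lemma as literally stated is therefore false, and adding the hypothesis $\KL{p(A,x)}{\tilde q}\le\KL{p(A)p(x)}{\tilde q}$ of \citet{cheng2020club} is the right repair.

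The paper instead writes $\MI{A}{x} = \E_{p(A)}[\KL{p(x|A)}{q_\Psi(x|A)}] + \E_{p(A,x)}[\log q_\Psi(x|A)] - \E_{p(x)}[\log p(x)]$, which is a correct identity. It then drops the KL term because it is nonnegative. But dropping a nonnegative term from that equality gives $\MI{A}{x} \ge \Ent{x} + \E_{p(A,x)}[\log q_\Psi(x|A)]$. That is the Barber--Agakov \emph{lower} bound, the same mechanism as in Lemma~\ref{lem:utility_lb}. The claimed upper bound does not follow from that chain: neither its $-\E_{p(A)p(x)}[\log q_\Psi]$ term nor the extra constant $C$ (absent from the lemma statement) is derived anywhere. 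A decomposition through the conditional KL cannot by itself yield an upper bound, since that KL is unbounded above. Your decomposition against the joint surrogate $\tilde q = q_\Psi(x|A)p(A)$ instead writes the gap as a difference of two KLs of indefinite sign. That is exactly why some extra hypothesis is unavoidable.

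Two of your side remarks need correcting.
\begin{itemize}
\item At $q_\Psi = p(x|A)$ the hypothesis holds trivially, because $\tilde q = p(A,x)$ makes the left KL zero. Your Jensen computation just shows the gap equals $\KL{p(A)p(x)}{p(A,x)}\ge 0$, which is generally strictly positive. In the paper's setting $A$ is a deterministic, non-constant function of $x$, so $p(A,x)$ vanishes on a set of positive $p(A)p(x)$-mass and the gap is $+\infty$. The bound is therefore loose at the true posterior, and ``equality-level tightness'' is misleading.
\item Cross-entropy training of $q_\Psi$ makes the hypothesis plausible but does not guarantee it. It is most fragile near independence, where the two KLs nearly coincide, which is exactly the regime a strong defense pushes toward.
\end{itemize}
Finally, Eq.~\eqref{eq:privacy_minimax} keeps only the $\E_{p(A,x)}$ term. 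So even granting the hypothesis, that objective is not the vCLUB quantity, and your last sentence should not be read as fully justifying it.
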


\begin{proof}
We decompose the mutual information as:
\begin{align*}
\MI{A}{x} &= \E_{p(A,x)}\left[\log \frac{p(x|A)}{p(x)}\right] \\
&= \E_{p(A,x)}[\log p(x|A)] - \E_{p(x)}[\log p(x)] \\
&= \E_{p(A,x)}\left[\log \frac{p(x|A)}{q_\Psi(x|A)}\right] + \E_{p(A,x)}[\log q_\Psi(x|A)] \\
&\quad - \E_{p(x)}[\log p(x)] \\
&= \E_{p(A)}[\KL{p(x|A)}{q_\Psi(x|A)}] \\
&\quad + \E_{p(A,x)}[\log q_\Psi(x|A)] - \E_{p(x)}[\log p(x)].
\end{align*}
Since $\KL{p(x|A)}{q_\Psi(x|A)} \geq 0$ and $\E_{p(x)}[\log p(x)]$ is constant w.r.t.\ the encoder:
\begin{equation*}
\MI{A}{x} \leq \E_{p(A,x)}[\log q_\Psi(x|A)] - \E_{p(A)p(x)}[\log q_\Psi(x|A)] + C.
\end{equation*}
\end{proof}

\subsection{Variational Lower Bound for Utility}
\label{app:lem}

\begin{lemma}[Variational Lower Bound]
\label{lem:utility_lb}
For the language model distribution $p_{\text{LM}}(y|A)$ where $y = x_{n+1}$ is the next token,
\begin{equation}
\MI{A}{y} \geq \E_{p(A,y)}[\log p_{\text{LM}}(y|A)] + \Ent{y}.
\end{equation}
\end{lemma}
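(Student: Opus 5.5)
The plan is to reproduce the standard Barber--Agakov argument, mirroring the decomposition used for Lemma~\ref{lem:vclub}. Write
\begin{equation*}
\MI{A}{y} = \E_{p(A,y)}\left[\log \frac{p(y|A)}{p(y)}\right] = \E_{p(A,y)}[\log p(y|A)] + \Ent{y},
\end{equation*}
using $-\E_{p(y)}[\log p(y)] = \Ent{y}$. This step only requires that $y$ be discrete (it takes values in $\cV$), so that $\Ent{y}$ is a finite Shannon entropy.

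Next, insert the language model distribution $p_{\text{LM}}(y|A)$. Here I treat it as an arbitrary conditional distribution over $\cV$ determined by $A$ through the frozen cloud layers $T_{k+1},\dots,T_L$ and $W_{\text{LM}}$. Add and subtract $\log p_{\text{LM}}(y|A)$ inside the expectation:
\begin{equation*}
\E_{p(A,y)}[\log p(y|A)] = \E_{p(A)}\left[\KL{p(y|A)}{p_{\text{LM}}(y|A)}\right] + \E_{p(A,y)}[\log p_{\text{LM}}(y|A)].
\end{equation*}
The KL term is nonnegative by Gibbs' inequality, applied pointwise in $A$ and then averaged. Dropping it gives the stated bound
\begin{equation*}
\MI{A}{y} \ge \E_{p(A,y)}[\log p_{\text{LM}}(y|A)] + \Ent{y}.
\end{equation*}
Equality holds exactly when $p_{\text{LM}}(\cdot|A) = p(\cdot|A)$ almost surely.

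The main technical point is justifying the pointwise KL step. It needs $p_{\text{LM}}(y|A) > 0$ wherever $p(y|A) > 0$; otherwise the right-hand side is $-\infty$ and the bound holds trivially. Since $p_{\text{LM}}$ is a softmax output, it is strictly positive on all of $\cV$, so every term is finite. I would also remark that $\Ent{y}$ does not depend on the adapters $\{G_i\}$. Maximizing the lower bound over $\{G_i\}$ is therefore equivalent to minimizing $\mathcal{L}_{\text{NTP}}(A)$ in Eq.~\eqref{eq:utility_objective}, which is how the lemma is used in the main text.
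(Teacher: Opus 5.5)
Your proposal is correct and follows the paper's route: write $\MI{A}{y} = \Ent{y} + \E_{p(A,y)}[\log p(y|A)]$, add and subtract $\log p_{\text{LM}}(y|A)$ to isolate $\E_{p(A)}[\KL{p(y|A)}{p_{\text{LM}}(y|A)}] \ge 0$, and drop that term. Your extra remarks are sound points the paper omits: strict positivity of the softmax rules out an $\infty - \infty$ issue in the add-and-subtract step, and equality holds exactly when $p_{\text{LM}}(\cdot|A) = p(\cdot|A)$ almost surely.
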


\begin{proof}
\begin{align*}
\MI{A}{y} &= \Ent{y} - \Hcond{y}{A} \\
&= \Ent{y} + \E_{p(A,y)}[\log p(y|A)] \\
&= \Ent{y} + \E_{p(A,y)}[\log p_{\text{LM}}(y|A)] \\
&\quad + \E_{p(A,y)}\left[\log \frac{p(y|A)}{p_{\text{LM}}(y|A)}\right] \\
&= \Ent{y} + \E_{p(A,y)}[\log p_{\text{LM}}(y|A)] \\
&\quad + \E_{p(A)}[\KL{p(y|A)}{p_{\text{LM}}(y|A)}].
\end{align*}
Since $\KL{p(y|A)}{p_{\text{LM}}(y|A)} \geq 0$:
\begin{equation*}
\MI{A}{y} \geq \E_{p(A,y)}[\log p_{\text{LM}}(y|A)] + \Ent{y}.
\end{equation*}
\end{proof}

\subsection{Tightening the Variational Upper Bound}
\label{app:tightening}

To tighten the variational upper bound from Lemma~\ref{lem:vclub}, we parameterize $q_\Psi(x|A)$ as a per-position token classifier:
\begin{equation}
q_\Psi(x \mid A) = \prod_{j=1}^{n} q_\Psi(x_j \mid A_j),
\end{equation}
where each factor is implemented as an MLP that maps the activation at position $j$ to a distribution over the vocabulary:
\begin{equation}
q_\Psi(x_j \mid A_j) = \text{softmax}\left(f_\Psi(A_j)\right), \quad j = 1, \ldots, n,
\end{equation}
with $f_\Psi: \mathbb{R}^d \to \mathbb{R}^{|\mathcal{V}|}$ and $A_j \in \mathbb{R}^d$. This factorization is consistent with the attack model of~\cite{luo2025prompt}.

Tightening the bound requires minimizing the KL divergence between the true posterior and the variational approximation:
\begin{align}
&\min_\Psi \KL{p(x|A)}{q_\Psi(x|A)} \nonumber \\
&= \min_\Psi \E_{p(A,x)}\left[\log p(x|A) - \log q_\Psi(x|A)\right] \nonumber \\
&\Leftrightarrow \max_\Psi \E_{p(A,x)}\left[\log q_\Psi(x|A)\right],
\end{align}
where $\E[\log p(x|A)]$ is constant in $\Psi$ and dropped. Substituting the factorized form:
\begin{equation}
\Leftrightarrow \min_\Psi \underbrace{-\E_{p(A,x)}\left[\sum_{j=1}^{n} \log q_\Psi(x_j \mid A_j)\right]}_{\text{Per-position cross-entropy}}.
\end{equation}
% Thus, training the adversary to minimize per-position cross-entropy is equivalent to tightening the variational MI upper bound.

Thus, training the adversary to minimize per-position cross-entropy is equivalent to maximizing the per-position variational lower bound on $\sum_j I(A_j; x_j)$ (see Remark~\ref{rem:surrogate} for why this is a principled surrogate rather than a direct bound on the joint $I(A;x)$).

% \begin{remark}
% The per-position factorization yields a \emph{lower bound} on the true variational objective $\max_\Psi \E[\log q_\Psi(x|A)]$, since it ignores cross-position dependencies. The MI upper bound is therefore \emph{looser} than what a sequence-level adversary could achieve. Our defense is thus evaluated conservatively: any reduction in the per-position bound necessarily reduces the true MI as well.
% \end{remark}

\begin{remark}[Per-position factorization as a principled surrogate]
\label{rem:surrogate}
The per-position factorization $q_\Psi(x|A) = \prod_{j=1}^n q_\Psi(x_j|A_j)$ is structurally unable to model cross-position dependencies in $p(x|A)$, since transformer activations $A_j$ encode contextual information about positions $i\neq j$ via attention prior to the bottleneck. Consequently, the variational quantity tightened by training $\Psi$ does not directly upper-bound the joint $\MI{A}{x}$; rather, it tightens a bound on the per-position MI $\sum_j \MI{A_j}{x_j}$.

We adopt this surrogate deliberately. \textbf{(i) Threat-model alignment.} State-of-the-art prompt inversion attacks operate per-position: \citet{luo2025prompt} is a per-position token classifier and \citet{qu2025prompt} performs per-token optimization. By Theorem~\ref{thm:token_acc}, the expected token-level recovery of any such attack is bounded as $\E[\Acc_{\text{token}}] \le \tfrac{1}{n}\sum_j (\MI{A_j}{x_j}+1)/\log|\cV|$, so reducing $\sum_j \MI{A_j}{x_j}$ directly tightens the success bound for the relevant adversary class. \textbf{(ii) Connection to joint MI.} By Lemma~\ref{lem:per_position_bound}, $\sum_j \MI{A}{x_j} \ge \MI{A}{x}$, so the marginal sum upper-bounds the joint MI. Although our adversary estimates $\MI{A_j}{x_j} \le \MI{A}{x_j}$ by the data processing inequality, the architectural argument in Remark~\ref{rem:architectural} shows that residual joint leakage beyond what is captured per-position must be inherited from the frozen backbone rather than created by the defense. Closing the gap via a sequence-level autoregressive adversary $q_\Psi(x_j|A,x_{<j})$ is left to future work.
\end{remark}

\subsection{Per-Position MI Upper Bound}
\label{app:per_position}

\begin{lemma}[Per-Position MI Upper Bound]
\label{lem:per_position_bound}
Let $A=(A_1,\ldots,A_n)\in\R^{n\times d}$ be the protected activation and $x=(x_1,\ldots,x_n)\in\cV^n$ the input prompt. Then
\begin{equation}
\MI{A}{x} \;\le\; \sum_{j=1}^{n} \MI{A}{x_j}.
\end{equation}
\end{lemma}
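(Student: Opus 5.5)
The plan is to expand both sides in entropies and compare them term by term. The comparison needs one structural assumption, and I expect that to be the main obstacle rather than any calculation.

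Write $\MI{A}{x} = \Ent{x} - \Hcond{x}{A}$ and $\sum_j \MI{A}{x_j} = \sum_j \Ent{x_j} - \sum_j \Hcond{x_j}{A}$. Subtracting gives
\begin{equation*}
\sum_{j=1}^n \MI{A}{x_j} - \MI{A}{x} = \Big(\sum_j \Ent{x_j} - \Ent{x}\Big) - \Big(\sum_j \Hcond{x_j}{A} - \Hcond{x}{A}\Big).
\end{equation*}
The first bracket is the total correlation of the prompt tokens. It is nonnegative by subadditivity of entropy. The second bracket is the conditional total correlation of the tokens given $A$, which is also nonnegative. The lemma therefore reduces to showing that the tokens are at least as dependent a priori as they are after observing $A$.

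The difficulty is that this does not hold unconditionally, and I would flag this explicitly. Take $n=2$, let $x_1, x_2$ be i.i.d.\ uniform bits, and let $A = x_1 \oplus x_2$. Then $\MI{A}{x} = 1$ bit, but $\MI{A}{x_1} = \MI{A}{x_2} = 0$. This is the same synergistic effect that lets the chain rule $\MI{A}{x} = \sum_j \MI{A}{x_j \mid x_{<j}}$ exceed the marginal sum. So the proof must rest on an added hypothesis, and I would state it in the lemma.

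The natural choice, matching the per-position adversary $q_\Psi(x|A)=\prod_j q_\Psi(x_j|A_j)$, is that the tokens are conditionally independent given $A$. Under that assumption $\Hcond{x}{A} = \sum_j \Hcond{x_j}{A}$, so the second bracket vanishes. The difference then equals the total correlation $\sum_j \Ent{x_j} - \Ent{x} \ge 0$, which proves the inequality. A weaker sufficient condition, which I would also record, is that the conditional total correlation given $A$ is at most the unconditional one; this is exactly the bracket comparison above.

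With the assumption stated, the remaining steps are routine: verify the entropy identities, and note that equality holds iff the prompt tokens are marginally independent. I would close by checking consistency with Remark~\ref{rem:surrogate}. The assumption is precisely what makes the factorized $q_\Psi$ well specified, so the lemma and the surrogate objective rely on the same modelling hypothesis.
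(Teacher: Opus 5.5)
Your counterexample is correct, and it shows the lemma is false as stated. Nothing in the hypotheses restricts how $A$ depends on $x$, and the example fits the lemma's format: take $A_1=0$ and let $A_2$ encode $x_1\oplus x_2$ (or $x_1+x_2 \bmod |\cV|$ for a general vocabulary). This is even compatible with causal attention.

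The paper's proof writes $I(A;x)=\sum_j I(A;x_j\mid x_{<j})$ and bounds each term by $I(A;x_j)$. It does so using $H(x_j\mid x_{<j})\le H(x_j)$ together with $H(x_j\mid A,x_{<j})\ge H(x_j\mid A)$. The second inequality is the wrong direction of ``conditioning reduces entropy.'' The true inequality $H(x_j\mid A,x_{<j})\le H(x_j\mid A)$ pushes the difference the other way, so the two steps do not combine into an upper bound. Your example pinpoints the failure: $H(x_2\mid A,x_1)=0<1=H(x_2\mid A)$, hence $I(A;x_2\mid x_1)=1>0=I(A;x_2)$.

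Your entropy decomposition is the right repair, and it is sharper than you present it. Write $\mathrm{TC}(x)=\sum_j H(x_j)-H(x)$ for the total correlation and $\mathrm{TC}(x\mid A)$ for its conditional analogue. Then $\sum_j I(A;x_j)-I(A;x)=\mathrm{TC}(x)-\mathrm{TC}(x\mid A)$ is an identity. So $\mathrm{TC}(x\mid A)\le \mathrm{TC}(x)$ is necessary and sufficient for the inequality, not merely a weaker sufficient condition. Your proof under conditional independence given $A$ is correct.

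One consequence is worth recording. Under the uniform prior on $\cX^n$ that the paper uses in its Fano-type bounds, the tokens are independent, so $\mathrm{TC}(x)=0$. The inequality then reverses to $I(A;x)\ge\sum_j I(A;x_j)$. In that setting the lemma can hold only with equality, which happens exactly under your conditional-independence hypothesis.

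Your closing remark slightly overstates things. Conditional independence given $A$ is necessary but not sufficient for $\prod_j q_\Psi(x_j\mid A_j)$ to be well specified; that also requires $p(x_j\mid A)=p(x_j\mid A_j)$.
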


\begin{proof}
By the chain rule of mutual information,
\begin{equation}
\MI{A}{x} = \MI{A}{x_1,\ldots,x_n} = \sum_{j=1}^{n} \MI{A}{x_j \mid x_{<j}}.
\end{equation}
For each term, since conditioning reduces entropy,
\begin{align*}
\MI{A}{x_j \mid x_{<j}}
&= \Hcond{x_j}{x_{<j}} - \Hcond{x_j}{A,x_{<j}} \\
&\le \Ent{x_j} - \Hcond{x_j}{A} = \MI{A}{x_j},
\end{align*}
where the inequality combines $\Hcond{x_j}{x_{<j}} \le \Ent{x_j}$ and $\Hcond{x_j}{A,x_{<j}} \ge \Hcond{x_j}{A}$ (both follow because conditioning on more variables cannot increase entropy). Summing over $j$ yields the claim.
\end{proof}

\begin{remark}[Architectural compatibility of the surrogate]
\label{rem:architectural}
Our adapters $G_i$ are pointwise modules that operate on each position independently; they cannot introduce new cross-position redistribution of token information. Cross-position information flow is inherited from the (frozen, unmodified) attention layers $T_i$ and is therefore the same across all defense configurations. Consequently, any residual joint leakage that would not be captured by $\sum_j \MI{A_j}{x_j}$ comes from the frozen backbone rather than from the defense, ensuring that reductions in our per-position objective reflect genuine information destruction by the bottleneck.
\end{remark}

\subsection{Proof of Theorem~\ref{thm:privacy_bound}}
\label{app:proof-tradeoff}

\begin{proof}
We apply Fano's inequality~\citep{cover2006elements} to bound the reconstruction probability.

\textbf{Step 1: Fano's Inequality.}
For any estimator $\hat{x} = \cA(A)$:
\begin{equation}
\Hcond{x}{A} \leq h_b(P_e) + P_e \log(\abs{\cX^n} - 1),
\end{equation}
where $P_e = \Prob[\hat{x} \neq x]$ and $h_b(\cdot)$ is binary entropy.

\textbf{Step 2: Relate Conditional Entropy to MI.}
By definition, $\Hcond{x}{A} = \Ent{x} - \MI{x}{A}$.

\textbf{Step 3: Combine.}
Using $h_b(P_e) \leq 1$ and $\log(\abs{\cX^n}-1) \leq \log\abs{\cX^n}$:
\begin{equation}
\Ent{x} - \MI{x}{A} \leq 1 + P_e \log\abs{\cX^n}.
\end{equation}

\textbf{Step 4: Solve.}
Under a uniform prior, $\Ent{x} = \log\abs{\cX^n}$. Rearranging:
\begin{equation}
1 - P_e \leq \frac{\MI{x}{A} + 1}{\log\abs{\cX^n}}.
\end{equation}
Since $1 - P_e = \Prob[\cA(A) = x]$, the bound follows.
\end{proof}

\begin{corollary}
\label{cor:privacy}
For fixed prompt-space size $|\cX^n|$, minimizing $\MI{x}{A}$ directly reduces the upper bound on adversary success. As $\MI{x}{A} \to 0$, success probability approaches $\frac{1}{\log|\cX^n|}$, negligible for large prompt spaces.
\end{corollary}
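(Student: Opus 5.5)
The plan is to read the corollary straight off Theorem~\ref{thm:privacy_bound}, treating its right-hand side as a function of the single variable $\MI{x}{A}$ with $|\cX^n|$ held fixed. Define $\beta(t) = (t+1)/\log|\cX^n|$ for $t \ge 0$. Then Theorem~\ref{thm:privacy_bound} reads $\Prob[\cA(A)=x] \le \beta(\MI{x}{A})$ for every $\cA \in \cA_{\PIA}$.

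\textbf{Step 1 (monotonicity).} Since $|\cX^n| \ge 2$, the map $\beta$ is affine with strictly positive slope $1/\log|\cX^n|$. It is therefore strictly increasing. Any configuration of adapters $\{G_i\}$ that lowers $\MI{x}{A}$ lowers the universal success bound by exactly $\Delta I/\log|\cX^n|$, uniformly over all attacks. This is the first claim.

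\textbf{Step 2 (limit).} Mutual information is nonnegative, so $t$ ranges over $[0,\infty)$. By continuity of $\beta$, as $\MI{x}{A} \to 0$ the bound tends to $\beta(0) = 1/\log|\cX^n|$. I would then write $\log|\cX^n| = n\log|\cV|$ to make the ``negligible'' claim concrete: the limiting bound is $1/(n\log|\cV|)$, which vanishes as the prompt length or vocabulary grows.

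\textbf{Caveats.} I would state explicitly two assumptions that the corollary inherits from the proof of Theorem~\ref{thm:privacy_bound}. First, it assumes the uniform prior $\Ent{x} = \log|\cX^n|$. Second, it assumes logarithms in base $2$, so that $h_b \le 1$.

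The only real obstacle is interpretive rather than technical: what ``approaches'' refers to. It is the Fano \emph{bound} that approaches $1/\log|\cX^n|$, not the true success probability. At exactly zero MI, $A$ is independent of $x$, so under the uniform prior the optimal attack succeeds with probability $1/|\cX^n|$, which is much smaller than $1/\log|\cX^n|$. I would add a one-line remark noting that the true success probability is in fact even smaller, so the corollary's conclusion remains valid a fortiori.
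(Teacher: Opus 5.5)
Your proposal is correct and takes the same approach as the paper: the paper gives no separate proof and reads the corollary directly off Theorem~\ref{thm:privacy_bound}, whose right-hand side is increasing and affine in $\MI{x}{A}$ with limit $1/\log|\cX^n|$, inheriting the uniform-prior and base-2 assumptions from that proof. Your remark is also correct and sharpens the statement's loose wording: it is the Fano bound, not the success probability itself, that tends to $1/\log|\cX^n|$, and the true optimal success at zero mutual information is only $1/|\cX^n|$.
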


\subsection{Bottleneck Mutual Information Bound}
\label{app:proof-bottleneck}

We now formalize the bottleneck argument sketched in Eq.~\eqref{eq:bottleneck_bound} of the main text.

\begin{theorem}[Bottleneck MI Bound]
\label{thm:bottleneck_mi}
Let $G_i(\mathbf{h}) = W_{\text{up}}^{(i)} \cdot \sigma(W_{\text{down}}^{(i)} \cdot \mathbf{h})$ be a privacy adapter with bottleneck dimension $r$. Let $\mathbf{z}^{(i)} = \sigma(W_{\text{down}}^{(i)} \cdot \mathbf{h}^{(i)}) \in \mathbb{R}^r$ denote the bottleneck representation at position $j$. Then by the data processing inequality:
\begin{equation}
I(A_j; x_j) \leq I(\mathbf{z}_j^{(k)}; x_j) \leq H(\mathbf{z}_j^{(k)}).
\end{equation}
For bounded activations with $\mathbf{z}_j^{(k)} \in [-B, B]^r$, the entropy is bounded as:
\begin{equation}
I(A_j; x_j) \leq r \cdot \log(2B / \epsilon),
\end{equation}
where $\epsilon$ is the precision of the representation. Thus, smaller bottleneck dimension $r$ directly constrains the information capacity of the transmitted activation.
\end{theorem}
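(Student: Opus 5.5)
The plan is to prove the two displayed inequalities in sequence: a Markov-chain argument, then a counting bound on the discretized bottleneck.

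\textbf{Step 1 (Markov chain).} For a fixed position $j$, the composition in Eq.~\eqref{eq:interleaved} gives the chain
\[
x_j \to x \to \mathbf{h}^{(k-1)} \to T_k(\mathbf{h}^{(k-1)}) \to \mathbf{z}^{(k)}_j \to A_j .
\]
All maps after the input are deterministic. The last one is $A_j = \LN(W_{\text{up}}^{(k)} \mathbf{z}^{(k)}_j)$ (or $W_{\text{up}}^{(k)}\mathbf{z}^{(k)}_j$ in the simplified form of the theorem), and it acts positionwise. So $A_j$ is a measurable function of $\mathbf{z}^{(k)}_j$ alone.

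This positionwise claim is exactly where the no-residual design matters. With a residual, $A_j$ would also depend on $T_k(\mathbf{h}^{(k-1)})_j$, and the chain would break.

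Applying the data processing inequality to $x_j \to \mathbf{z}^{(k)}_j \to A_j$ gives $I(A_j;x_j)\le I(\mathbf{z}^{(k)}_j;x_j)$.

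\textbf{Step 2 (MI to entropy).} Since $\mathbf{z}^{(k)}_j$ is a deterministic function of $x$, we have
\[
I(\mathbf{z}^{(k)}_j;x_j) = H(\mathbf{z}^{(k)}_j) - H(\mathbf{z}^{(k)}_j\mid x_j) \le H(\mathbf{z}^{(k)}_j),
\]
because conditional entropy is nonnegative. This step needs $H$ to be a discrete (Shannon) entropy. Differential entropy can be negative, and for a continuous $\mathbf{z}$ it can also make the MI infinite.

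\textbf{Step 3 (quantization and counting).} I would treat finite precision explicitly. Each coordinate of $\mathbf{z}^{(k)}_j$ is assumed to lie in $[-B,B]$ and to be represented on a grid of spacing $\epsilon$, as in floating or fixed point arithmetic. Each coordinate then takes at most $\lceil 2B/\epsilon\rceil$ values, so $\mathbf{z}^{(k)}_j$ takes at most $(2B/\epsilon)^r$ values, up to rounding.

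Since entropy is at most the log of the support size, $H(\mathbf{z}^{(k)}_j)\le r\log(2B/\epsilon)$. Chaining with Steps 1 and 2 yields the claim. Summing over $j$ and invoking Theorem~\ref{thm:token_acc} and Lemma~\ref{lem:per_position_bound} then gives the corollary-style bound mentioned in the main text.

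\textbf{Main obstacle.} The hard part is making Step 3 rigorous rather than heuristic, because the true activations are real-valued. I would first state the result for the quantized representation $\mathbf{z}^{(k)}_j$ that is actually computed and transmitted. I would then note that any quantization applied after the fact is itself a further processing step, so the DPI in Step 1 still holds.

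Two side issues need care. First, the bound $B$ must hold: GELU is bounded below but not above, so $B$ must come from the preceding $\LN$ and the norm of $W_{\text{down}}$. Second, the rounding of $2B/\epsilon$ to an integer should be absorbed, for example by writing $\log(2B/\epsilon+1)$, or by assuming that $\epsilon$ divides $2B$.
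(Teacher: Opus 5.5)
Your proposal is correct and follows the paper's route: the data processing inequality through the positionwise, residual-free map $\mathbf{z}^{(k)}_j \mapsto A_j$, then $I(\mathbf{z}^{(k)}_j;x_j)\le H(\mathbf{z}^{(k)}_j)$, then a count of at most $(2B/\epsilon)^r$ quantization cells---you are simply more careful than the paper about taking the adapter input to be $T_k(\mathbf{h}^{(k-1)})$, about needing discrete Shannon entropy, and about rounding. One caveat concerns only your closing aside: Lemma~\ref{lem:per_position_bound} and Theorem~\ref{thm:token_acc} are stated with $I(A;x_j)$ for the full activation $A$, which can exceed $I(A_j;x_j)$, so the per-position bounds cannot be summed through them; for the corollary, argue directly that $I(A;x)\le H(A)\le H(\mathbf{z}^{(k)}_1,\ldots,\mathbf{z}^{(k)}_n)\le\sum_{j} H(\mathbf{z}^{(k)}_j)\le n r\log(2B/\epsilon)$.
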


\begin{proof}
The first inequality follows from the data processing inequality applied to the Markov chain $x_j \to \mathbf{h}^{(k)}_j \to \mathbf{z}_j^{(k)} \to A_j$, where $A_j = W_{\text{up}}^{(k)} \mathbf{z}_j^{(k)}$ is a deterministic function of $\mathbf{z}_j^{(k)}$. The second follows from the entropy bound $H(\mathbf{z}_j^{(k)}) \leq r \cdot \log(2B/\epsilon)$ for $r$-dimensional bounded continuous variables discretized at precision $\epsilon$.
\end{proof}

\begin{corollary}[Bottleneck Dimension Controls Privacy]
\label{cor:bottleneck}
Combining Theorems~\ref{thm:privacy_bound} and~\ref{thm:bottleneck_mi}, for any attack $\cA \in \cA_{\PIA}$:
\begin{equation}
\Prob[\cA(A) = x] \leq \frac{n \cdot r \cdot \log(2B/\epsilon) + 1}{\log |\cX^n|}.
\end{equation}
This bound decreases with smaller bottleneck dimension $r$, formalizing the intuition that narrower bottlenecks provide stronger privacy guarantees.
\end{corollary}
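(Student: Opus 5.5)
The corollary follows by chaining Theorem~\ref{thm:privacy_bound} with a sequence-level version of Theorem~\ref{thm:bottleneck_mi}. Theorem~\ref{thm:privacy_bound} already gives, for every $\cA \in \cA_{\PIA}$, the bound $\Prob[\cA(A)=x] \le (\MI{x}{A}+1)/\log|\cX^n|$. The right-hand side is increasing in $\MI{x}{A}$. So it suffices to show $\MI{x}{A} \le n\, r \log(2B/\epsilon)$ and substitute this into the numerator.

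To bound the joint MI, I will not go through the per-position quantities $\MI{A_j}{x_j}$. Summing those would require Lemma~\ref{lem:per_position_bound}, which controls $\MI{A}{x}$ only by $\sum_j \MI{A}{x_j}$, not by $\sum_j \MI{A_j}{x_j}$. Instead I will work directly with the stacked bottleneck code $\mathbf{Z} = (\mathbf{z}^{(k)}_1,\ldots,\mathbf{z}^{(k)}_n) \in \R^{n\times r}$ produced by the last adapter. Because $G_k$ acts pointwise, $A = (W_{\text{up}}^{(k)}\mathbf{z}^{(k)}_j)_{j=1}^n$, possibly followed by a positionwise LayerNorm, and this is a deterministic function of $\mathbf{Z}$. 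Hence $x \to \mathbf{Z} \to A$ is a Markov chain.

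By the data processing inequality, $\MI{x}{A} \le \MI{x}{\mathbf{Z}} \le H(\mathbf{Z})$. Subadditivity of entropy then gives $H(\mathbf{Z}) \le \sum_{j=1}^n H(\mathbf{z}^{(k)}_j)$. Each term is at most $r\log(2B/\epsilon)$ by the counting bound from Theorem~\ref{thm:bottleneck_mi}: an $\epsilon$-precision grid on $[-B,B]^r$ has at most $(2B/\epsilon)^r$ cells, and entropy is bounded by the log of the support size. Summing over the $n$ positions yields $\MI{x}{A} \le n r \log(2B/\epsilon)$, and plugging this into Theorem~\ref{thm:privacy_bound} gives the stated inequality. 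I will also inherit the uniform-prior assumption $\Ent{x} = \log|\cX^n|$ used in the proof of Theorem~\ref{thm:privacy_bound}. Monotonicity in $r$ is then immediate, since the numerator is linear in $r$.

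The main obstacle is making $H(\mathbf{z}_j)$ meaningful: for continuous $\mathbf{z}_j$, differential entropy does not bound mutual information. I will handle this by modeling the transmitted representation as quantized at precision $\epsilon$ (finite-precision arithmetic), so every $\mathbf{z}_j$ is discrete with support of size at most $(2B/\epsilon)^r$. In that model the data-processing and entropy steps are rigorous, and the boundedness assumption $\mathbf{z}^{(k)}_j \in [-B,B]^r$ is the only further hypothesis needed.
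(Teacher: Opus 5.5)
Your proof is correct, and it takes a more careful route than the paper's. The paper gives no separate argument for the corollary. It takes the per-position conclusion of Theorem~\ref{thm:bottleneck_mi}, $I(A_j;x_j)\le r\log(2B/\epsilon)$, multiplies by $n$, and substitutes into Theorem~\ref{thm:privacy_bound}. That step tacitly assumes $I(A;x)\le\sum_j I(A_j;x_j)$, which Lemma~\ref{lem:per_position_bound} does not provide: the lemma gives $\sum_j I(A;x_j)$, and $I(A_j;x_j)\le I(A;x_j)$ points the wrong way. The assumed inequality can also fail once attention moves token information across positions. For example, take $n=2$ with independent uniform tokens, $A_1$ constant and $A_2$ encoding $x_1$. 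Both per-position terms vanish, yet $I(A;x)>0$.

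You avoid this by applying data processing to the stacked code $\mathbf{Z}$ of the last adapter and then using subadditivity of entropy. This uses only the entropy-counting half of Theorem~\ref{thm:bottleneck_mi}, not its MI statement, and it is what legitimately produces the factor $n$. Going through Lemma~\ref{lem:per_position_bound} would only give $n^2 r\log(2B/\epsilon)$, since each $I(A;x_j)$ can be bounded only by $H(\mathbf{Z})$.

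Your explicit quantization model is also the right reading of the ``precision $\epsilon$'' hypothesis. Because $x$ ranges over a finite set, an unquantized code of any dimension $r\ge 1$ could be injective on prompts. So the counting bound requires that $A$ actually be computed from the $\epsilon$-grid values of $\mathbf{Z}$, as you stipulate.

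In short, the paper's version is a one-line combination that rests on an unjustified aggregation step. Yours is a valid derivation under the same two assumptions: a uniform prior, and a bounded, quantized bottleneck.
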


% \subsection{Proof of Theorem~\ref{thm:residual} (Residual MI Invariance)}
% \label{app:proof-residual}

% \begin{proof}
% The mapping $\phi: \mathbf{h} \mapsto \mathbf{h} + \boldsymbol{\delta}(\mathbf{h})$ is a deterministic function of $\mathbf{h}$. If $\boldsymbol{\delta}$ is continuous and has bounded gradient (as is the case for neural network adapters with bounded weights), then $\phi$ is invertible in a neighborhood of each point. For an invertible transformation $\phi$, the data processing inequality holds with equality: $I(\phi(\mathbf{h}); x) = I(\mathbf{h}; x)$. Even when $\phi$ is not globally invertible, we have $I(\phi(\mathbf{h}); x) \leq I(\mathbf{h}; x)$ with equality when $\phi$ is injective. Since $\phi(\mathbf{h}) = \mathbf{h} + \boldsymbol{\delta}(\mathbf{h})$ contains $\mathbf{h}$ as a component, any adversary can approximate $\mathbf{h}$ from $\phi(\mathbf{h})$, yielding $I(\phi(\mathbf{h}); x) \geq I(\mathbf{h}; x)$. Combined: $I(\mathbf{h} + \boldsymbol{\delta}(\mathbf{h}); x) = I(\mathbf{h}; x)$.
% \end{proof}

% \begin{remark}
% For \emph{stochastic} perturbations $\boldsymbol{\delta} \sim p(\boldsymbol{\delta})$ independent of $\mathbf{h}$, the data processing inequality gives $I(\mathbf{h} + \boldsymbol{\delta}; x) \leq I(\mathbf{h}; x)$, and MI can decrease. However, stochastic noise at inference time introduces non-deterministic model behavior, making outputs unreproducible---a significant drawback for deployed systems. Our bottleneck approach achieves MI reduction deterministically.
% \end{remark}

\subsection{Proof of Theorem~\ref{thm:residual} (Residual MI Invariance)}
\label{app:proof-residual}

\begin{proof}
Suppose $\phi: \mathbf{h} \mapsto \mathbf{h} + \boldsymbol{\delta}(\mathbf{h})$ is injective. Then there exists a measurable left-inverse $\phi^{-1}: \phi(\R^d) \to \R^d$ satisfying $\phi^{-1}(\phi(\mathbf{h})) = \mathbf{h}$ for all $\mathbf{h}$.

Applying the data processing inequality to the Markov chain $x \to \mathbf{h} \to \phi(\mathbf{h})$:
\begin{equation}
I(\phi(\mathbf{h}); x) \leq I(\mathbf{h}; x).
\end{equation}
Conversely, applying the data processing inequality to the chain $x \to \phi(\mathbf{h}) \to \phi^{-1}(\phi(\mathbf{h})) = \mathbf{h}$:
\begin{equation}
I(\mathbf{h}; x) = I(\phi^{-1}(\phi(\mathbf{h})); x) \leq I(\phi(\mathbf{h}); x).
\end{equation}
Combining the two inequalities yields $I(\phi(\mathbf{h}); x) = I(\mathbf{h}; x)$.

\textbf{Sufficient condition for injectivity.} If $\boldsymbol{\delta}$ is $L$-Lipschitz with $L < 1$, then for any $\mathbf{h}_1 \neq \mathbf{h}_2$, the reverse triangle inequality gives
\begin{align}
\|\phi(\mathbf{h}_1) - \phi(\mathbf{h}_2)\| 
&\geq \|\mathbf{h}_1 - \mathbf{h}_2\| - \|\boldsymbol{\delta}(\mathbf{h}_1) - \boldsymbol{\delta}(\mathbf{h}_2)\| \nonumber \\
&\geq (1 - L)\|\mathbf{h}_1 - \mathbf{h}_2\| > 0,
\end{align}
so $\phi$ is injective (in fact, bi-Lipschitz with constants $1-L$ and $1+L$).
\end{proof}

\begin{remark}[Scope and counterexamples]
\label{rem:residual_scope}
Theorem~\ref{thm:residual} requires injectivity and does not extend to all deterministic $\boldsymbol{\delta}$. Two illustrative counterexamples: (i) $\boldsymbol{\delta}(\mathbf{h}) = -\mathbf{h}$ yields $\phi(\mathbf{h}) \equiv \mathbf{0}$, giving $I(\phi(\mathbf{h}); x) = 0$; (ii) any linear residual $\phi(\mathbf{h}) = (I + B)\mathbf{h}$ with $I + B$ singular collapses information along the null space of $I + B$. Our claim is restricted to the operating regime of residual defenses proposed in the literature: LoRA-style adapters with standard small initialization, learned additive perturbations with bounded weights, and bounded-norm noise mechanisms all satisfy $L < 1$ Lipschitz in practice and therefore fall under the theorem. The empirical finding (Section~\ref{sec:experiments}) that residual adapters yield $>85\%$ fresh-attacker recovery is consistent with this restricted regime.
\end{remark}

\begin{remark}[Stochastic perturbations]
\label{rem:stochastic}
For \emph{stochastic} perturbations $\boldsymbol{\delta} \sim p(\boldsymbol{\delta})$ independent of $\mathbf{h}$, the data processing inequality gives $I(\mathbf{h} + \boldsymbol{\delta}; x) \leq I(\mathbf{h}; x)$, and MI can strictly decrease. Stochastic noise is therefore not covered by Theorem~\ref{thm:residual}. However, stochastic noise at inference time introduces non-deterministic model behavior, making outputs unreproducible---a significant drawback for deployed systems. Our bottleneck $G_i$ achieves MI reduction deterministically by replacing rather than perturbing the activation, mapping $\R^d \to \R^r$ with $r \ll d$ (a non-injective compression).
\end{remark}

\subsection{Surrogate Loss Bounds}
\label{app:surrogate}

\begin{proposition}[Privacy Surrogate Bounds MI]
\label{prop:privacy_surrogate}
Let $A$ be the protected activation and $E(x)$ the prompt embedding. Then
\begin{equation}
I(A; x) \leq I(A; E(x))
\end{equation}
by the data processing inequality, since $x \to E(x) \to A$ forms a Markov chain. For approximately Gaussian activations, $I(A; E(x))$ is monotonically increasing in $|\text{CosSim}(A, E(x))|$, so minimizing cosine similarity reduces the upper bound on $I(A; x)$.
\end{proposition}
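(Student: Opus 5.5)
The proof has two parts. The first part (the data processing inequality) is a rigorous argument. The second part (monotonicity in cosine similarity) is a modelling statement under an explicit Gaussian assumption.

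\textbf{Part 1: $\MI{A}{x} \le \MI{A}{E(x)}$.} By Eq.~\eqref{eq:interleaved}, $A = (G_k \circ T_k \circ \cdots \circ G_1 \circ T_1)(E(x))$ is a deterministic function of $E(x)$. Hence, conditional on $E(x)$, $A$ is independent of $x$, and $x \to E(x) \to A$ is a Markov chain. The data processing inequality, used exactly as in the proof of Theorem~\ref{thm:residual}, then gives $\MI{A}{x} \le \MI{A}{E(x)}$. I would also note that the embedding lookup is injective on $\cX^n$ whenever the rows of the embedding matrix are distinct. By the same two-sided argument as in Theorem~\ref{thm:residual}, equality then holds. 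So $\MI{A}{E(x)}$ is not merely an upper bound but equals $\MI{A}{x}$, which only strengthens the inequality.

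\textbf{Part 2: monotonicity in $|\text{CosSim}|$.} Here I would make the Gaussian assumption explicit. I would center and normalize both $A$ and $E(x)$ (per position) and assume the pair $(A, E(x))$ is jointly Gaussian with a common correlation coefficient $\rho$ along each matched coordinate direction. For centered, unit-norm vectors, the empirical Pearson correlation coincides with the cosine similarity, so I identify $\rho$ with $\text{CosSim}(A, E(x))$.

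For a jointly Gaussian scalar pair, the closed form is $I = -\tfrac12 \log(1-\rho^2)$. Its derivative in $\rho^2$ is $\tfrac{1}{2(1-\rho^2)} > 0$, so $I$ is strictly increasing in $|\rho|$ on $[0,1)$.

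\textbf{Main obstacle: the vector-valued case.} In general, the Gaussian mutual information is $-\tfrac12\sum_i \log(1-\rho_i^2)$, summed over the canonical correlations $\rho_i$. A single cosine similarity does not determine this sum. I would first try an isotropy assumption: take the cross-covariance proportional to the identity, so that all canonical correlations equal $\rho$. This gives $I = -\tfrac{m}{2}\log(1-\rho^2)$, which is again monotone in $|\rho|$. I would state plainly that without such an assumption the claim holds only heuristically. Lowering cosine similarity bounds just the leading canonical correlation, not the total mutual information.

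\textbf{Conclusion.} Chaining Parts 1 and 2: minimizing $|\text{CosSim}(A,E(x))|$ lowers $\MI{A}{E(x)}$, and $\MI{A}{E(x)}$ upper-bounds $\MI{A}{x}$. This completes the proposition within the stated approximation.
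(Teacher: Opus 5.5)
Your argument is essentially the paper's own: the data processing inequality on $x \to E(x) \to A$ gives the bound, and the bivariate-Gaussian closed form $-\tfrac12\log(1-\rho^2)$, which is increasing in $|\rho|$, gives the cosine-similarity claim, with your isotropy assumption making explicit what the paper leaves implicit. Two refinements: the equality $I(A;x)=I(A;E(x))$ holds even without injectivity of the embedding (since $E(x)$ is a function of $x$, $A \to x \to E(x)$ is also Markov); and your remark that lowering cosine similarity bounds the leading canonical correlation is backwards---matched-coordinate correlation only lower-bounds it---so outside the isotropic model the claim fails outright (e.g.\ $A = RE(x)$ with $R$ a block $90^\circ$ rotation has $\text{CosSim}(A,E(x))=0$, yet $R$ is injective, so by the argument of Theorem~\ref{thm:residual} $A$ leaks exactly as much as $E(x)$).
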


\begin{proposition}[Utility Surrogate Preserves Log-Likelihood]
\label{prop:utility_surrogate}
Let $A_{\text{clean}} = F_1(E(x))$. If $\text{CosSim}(A, A_{\text{clean}}) \geq 1 - \epsilon$, then
\begin{equation}
|\E[\log p_{\text{LM}}(y|F_2(A))] - \E[\log p_{\text{LM}}(y|F_2(A_{\text{clean}}))]| \leq L \cdot \|A - A_{\text{clean}}\|_2,
\end{equation}
where $L$ is the Lipschitz constant of $\log p_{\text{LM}} \circ F_2$. Maximizing cosine similarity therefore preserves the variational lower bound on $I(A; y)$.
\end{proposition}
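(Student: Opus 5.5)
Because $\log p_{\text{LM}}(y \mid F_2(\cdot))$ is assumed $L$-Lipschitz in $A$, the plan is a pointwise Lipschitz estimate followed by averaging over the joint law of $(x,y)$. Write $g(A,y) := \log p_{\text{LM}}(y\mid F_2(A))$. For every prompt $x$ with clean activation $A_{\text{clean}} = F_1(E(x))$ and protected activation $A$, and every target $y$, the assumption gives $|g(A,y) - g(A_{\text{clean}},y)| \le L\,\|A - A_{\text{clean}}\|_2$.

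First, subtract the two expectations under the common joint distribution $p(x,y)$; this is valid because both $A$ and $A_{\text{clean}}$ are deterministic functions of $x$. Second, apply Jensen's inequality, $|\E[\cdot]| \le \E[|\cdot|]$, and the pointwise bound. This yields $L\,\E\|A - A_{\text{clean}}\|_2$. The stated inequality should be read with this expectation, or as a uniform bound if $\|A - A_{\text{clean}}\|_2$ is bounded almost surely by the same quantity.

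The step I expect to be the main obstacle is connecting the cosine hypothesis to the norm. Cosine similarity alone does not control $\|A - A_{\text{clean}}\|_2$, since it ignores scale. Here I would use the output $\LN$ in Eq.~\eqref{eq:adapter_structure}: after LayerNorm (with unit gain, or gain absorbed into $L$), each position vector has norm $\sqrt{d}$. Any frozen clean activation with a final normalization likewise has fixed norm, say $\rho$. Then
\begin{equation*}
\|A_j - A_{\text{clean},j}\|_2^2 = 2\rho^2\bigl(1 - \text{CosSim}(A_j, A_{\text{clean},j})\bigr) \le 2\rho^2 \epsilon ,
\end{equation*}
so $\|A - A_{\text{clean}}\|_2 \le \rho\sqrt{2n\epsilon}$. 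Where norms are not fixed, I would state this as an extra assumption. This gives the explicit bound $L\rho\sqrt{2n\epsilon}$, which vanishes as $\epsilon\to 0$.

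Finally, Lemma~\ref{lem:utility_lb} gives $\MI{A}{y} \ge \E[\log p_{\text{LM}}(y\mid F_2(A))] + \Ent{y}$, and $\Ent{y}$ does not depend on the defense. The variational lower bound for the protected activation therefore differs from the clean one by at most $L\rho\sqrt{2n\epsilon}$. Driving CosSim to $1$ thus preserves the utility lower bound, which is the claimed interpretation.
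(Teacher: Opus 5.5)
Your argument for the displayed inequality is correct, and it matches the only argument available. The paper states this proposition without proof, and the inequality is just the $L$-Lipschitz hypothesis applied pointwise in $(x,y)$ and then averaged, exactly as in your first two steps. Reading the right-hand side as $L\,\E\|A-A_{\text{clean}}\|_2$ (or as an almost-sure bound) is the right repair, since as written a random quantity is bounding a number. You are also right that the hypothesis $\text{CosSim}(A,A_{\text{clean}})\ge 1-\epsilon$ is never used. So the closing ``therefore'' clause of the proposition does not follow from the statement without some control of norms.

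Your cosine-to-norm step, which supplies that control, needs one correction. The identity $\|a-b\|_2^2=2\rho^2\bigl(1-\text{CosSim}(a,b)\bigr)$ requires $\|a\|_2=\|b\|_2=\rho$. If the norms are fixed but different, $\rho_1\neq\rho_2$, then $\|a-b\|_2^2=(\rho_1-\rho_2)^2+2\rho_1\rho_2\bigl(1-\text{CosSim}(a,b)\bigr)$, which does not vanish as $\epsilon\to 0$. So you must assume $\|A_j\|_2=\|A_{\text{clean},j}\|_2$ at every position, not merely that each side is normalized; with a unit-gain output LayerNorm you tacitly need $\rho=\sqrt{d}$.

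This is not automatic in the paper's setting. For the pre-norm backbones used (LLaMA-2, Mistral), $A_{\text{clean}}$ is the unnormalized residual stream. A learned LayerNorm gain vector or bias also makes $\|A_j\|_2$ input-dependent. That cannot be absorbed into $L$, because the cosine hypothesis is imposed on $A$ itself, not on the normalized vector before the affine map. With equal norms stated as an explicit assumption, your bound $L\rho\sqrt{2n\epsilon}$ holds, and so does the comparison of the two variational lower bounds via Lemma~\ref{lem:utility_lb}.
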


\subsection{Utility--Privacy Tradeoff}
\label{app:proof-tradeoff-thm}

\begin{theorem}[Inherent Utility-Privacy Tradeoff]
\label{thm:tradeoff}
For the Markov chain $x \to A \to y$ where $A = (G_k \circ T_k \circ \cdots \circ G_1 \circ T_1)(\Emb(x))$ and $y = x_{n+1}$:
\begin{equation}
I(A; y) \leq I(A; x) \leq H(x).
\end{equation}
Thus, decreasing privacy leakage $I(A; x)$ constrains the maximum achievable utility $I(A; y)$.
\end{theorem}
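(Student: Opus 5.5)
The plan is to prove the two inequalities separately, using only the chain rule for mutual information and the fact that $A$ is a deterministic function of $x$. I will not rely literally on the chain $x \to A \to y$ as written. Since $y = x_{n+1}$ is not computed from $A$, the structurally correct statement is the chain $y \to x \to A$. It holds because $A = (G_k \circ T_k \circ \cdots \circ G_1 \circ T_1)(\Emb(x))$ depends only on $x$, so conditionally on $x$ the activation $A$ is constant and hence independent of $y$. That gives $I(A; y \mid x) = 0$, which is all the argument needs.

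\textbf{First inequality.} I will expand $I(A; x, y)$ with the chain rule in two ways:
\begin{equation*}
I(A; y) + I(A; x \mid y) \;=\; I(A; x, y) \;=\; I(A; x) + I(A; y \mid x).
\end{equation*}
The right-hand side reduces to $I(A;x)$ because $I(A; y \mid x) = 0$. On the left, $I(A; x \mid y) \ge 0$, so $I(A;y) \le I(A;x)$. Equivalently, this is the data processing inequality for $y \to x \to A$, giving $I(A;y) \le I(x;y)$, refined to $I(A;y) \le I(A;x)$ through the joint-MI decomposition above.

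\textbf{Second inequality.} The prompt $x$ takes values in the finite set $\cX^n$, so $I(A;x) = H(x) - H(x \mid A)$ is well defined even though $A$ is continuous. Conditional Shannon entropy of a discrete variable is nonnegative, so $H(x \mid A) \ge 0$ and therefore $I(A;x) \le H(x)$. Chaining the two bounds gives $I(A;y) \le I(A;x) \le H(x)$. The interpretive sentence in the statement then follows: any reduction of $I(A;x)$ lowers the ceiling on $I(A;y)$.

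\textbf{Main obstacle.} The only delicate point is rigor with a continuous $A \in \R^{n\times d}$ mixed with discrete $x$ and $y$. I will handle it by using the general (measure-theoretic, Kullback--Leibler based) definition of mutual information. Under that definition the chain rule and nonnegativity of conditional MI remain valid. Writing every quantity as MI with a discrete variable, and entropies only of the discrete $x$, avoids any differential-entropy pathologies. I also need to justify $I(A; y\mid x) = 0$, which uses only the determinism of the client-side pipeline, with no stochastic noise in $G_i$.
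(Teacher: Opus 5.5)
Your proposal is correct and follows the paper's proof. Both arguments expand $I(A;x,y)$ by the chain rule in the two orders, use $I(A;y\mid x)=0$ and $I(A;x\mid y)\ge 0$ to get $I(A;y)\le I(A;x)$, and finish with $I(A;x)\le H(x)$. Your justification of $I(A;y\mid x)=0$ is the sounder one: the paper says $y=x_{n+1}$ is deterministic given $x=(x_1,\ldots,x_n)$, which is not literally true, whereas you use that $A$ is a deterministic function of $x$ (the chain $y\to x\to A$). Your measure-theoretic worry is also unnecessary, since $A$ is a function of the finitely-valued $x$ and is therefore itself discrete.
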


\begin{proof}
\textbf{Step 1: Bound on Utility.}
By the data processing inequality on the Markov chain $x \to A \to y$:
\begin{equation}
I(A; y) \leq I(x; y) \leq H(y).
\end{equation}

\textbf{Step 2: Privacy-utility coupling.}
By the chain rule, $I(A; x, y) = I(A; x) + I(A; y \mid x)$. Since $y = x_{n+1}$ is deterministic given $x$, $I(A; y \mid x) = 0$, so
\begin{equation}
I(A; x, y) = I(A; x).
\end{equation}
Applying the chain rule in the other order yields
\begin{equation}
I(A; x) = I(A; y) + I(A; x \mid y).
\end{equation}
Since $I(A; x \mid y) \geq 0$:
\begin{equation}
I(A; y) \leq I(A; x) \leq H(x).
\end{equation}
Utility cannot exceed privacy leakage. The tradeoff parameter $\lambda$ in Eq.~\eqref{eq:combined_objective} controls this allocation.
\end{proof}

\begin{corollary}
\label{cor:tradeoff}
For a fixed prompt entropy $\Ent{x}$, increasing utility $\MI{A}{y}$ necessarily increases potential leakage $\MI{A}{x}$. The tradeoff parameter $\lambda$ and the bottleneck dimension $r$ together control how the available information budget is allocated between task utility and privacy protection.
\end{corollary}

\subsection{Token-Level Reconstruction Accuracy}
\label{app:proof-token-bound}

\begin{theorem}[Token Reconstruction Accuracy Bound]
\label{thm:token_acc}
For any prompt inversion attack $\mathcal{A} \in \mathcal{A}_{PIA}$, the expected token accuracy is bounded as:
\begin{equation}
\mathbb{E}[\text{Acc}_{\text{token}}] \leq \frac{1}{n} \sum_{i=1}^{n} \frac{I(A; x_i) + 1}{\log |\mathcal{V}|},
\end{equation}
where $|\mathcal{V}|$ is the vocabulary size.
\end{theorem}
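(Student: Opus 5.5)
The plan is to apply Fano's inequality position by position, exactly as in the proof of Theorem~\ref{thm:privacy_bound}, and then average over positions using linearity of expectation.

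\textbf{Linearity.} By Definition~\ref{def:token_accuracy},
\[
\E[\Acc_{\text{token}}] = \frac{1}{n}\sum_{i=1}^n \Prob[\hat{x}_i = x_i], \qquad \hat{x} = \cA(A).
\]
It therefore suffices to show $\Prob[\hat{x}_i = x_i] \le (\MI{A}{x_i}+1)/\log|\cV|$ for each fixed $i$.

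\textbf{Per-position Fano.} For fixed $i$, the map $A \mapsto \hat{x}_i$ is a deterministic function of $A$ (the $i$-th coordinate of $\cA$). Hence $x_i \to A \to \hat{x}_i$ is a Markov chain and $\hat{x}_i$ is an estimator of $x_i$ with values in $\cV$. Let $P_e^{(i)} = \Prob[\hat{x}_i \neq x_i]$. Fano's inequality gives
\[
\Hcond{x_i}{A} \le h_b(P_e^{(i)}) + P_e^{(i)}\log(|\cV|-1) \le 1 + P_e^{(i)}\log|\cV|.
\]
Writing $\Hcond{x_i}{A} = \Ent{x_i} - \MI{A}{x_i}$ and rearranging, as in Steps~2--4 of the proof of Theorem~\ref{thm:privacy_bound}, yields
\[
1 - P_e^{(i)} \le \frac{\MI{A}{x_i} + 1 + \log|\cV| - \Ent{x_i}}{\log|\cV|}.
\]
Summing over $i$ and dividing by $n$ then completes the proof.

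\textbf{Main obstacle.} The difficulty is the term $\log|\cV| - \Ent{x_i}$. It vanishes only when each token marginal is uniform on $\cV$, which is the token-level analogue of the uniform-prior assumption used in Theorem~\ref{thm:privacy_bound}. I would adopt that assumption explicitly, taking each $x_i$ marginally uniform on $\cV$ (in particular, this holds if $x$ is uniform on $\cX^n = \cV^n$). Then $\Ent{x_i} = \log|\cV|$ and the bound follows exactly.

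Without this assumption, the argument gives the more general bound with $\log|\cV|$ replaced in the numerator by $\log|\cV| - \Ent{x_i} + \MI{A}{x_i} + 1$. I would state that version in a remark, noting that it is what Corollary~\ref{cor:sensitive} implicitly uses for non-uniform sensitive tokens. One further check: $\MI{A}{x_i}$ uses the full activation $A$, not $A_i$. This is what Fano requires, since $\hat{x}_i$ may depend on all of $A$. Per-position attackers only see $A_i$, so for them the bound can be sharpened to $\MI{A_i}{x_i}$ by the data processing inequality.
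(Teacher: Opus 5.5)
Your proposal is correct and follows the paper's proof essentially step for step: per-position Fano with the estimator $\hat{x}_i = \cA(A)_i$, the identity $\Hcond{x_i}{A} = \Ent{x_i} - \MI{A}{x_i}$, the uniform-prior substitution $\Ent{x_i} = \log|\cV|$, and averaging over the $n$ positions. Flagging the uniform-marginal assumption explicitly is apt, because the paper invokes it only inside the proof (``under uniform prior'') and not in the theorem statement; your general version, with the extra term $\log|\cV| - \Ent{x_i}$ in the numerator, is the correct bound when that assumption is dropped.
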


\begin{proof}
For position $i$, the attacker produces $\hat{x}_i = \mathcal{A}(A)_i$ with accuracy $\text{Acc}_i = 1 - P_{e,i}$. Applying Fano's inequality per-token:
\begin{equation}
H(x_i \mid A) \leq h_b(P_{e,i}) + P_{e,i} \log_2(|V| - 1).
\end{equation}
Since $H(x_i \mid A) = H(x_i) - I(A; x_i)$ and $H(x_i) = \log_2|V|$ under uniform prior:
\begin{equation}
\log_2|V| - I(A; x_i) \leq 1 + P_{e,i} \log_2|V|.
\end{equation}
Rearranging:
\begin{equation}
\text{Acc}_i = 1 - P_{e,i} \leq \frac{I(A; x_i) + 1}{\log_2|V|}.
\end{equation}
Averaging over $n$ positions yields
\begin{equation}
\E[\text{Acc}_{\text{token}}] \leq \frac{1}{n} \sum_{i=1}^{n} \frac{I(A; x_i) + 1}{\log_2|V|}.
\end{equation}
\end{proof}

The sensitive-vs-common token corollary (Corollary~\ref{cor:sensitive}) is stated in the main text and follows immediately by restricting the average above to positions $i \in \cS$.

\section{Implementation Details}
\label{app:impl}

\subsection{System Constraints}
\label{app:constraints}

\begin{table}[h]
\centering
\footnotesize
\caption{System constraints on the client device.}
\label{tab:constraints}
\setlength{\tabcolsep}{4pt}
\begin{tabular}{@{}ll@{}}
\toprule
\textbf{Constraint} & \textbf{Description} \\
\midrule
Layer & $k \leq k_{\max}$ (max.\ transformer layers on device) \\
Parameter & $\sum_{i=1}^k |G_i| \leq B_{\text{params}}$ (adapter parameter budget) \\
Latency & $T_{\text{device}}(\{G_i\}) \leq T_{\max}$ (inference time overhead) \\
\bottomrule
\end{tabular}
\end{table}

\subsection{Hyperparameters}
\label{app:hyperparams}

Table~\ref{tab:hyperparams} summarizes all training hyperparameters used in our experiments. We fix $k{=}4$ adapter layers throughout (split after the 4th transformer layer), which provides a practical balance between on-device computation and privacy protection. The bottleneck dimension $r$ and privacy weight $\lambda$ are the primary knobs for controlling the privacy-utility tradeoff and are varied systematically in our experiments.

\begin{table}[h]
\centering
\footnotesize
\caption{Privacy adapter hyperparameters.}
\label{tab:hyperparams}
\setlength{\tabcolsep}{3pt}
\resizebox{.85\columnwidth}{!}{
\begin{tabular}{@{}ll ll@{}}
\toprule
\textbf{Hyperparameter} & \textbf{Value}
& \textbf{Hyperparameter} & \textbf{Value} \\
\midrule
Bottleneck dim.\ $r$ & $\{256, 512, 1024\}$
& Num.\ adapter layers $k$ & $4$ \\

Nonlinearity & GELU
& Initialization & Xavier uniform \\

Adapter LR $\eta_{\text{adapt}}$ & $10^{-4}$
& Adversary LR $\eta_{\text{adv}}$ & $10^{-3}$ \\

Adversary steps $S$ & $3$
& Gradient clipping & $1.0$ \\

Optimizer (adapters) & AdamW ($\beta_1{=}0.9$, wd${=}0.01$)
& Optimizer (adversary) & Adam ($\beta_1{=}0.9$) \\

Privacy weight $\lambda$ & $\{0.1, 0.3, 0.5, 0.9\}$
& Training epochs & $10$ \\

Batch size & $1$
& Max sequence length & $32$ \\
\bottomrule
\end{tabular}
}
\end{table}

\subsection{Training Algorithm}
Details of our algorithm are shown in Algorithm~\ref{alg:training}.
\label{app:algorithm}

\begin{algorithm}[h]
\caption{Privacy Adapter Training via Minimax Optimization}
\label{alg:training}
\begin{algorithmic}[1]
\REQUIRE Frozen LLM layers $\{T_1, \ldots, T_L\}$, split layer $k$, dataset $\mathcal{D}$, privacy weight $\lambda$, bottleneck dim.\ $r$, adversary steps $S$
\ENSURE Trained privacy adapters $\{G_1, \ldots, G_k\}$
\STATE Initialize adapters $G_i: \mathbb{R}^d \to \mathbb{R}^r \to \mathbb{R}^d$ (Xavier uniform)
\STATE Initialize adversary $q_\Psi$ as 4-layer MLP
\FOR{each epoch}
  \FOR{each batch $\mathbf{x} \sim \mathcal{D}$}
    \STATE \textit{// Forward through device with interleaved adapters}
    \STATE $\mathbf{h}^{(0)} \gets \text{Embed}(\mathbf{x})$
    \FOR{$i = 1, \ldots, k$}
      \STATE $\mathbf{h}^{(i)} \gets G_i(T_i(\mathbf{h}^{(i-1)}))$ \hfill $\triangleright$ No residual
    \ENDFOR
    \STATE $\mathbf{A} \gets \mathbf{h}^{(k)}$
    \STATE
    \STATE \textit{// Phase 1: Adversary update (tighten MI bound)}
    \FOR{$s = 1, \ldots, S$}
      \STATE $\mathcal{L}_{\text{adv}} \gets \text{CE}(q_\Psi(\texttt{stopgrad}(\mathbf{A})), \mathbf{x})$
      \STATE Update $\Psi$ to minimize $\mathcal{L}_{\text{adv}}$
    \ENDFOR
    \STATE
    \STATE \textit{// Phase 2: Adapter update (privacy-utility tradeoff)}
    \STATE Recompute $\mathbf{A}$ with gradient flow through $\{G_i\}$
    \STATE $\mathcal{L}_{\text{priv}} \gets -\text{CE}(q_\Psi(\mathbf{A}), \mathbf{x})$ \hfill $\triangleright$ Maximize CE
    \STATE $\mathbf{y} \gets T_L(\cdots T_{k+1}(\mathbf{A}))$
    \STATE $\mathcal{L}_{\text{util}} \gets \text{CE}(\mathbf{y}_{<n}, \mathbf{x}_{2:n})$
    \STATE $\mathcal{L}_{\text{total}} \gets \lambda \cdot \mathcal{L}_{\text{priv}} + (1 - \lambda) \cdot \mathcal{L}_{\text{util}}$
    \STATE Update $\{G_i\}$ to minimize $\mathcal{L}_{\text{total}}$
  \ENDFOR
\ENDFOR
\RETURN $\{G_1, \ldots, G_k\}$
\end{algorithmic}
\end{algorithm}

\subsection{Overhead Analysis}
\label{app:train}

\paragraph{Training complexity.}
Table~\ref{tab:complexity} reports complexity per training iteration, where $n$ is sequence length, $k$ device layers, $d$ hidden dim., $r$ adapter rank, $L$ total layers, and $N_{\text{iter}}$ adversary iterations.

\begin{table}[h]
\centering
\footnotesize
\caption{Computational complexity per training iteration.}
\label{tab:complexity}
\begin{tabular}{lcc}
\toprule
\textbf{Component} & \textbf{Forward} & \textbf{Backward} \\
\midrule
Frozen transformer layers & $O(nkd^2)$ & $O(0)$ \\
Privacy adapters & $O(nkdr)$ & $O(nkdr)$ \\
Adversary (PIA) & $O(N_{\text{iter}} \cdot nd)$ & $O(N_{\text{iter}} \cdot nd)$ \\
Cloud forward (frozen) & $O(n(L-k)d^2)$ & $O(0)$ \\
\bottomrule
\end{tabular}
\end{table}

The adapter overhead $O(nkdr)$ is substantially smaller than $O(nkd^2)$ since $r \ll d$—approximately $r/d \approx 0.8\%$ of transformer computation in practice. Gradients flow only through adapter parameters, reducing backward-pass memory by over 99\%.

\paragraph{Storage overhead.}
Adapter parameter count is $k \times 2 \times d \times r$. Table~\ref{tab:param_count} shows totals per model.

\begin{table}[h]
\centering
\small
\caption{Privacy adapter parameter count ($r{=}32$, $k{=}4$).}
\label{tab:param_count}
\setlength{\tabcolsep}{4pt}
\begin{tabular}{@{}lccc@{}}
\toprule
\textbf{Model} & $d$ & \textbf{Adapter params} & \textbf{\% of model} \\
\midrule
Mistral-7B   & 4096 & 1.05M & 0.015\% \\
LLaMA-2-7B   & 4096 & 1.05M & 0.015\% \\
LLaMA-2-13B  & 5120 & 1.31M & 0.010\% \\
\bottomrule
\end{tabular}
\end{table}

\section{Additional Experimental Results}
\label{app:additional_results}

\subsection{Per-Dataset Privacy-Utility Ablation}
\label{app:per_dataset_ablation}

This section provides per-dataset breakdowns of the privacy-utility ablation summarized in Section~\ref{subsec:sensitive}. For each dataset, we show three panels: (a) token accuracy with separate curves for overall, common, and sensitive tokens (the shaded \emph{privacy gap} is the region between common and sensitive recovery); (b) BLEU score; and (c) named entity recovery rate (NERR).

These views reveal the mechanism behind the cross-dataset tradeoff curves of Figure~\ref{fig:tradeoff_main}: as $\lambda$ increases, the sensitive-token curve drops while the common-token curve remains stable, producing the widening privacy gap predicted by Corollary~\ref{cor:sensitive}.

\begin{figure}[h]
    \centering
    \includegraphics[width=0.95\textwidth]{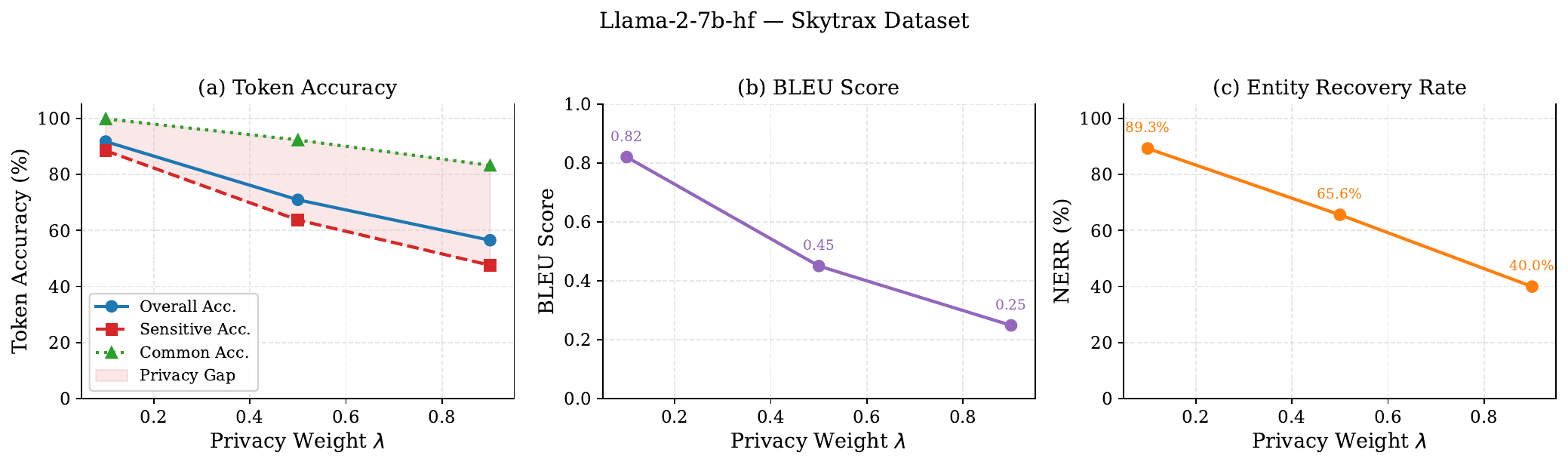}
    \caption{Skytrax dataset (LLaMA-2-7B, $k{=}4$, $r{=}512$). The privacy gap widens as $\lambda$ increases.}
    \label{fig:ablation_skytrax}
\end{figure}

\begin{figure}[h]
    \centering
    \includegraphics[width=0.95\textwidth]{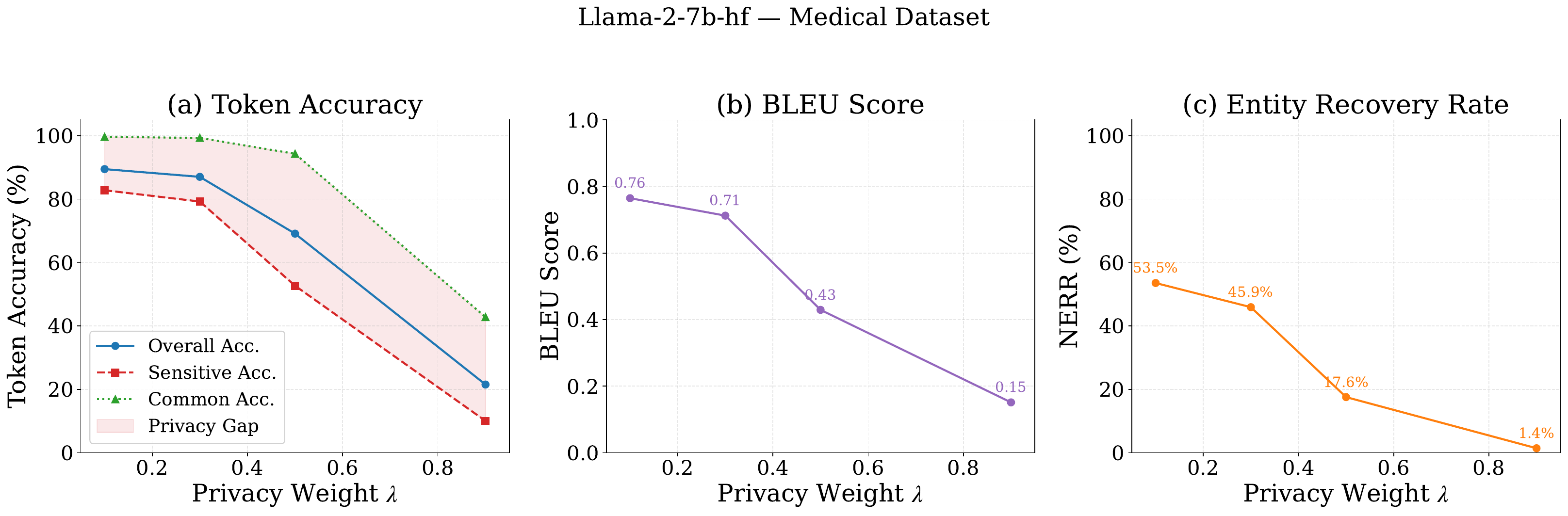}
    \caption{Medical dataset (LLaMA-2-7B, $k{=}4$, $r{=}512$). Sensitive token accuracy drops most aggressively, reflecting the high concentration of domain-specific terminology (medical conditions, drug names).}
    \label{fig:ablation_medical}
\end{figure}

\begin{figure}[h]
    \centering
    \includegraphics[width=0.95\textwidth]{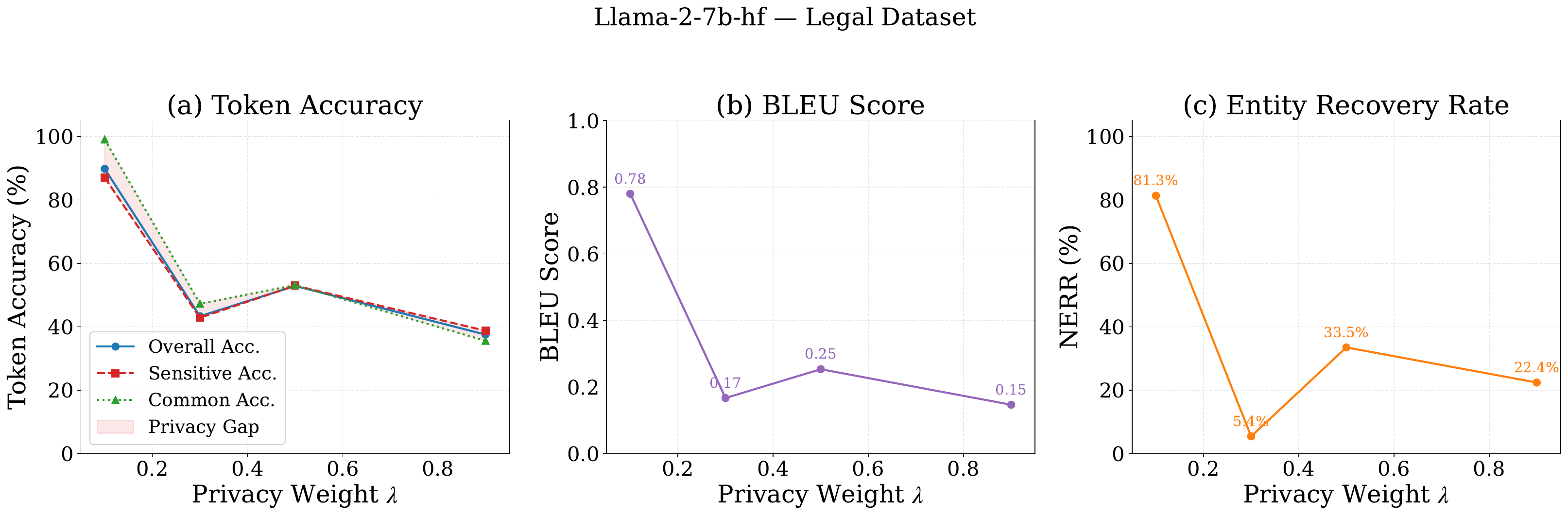}
    \caption{Legal dataset (LLaMA-2-7B, $k{=}4$, $r{=}512$). Token accuracy curves converge at intermediate $\lambda$, indicating more uniform entropy across token types in legal text.}
    \label{fig:ablation_legal}
\end{figure}

\subsection{Defense Baseline Comparison}
\label{app:baseline_comparison}

Table~\ref{tab:baseline_comparison} reports the full defense comparison referenced in Section~\ref{sec:defense_comparison}, evaluating noise (Gaussian, Laplacian), LDP-style training (NoPeek~\cite{vepakomma2020nopeek}), Fisher-based information suppression~\cite{guo2022bounding}, ShredMI random projection, and PCA against the same fresh classification attacker~\cite{luo2025prompt} on LLaMA-2-7B at split $k{=}4$. Our privacy adapters at $r{=}256$, $\lambda{=}0.5$ achieve the lowest attack accuracy among all defenses with usable perplexity on every dataset.

\begin{table}[h]
\centering
\footnotesize
\caption{Defense comparison against the fresh classification attacker~\cite{luo2025prompt} (LLaMA-2-7B, $k{=}4$). Attack accuracy (\%, $\downarrow$) and perplexity ($\downarrow$) across three datasets. \textbf{Ours} achieves the lowest attack accuracy among all methods with usable perplexity (PPL $<$1000) on all three domains.}
\label{tab:baseline_comparison}
\setlength{\tabcolsep}{5pt}
\begin{tabular}{l|cc|cc|cc}
\toprule
& \multicolumn{2}{c|}{\textbf{Skytrax}} & \multicolumn{2}{c|}{\textbf{Medical}} & \multicolumn{2}{c}{\textbf{Legal}} \\
\cmidrule(lr){2-3} \cmidrule(lr){4-5} \cmidrule(lr){6-7}
\textbf{Defense} & Atk & PPL & Atk & PPL & Atk & PPL \\
\midrule
No Defense                                  & 88.4 & 32.9    & 85.1 & 17.3    & 94.7 & 10.0 \\
\midrule
Gaussian $\sigma{=}0.1$                     & 76.4 & 46.0    & 68.7 & 21.4    & 89.4 & 12.3 \\
Gaussian $\sigma{=}0.5$                     & 17.2 & 2537    & 26.7 & 4222    & 56.2 & 903  \\
Laplace $b{=}0.1$                           & 64.9 & 75.2    & 61.1 & 31.5    & 85.8 & 17.4 \\
Laplace $b{=}0.5$                           & 10.9 & 7580    & 19.9 & 16316   & 43.9 & 2636 \\
\midrule
NoPeek                                      & 81.7 & 258.6   & 75.5 & 35.5    & 92.5 & 67.4 \\
Fisher                                      & 87.9 & 3766    & 84.4 & 10134   & 94.7 & 290.6 \\
ShredMI ($r{=}256$)                         & \phantom{0}8.2 & 4091 & 12.6 & 5213 & 17.9 & 1452 \\
PCA ($k{=}256$)                             & 87.2 & 46.5    & 82.6 & 26.5    & 94.6 & 12.1 \\
PCA ($k{=}512$)                             & 88.0 & 35.7    & 84.2 & 20.5    & 94.6 & 10.7 \\
\midrule
\textbf{Ours} ($r{=}256$, $\lambda{=}0.5$)  & \textbf{50.8} & \textbf{90.1} & \textbf{51.0} & \textbf{60.1} & \textbf{51.2} & \textbf{11.9} \\
\bottomrule
\end{tabular}
\end{table}

\subsection{Inference Latency Breakdown}
\label{app:latency}

Table~\ref{tab:latency} reports per-model end-to-end latency on a Jetson Orin Nano client (representing a realistic resource-constrained edge device) with an RTX~4090 cloud server, 64-token input, and split layer $k{=}4$. Privacy adapter overhead ranges from 5.6\% (LLaMA-2-13B) to 8.4\% (LLaMA-2-7B), well within practical limits for edge deployment. Notably, overhead does not grow with model size because the additional computation introduced by the adapters is fixed by the bottleneck dimension $r$ rather than by the underlying transformer width.

\begin{table}[h]
\centering
\caption{Inference latency on Jetson Orin Nano (client) + RTX 4090 (cloud), $k{=}4$, 64-token input.}
\label{tab:latency}
\footnotesize
\setlength{\tabcolsep}{4pt}
\begin{tabular}{lccc}
\toprule
\textbf{Model} & \textbf{Defense} & \textbf{Latency (ms)} & \textbf{Overhead} \\
\midrule
\multirow{2}{*}{Mistral-7B}
 & None      & 54.3 & -- \\
 & Adapters  & 58.8 & +8.25\% \\
\midrule
\multirow{2}{*}{LLaMA-2-7B}
 & None      & 49.9 & -- \\
 & Adapters  & 54.1 & +8.42\% \\
\midrule
\multirow{2}{*}{LLaMA-2-13B}
 & None      & 77.0 & -- \\
 & Adapters  & 81.4 & +5.65\% \\
\bottomrule
\end{tabular}
\end{table}

\subsection{Comparison with Noise-Based Defenses}
\label{app:noise_baselines}

We do not include comparisons to learned representation-learning baselines—such as adversarial representation learning~\citep{zhao2020trade}, information bottleneck variants~\citep{jaiswal2020invariant}, Inf2Guard-style MI frameworks~\citep{noorbakhsh2024inf2guard}, dimensionality reduction with retraining~\citep{dong2025depth}, or inversion-aware training—for three reasons:

\begin{enumerate}[itemsep=0pt,topsep=2pt]
\item Most existing learned defenses are designed for supervised classification and require training a full encoder jointly with the downstream model. Applying these to collaborative LLM inference would require retraining large portions of the transformer stack on the client side, which is infeasible. Our method operates under the realistic constraint of frozen pre-trained layers.
\item MI-based frameworks like Inf2Guard~\citep{noorbakhsh2024inf2guard} target dataset-level privacy (membership/property inference) using global MI estimates over training distributions. Extending these to token-level prompt inversion in autoregressive LLMs requires nontrivial redesign and is beyond our scope.
\item Dimensionality reduction or inversion-aware retraining incurs high computational and tuning costs, making fair edge-cloud comparison infeasible.
\end{enumerate}

We instead compare against noise-based defenses, which (i) are widely used, (ii) apply at inference without retraining, and (iii) operate under the same system constraints. Tables~\ref{tab:dp_skytrax}--\ref{tab:dp_legal} show that strong privacy via noise perturbation requires extreme noise budgets that severely degrade utility (perplexity exploding into the thousands), while our adapters achieve substantial attack-accuracy reductions with minimal utility impact.

\begin{table}[h]
\centering
\footnotesize
\caption{Noise perturbation baseline (LLaMA-2-7B, Skytrax). Attack Accuracy (\%) | Perplexity (Mean $\pm$ Std).}
\label{tab:dp_skytrax}
\setlength{\tabcolsep}{4pt}
\begin{tabular}{llcc}
\toprule
\textbf{Noise} & $\sigma$ & \textbf{Attack Acc.\ (\%)} & \textbf{Perplexity} \\
\midrule
Gaussian  & 0.000 & 84.00 $\pm$ 6.08  & 17.60 $\pm$ 0.00 \\
Gaussian  & 0.010 & 84.00 $\pm$ 6.08  & 17.65 $\pm$ 7.07 \\
Gaussian  & 0.050 & 84.10 $\pm$ 5.40  & 17.82 $\pm$ 7.15 \\
Gaussian  & 0.100 & 83.46 $\pm$ 6.20  & 18.92 $\pm$ 9.87 \\
Gaussian  & 0.200 & 79.30 $\pm$ 7.00  & 75.22 $\pm$ 73.86 \\
Gaussian  & 0.500 & 31.47 $\pm$ 11.20 & 2957.91 $\pm$ 2056.22 \\
Gaussian  & 1.000 & 1.96 $\pm$ 3.12   & 6961.16 $\pm$ 1941.02 \\
\midrule
Laplacian & 0.000 & 83.79 $\pm$ 6.23  & 17.60 $\pm$ 0.00 \\
Laplacian & 0.010 & 84.01 $\pm$ 6.10  & 17.62 $\pm$ 7.05 \\
Laplacian & 0.050 & 83.91 $\pm$ 5.43  & 17.59 $\pm$ 6.74 \\
Laplacian & 0.100 & 82.38 $\pm$ 6.55  & 25.48 $\pm$ 12.91 \\
Laplacian & 0.200 & 69.12 $\pm$ 8.24  & 298.52 $\pm$ 171.01 \\
Laplacian & 0.500 & 12.10 $\pm$ 7.56  & 4619.11 $\pm$ 2117.78 \\
Laplacian & 1.000 & 0.56 $\pm$ 1.90   & 7767.54 $\pm$ 1018.97 \\
\bottomrule
\end{tabular}
\end{table}

\begin{table}[h]
\centering
\footnotesize
\setlength{\tabcolsep}{3pt}
\begin{minipage}[t]{0.48\columnwidth}
\centering
\caption{Medical (LLaMA-2-7B).}
\label{tab:dp_medical}
\begin{tabular}{lccc}
\toprule
Noise & $\sigma$ & Atk.\% & PPL \\
\midrule
G & 0.00 & 91.04$\pm$4.53 & 25.62$\pm$0.00 \\
G & 0.01 & 91.04$\pm$4.53 & 25.62$\pm$23.45 \\
G & 0.05 & 91.18$\pm$4.14 & 25.74$\pm$23.31 \\
G & 0.10 & 90.72$\pm$5.15 & 29.60$\pm$26.45 \\
G & 0.20 & 88.07$\pm$5.36 & 116.73$\pm$147.88 \\
G & 0.50 & 28.42$\pm$9.56 & 3900.75$\pm$2350.51 \\
G & 1.00 & 0.90$\pm$2.30 & 5931.51$\pm$2087.01 \\
\midrule
L & 0.00 & 91.16$\pm$4.54 & 25.60$\pm$0.00 \\
L & 0.01 & 91.04$\pm$4.53 & 25.68$\pm$23.66 \\
L & 0.05 & 91.14$\pm$4.53 & 28.16$\pm$29.00 \\
L & 0.10 & 90.12$\pm$5.72 & 40.81$\pm$45.01 \\
L & 0.20 & 72.72$\pm$8.81 & 655.01$\pm$1282.65 \\
L & 0.50 & 8.93$\pm$7.29 & 5135.96$\pm$2187.86 \\
L & 1.00 & 0.69$\pm$1.93 & 7788.44$\pm$959.55 \\
\bottomrule
\end{tabular}
\end{minipage}
\hfill
\begin{minipage}[t]{0.48\columnwidth}
\centering
\caption{Legal (LLaMA-2-7B).}
\label{tab:dp_legal}
\begin{tabular}{lccc}
\toprule
Noise & $\sigma$ & Atk.\% & PPL \\
\midrule
G & 0.00 & 83.88$\pm$6.30 & 17.60$\pm$0.00 \\
G & 0.01 & 84.00$\pm$6.08 & 17.67$\pm$7.06 \\
G & 0.05 & 84.15$\pm$5.52 & 17.64$\pm$6.97 \\
G & 0.10 & 83.17$\pm$5.64 & 18.83$\pm$8.06 \\
G & 0.20 & 80.20$\pm$7.15 & 68.35$\pm$48.28 \\
G & 0.50 & 32.38$\pm$11.42 & 3040.70$\pm$2275.11 \\
G & 1.00 & 1.62$\pm$2.85 & 6237.46$\pm$2327.85 \\
\midrule
L & 0.00 & 84.03$\pm$5.94 & 17.60$\pm$0.00 \\
L & 0.01 & 83.67$\pm$5.85 & 17.62$\pm$7.03 \\
L & 0.05 & 83.79$\pm$5.59 & 18.09$\pm$7.67 \\
L & 0.10 & 83.20$\pm$6.27 & 25.27$\pm$13.04 \\
L & 0.20 & 69.48$\pm$9.35 & 338.40$\pm$348.45 \\
L & 0.50 & 9.04$\pm$7.25 & 4666.04$\pm$1860.14 \\
L & 1.00 & 0.96$\pm$2.22 & 7330.75$\pm$2037.73 \\
\bottomrule
\end{tabular}
\end{minipage}
\end{table}

\subsection{Additional Qualitative Reconstruction Examples}
\label{app:qualitative_examples}

This section provides additional reconstruction examples across datasets and privacy strengths, complementing the Skytrax examples ($\lambda{=}0.5$) shown in Section~\ref{subsec:qualitative}.

\paragraph{Medical (moderate privacy).}
Table~\ref{tab:qualitative_medical} shows reconstructions for the Medical dataset at $\lambda{=}0.3$, $r{=}512$. The defense preserves question structure (``What information do we have for \dots'', ``What does \dots mean?'') while scrambling medical conditions, drug names, and patient descriptors.

\begin{table}[h]
\centering
\footnotesize
\caption{Reconstruction examples (LLaMA-2-7B, Medical, $\lambda{=}0.3$, $r{=}512$). Sensitive tokens annotated as \textcolor{teal}{[PROTECTED]} or \textcolor{red}{[RECOVERED]}.}
\label{tab:qualitative_medical}
\setlength{\tabcolsep}{3pt}
\begin{tabular}{p{0.95\columnwidth}}
\toprule
\textbf{Example 1} (overall: 75.0\%, sensitive: 50.0\%) \\
\textbf{Original:} What information do we have for patients with Gangrene? \\
\textbf{Recovered:} What information do you have for individuals with gangrene? \\
\textbf{Sensitive:} ``patients'' $\to$ ``individuals'' \textcolor{teal}{[PROT.]}, ``Gangrene'' $\to$ ``gangrene'' \textcolor{teal}{[PROT.]} \\
\midrule
\textbf{Example 2} (overall: 72.7\%, sensitive: 57.1\%) \\
\textbf{Original:} What does SSRI sexual dysfunction mean? \\
\textbf{Recovered:} What doesTLRI other dys context mean? \\
\textbf{Sensitive:} ``SSRI'' $\to$ ``TLRI'' \textcolor{teal}{[PROT.]}, ``sexual'' $\to$ ``other'' \textcolor{teal}{[PROT.]}, ``function'' $\to$ ``context'' \textcolor{teal}{[PROT.]} \\
\midrule
\textbf{Example 3} (overall: 90.0\%, sensitive: 75.0\%) \\
\textbf{Original:} What information about prostatitis is available? \\
\textbf{Recovered:} What information about proingitis is available? \\
\textbf{Sensitive:} ``prostat'' $\to$ ``proing'' \textcolor{teal}{[PROT.]}, ``itis'' $\to$ ``itis'' \textcolor{red}{[REC.]} \\
\bottomrule
\end{tabular}
\end{table}

\paragraph{Legal (moderate privacy).}
Table~\ref{tab:qualitative_legal} shows Legal-dataset reconstructions at $\lambda{=}0.5$, $r{=}256$. The defense protects birth years, city names, and personal details while preserving legal document structure (``The applicant was born in \dots'').

\begin{table}[h]
\centering
\footnotesize
\caption{Reconstruction examples (LLaMA-2-7B, Legal, $\lambda{=}0.5$, $r{=}256$). Mean acc.\ 51.3\%, PPL 12.0.}
\label{tab:qualitative_legal}
\setlength{\tabcolsep}{3pt}
\begin{tabular}{p{0.95\columnwidth}}
\toprule
\textbf{Example 1} (overall: 59.4\%) \\
\textbf{Original:} The applicant was born in 1931 and lives in Milan. He is the owner of a flat \\
\textbf{Recovered:} The applic in1 born in 1 and11 and lives in was The .. , of the lives of the lives \\
\textbf{Sensitive:} ``applicant'' $\to$ ``applic in1'' \textcolor{teal}{[PROT.]}, ``1931'' $\to$ ``1 and11'' \textcolor{teal}{[PROT.]}, ``Milan'' $\to$ ``was'' \textcolor{teal}{[PROT.]}, ``He'' $\to$ ``The'' \textcolor{teal}{[PROT.]}, ``owner'' $\to$ ``lives'' \textcolor{teal}{[PROT.]}, ``flat'' $\to$ ``lives'' \textcolor{teal}{[PROT.]} \\
\midrule
\textbf{Example 2} (overall: 53.1\%) \\
\textbf{Original:} The applicant was born in 1946 and lives in Rome. She is the owner of a flat \\
\textbf{Recovered:} The applic in1 born in 1 and11 in lives in On The applic. ants of the lives of the lives \\
\textbf{Sensitive:} ``1946'' $\to$ ``1 and11'' \textcolor{teal}{[PROT.]}, ``Rome'' $\to$ ``On'' \textcolor{teal}{[PROT.]}, ``She'' $\to$ ``The'' \textcolor{teal}{[PROT.]} \\
\midrule
\textbf{Example 3} (overall: 40.6\%) \\
\textbf{Original:} The applicants are Irish citizens, born in 1958 and 1957, respectively, and they both live \\
\textbf{Recovered:} The applicants antsants born in 1 and129 1 was12 and1 was in.1 lives \\
\textbf{Sensitive:} ``Irish'' $\to$ scrambled \textcolor{teal}{[PROT.]}, ``citizens'' $\to$ scrambled \textcolor{teal}{[PROT.]}, ``1958'' $\to$ ``1 and129'' \textcolor{teal}{[PROT.]}, ``1957'' $\to$ ``was12'' \textcolor{teal}{[PROT.]} \\
\bottomrule
\end{tabular}
\end{table}

\paragraph{Skytrax (strong privacy).}
At higher privacy weight ($\lambda{=}0.9$), Table~\ref{tab:qualitative_skytrax_strong} shows nearly all identifiable content is scrambled beyond recognition. Utility decreases substantially at this operating point, but the defense provides near-complete obfuscation suitable for highly sensitive deployments.

\begin{table}[h]
\centering
\footnotesize
\caption{Reconstruction examples under strong defense (LLaMA-2-7B, Skytrax, $\lambda{=}0.9$, $r{=}1024$).}
\label{tab:qualitative_skytrax_strong}
\setlength{\tabcolsep}{3pt}
\begin{tabular}{p{0.95\columnwidth}}
\toprule
\textbf{Example 1} (overall: 28.1\%) \\
\textbf{Original:} Noumea to Sydney found Air Austral staff friendly and good customer service. My only issue was the size of the seats in economy - a tight fit \\
\textbf{Recovered:} L Fle Ath out toens -2ATH crew friendly and good I time. a of flight was the flight of theights to return was and foodATH \\
\textbf{Sensitive:} ``Noumea'' $\to$ ``L'' \textcolor{teal}{[PROT.]}, ``Sydney'' $\to$ ``Ath'' \textcolor{teal}{[PROT.]}, ``Austral'' $\to$ ``2ATH'' \textcolor{teal}{[PROT.]}, ``staff'' $\to$ ``crew'' \textcolor{teal}{[PROT.]}, ``customer'' $\to$ ``I'' \textcolor{teal}{[PROT.]}, ``service'' $\to$ ``time'' \textcolor{teal}{[PROT.]}, ``seats'' $\to$ ``theights'' \textcolor{teal}{[PROT.]}, ``economy'' $\to$ ``return'' \textcolor{teal}{[PROT.]} \\
\midrule
\textbf{Example 2} (overall: 25.0\%) \\
\textbf{Original:} Only flew Air Austral for medium haul (RUN-SEY) but was excellent! Good service much space good food and very friendly staff! \\
\textbf{Recovered:} with Flew2A with flight entertain7--ens- outens flight I was good. comfortable time good comfortable good food and were friendly staff. \\
\textbf{Sensitive:} ``Air'' $\to$ ``2A'' \textcolor{teal}{[PROT.]}, ``Austral'' $\to$ ``flight'' \textcolor{teal}{[PROT.]}, ``medium'' $\to$ ``entertain'' \textcolor{teal}{[PROT.]}, ``RUN-SEY'' $\to$ ``--ens-outens'' \textcolor{teal}{[PROT.]}, ``excellent'' $\to$ ``good'' \textcolor{teal}{[PROT.]} \\
\midrule
\textbf{Example 3} (overall: 34.4\%) \\
\textbf{Original:} I travel very frequently on Air Astana from Delhi-Almaty. Check-in staff at Delhi and Almaty are very efficient. \\
\textbf{Recovered:} I3 was the on2 entertain out from entertainens-Al fleATH. G-in crew at Shens and Al airATH were were flight. \\
\textbf{Sensitive:} ``Astana'' $\to$ ``out'' \textcolor{teal}{[PROT.]}, ``Delhi'' $\to$ ``entertainens'' \textcolor{teal}{[PROT.]}, ``Almaty'' $\to$ ``fleATH'' \textcolor{teal}{[PROT.]}, ``Check'' $\to$ ``G'' \textcolor{teal}{[PROT.]}, ``staff'' $\to$ ``crew'' \textcolor{teal}{[PROT.]}, ``efficient'' $\to$ ``flight'' \textcolor{teal}{[PROT.]} \\
\bottomrule
\end{tabular}
\end{table}

\paragraph{Summary.}
Together, these examples confirm Corollary~\ref{cor:sensitive}: as $\lambda$ increases, the defense progressively scrambles more sensitive content while structural tokens (articles, conjunctions, common verbs) survive longest. This matches the bottleneck capacity argument—high-entropy sensitive tokens require more bits to distinguish than low-entropy common tokens, so the bottleneck preferentially discards them.

\section{Extended Related Work}
\label{app:related_work}

Recent advances in AI have expanded IoT capabilities, but compute-intensive ML workloads are infeasible to execute locally on resource-constrained devices, motivating reliance on cloud servers and raising privacy concerns about transmitted data.

A range of privacy-preserving machine learning (PPML) techniques has been proposed: homomorphic encryption (HE)~\citep{juvekar2018gazelle,brutzkus2019low,gilad2016cryptonets,pang2024bolt,santriaji2025dataseal}, secure multi-party computation (MPC)~\citep{mohassel2017secureml,liu2017oblivious,huang2022cheetah,hao2022iron,xu2025breaking}, differential privacy (DP)~\citep{abadi2016deep,mai2023split}, and hardware-assisted approaches~\citep{lee2019occlumency,tramer2018slalom,liu2021secdeep,hunt2020telekine,zhang2024no,sun2025tensorshield}. An ideal PPML solution provides strong privacy under a realistic threat model while preserving accuracy, low latency, and modest computational complexity.

Existing PPML methods cannot simultaneously satisfy all of these requirements. Many incur significant computational or communication overhead due to algorithmic or hardware constraints, limiting applicability to latency-sensitive applications such as smart homes, video surveillance, and collaborative robotic systems.

An emerging alternative is encoding-based PPML, particularly adversarial representation learning (ARL)~\citep{xie2017controllable,gupta2021adversarial,gupta2021controllable,jaiswal2020invariant,osia2020hybrid,singh2021disco,roy2019mitigating,singh2023posthoc,zhang2025learning}. ARL introduces a client-side encoding stage that removes sensitive information before transmission, enabling efficient cloud-side inference on sanitized representations.

However, providing strong privacy with ARL requires increasingly expressive encoders as network depth grows, which limits practicality on resource-constrained devices. Prior ARL work targets shallow models for downstream \emph{classification}, often training large client-side encoders—impractical for modern LLMs and edge devices. Our approach instead uses lightweight, low-rank privacy adapters that keep all pre-trained transformer layers frozen, achieving stronger privacy at lower latency while remaining compatible with large-scale LLMs.

\section{Broader Impacts}
This work advances the security and privacy of collaborative large language
model (LLM) inference by introducing a principled, efficient defense against
prompt inversion attacks. By enabling lightweight, privacy-preserving
inference on resource-constrained edge devices, our approach can help
protect sensitive user inputs in real-world applications such as smart
homes, healthcare systems, enterprise retrieval-augmented generation, and
other edge--cloud AI deployments. The proposed privacy adapters offer a
practical alternative to heavyweight cryptographic techniques, potentially
lowering the barrier to deploying privacy-aware AI systems at scale.

At the same time, as with any privacy-enhancing technology, there is a risk
that improved confidentiality could be misused to obscure malicious or
harmful activities. We emphasize that our method is designed to protect
benign user data under clearly defined threat models and system constraints,
and does not prevent lawful access or auditing at higher system layers. We
encourage future work to explore complementary mechanisms for transparency,
accountability, and misuse detection alongside privacy-preserving inference
techniques.

Overall, we believe that this work contributes positively to the responsible
deployment of large-scale AI systems by strengthening user privacy while
maintaining practical efficiency and performance.

\end{document}